\numberwithin{equation}{section}
\newtheorem{thm}{Theorem}[section]
\newtheorem{prop}[thm]{Proposition}
\newtheorem{lem}[thm]{Lemma}
\newtheorem{cor}[thm]{Corollary}
\newcommand{\bib}{\bibitem}
\newcommand{\TT}{|T|}
\newcommand{\dm}{\mathsf{D}}
\newcommand{\km}{\mathsf{K}}
\newcommand{\om}{\mathsf{O}}
\newcommand{\tm}{\mathsf{T}}
\newcommand{\upto}{\nearrow}
\newcommand{\downto}{\searrow}
\newcommand{\bb}{\begin{equation}}
\newcommand{\ee}{\end{equation}}
\newcommand{\half}{\frac{1}{2}}
\theoremstyle{definition}
\theoremstyle{remark}
\def\IB{\relax\hbox{$\inbar\kern-.3em{\rm B}$}}
\def\IC{\relax\hbox{$\inbar\kern-.3em{\rm C}$}}
\def\ID{\relax\hbox{$\inbar\kern-.3em{\rm D}$}}
\def\IE{\relax\hbox{$\inbar\kern-.3em{\rm E}$}}
\def\IF{\relax\hbox{$\inbar\kern-.3em{\rm F}$}}
\def\IG{\relax\hbox{$\inbar\kern-.3em{\rm G}$}}
\def\IGa{\relax\hbox{${\rm I}\kern-.18em\Gamma$}}
\def\IH{\relax{\rm I\kern-.18em H}}
\def\IK{\relax{\rm I\kern-.18em K}}
\def\IL{\relax{\rm I\kern-.18em L}}
\def\IP{\relax{\rm I\kern-.18em P}}
\def\IR{\relax{\rm I\kern-.18em R}}
\def\IZ{\relax{\rm Z\kern-.5em Z}}
\definecolor{dblue}{rgb}{.61,.61,1}
\definecolor{lightblue}{rgb}{.61,.61,1}
\definecolor{altblue}{rgb}{.61,.61,1}
\tikzset{cross/.style={cross out, draw=black, minimum size=2*(#1-\pgflinewidth), inner sep=0pt, outer sep=0pt},
	cross/.default={1pt}}
\newcommand{\Gexample}{
}
\begin{document}
	
\tikzset{
	position label/.style={
		below = 3pt,
		text height = 1.5ex,
		text depth = 1ex
	},
	brace/.style={
		decoration={brace, mirror},
		decorate
	}
}

\thispagestyle{empty}
\quad

\vspace{2cm}
\begin{center}

\textbf{\huge Critical behaviour of loop models}\\
\vspace{0.25cm}
\textbf{{\huge on causal triangulations}}
\vspace{1cm}

{\Large Bergfinnur Durhuus$^{a}$ ~~~~ Xavier Poncini$^{b}$}

\vspace{.2cm}

{\Large J{\o}rgen Rasmussen$^{b}$ ~~~~ Meltem Ünel$^{a}$}

\vspace{0.5cm}

$^a$ {\it Department of Mathematical Sciences, Copenhagen University\\
Universitetsparken 5, DK-2100 Copenhagen {\O}, Denmark}

$^b$ {\it School of Mathematics and Physics, University of Queensland\\ St Lucia, Brisbane, Queensland 4072, Australia}

\vspace{0.5cm} {\sf durhuus@math.ku.dk\quad\ x.poncini@uq.edu.au\quad\ j.rasmussen@uq.edu.au\quad\
 meltem@math.ku.dk}

\vspace{1.4cm}

{\large\textbf{Abstract}}\end{center}

We introduce a dense and a dilute loop model on causal dynamical triangulations. Both models are characterised by a geometric coupling constant $g$ and a loop parameter $\alpha$ in such a way that the purely geometric causal triangulation model is recovered for $\alpha=1$. We show that the dense loop model can be mapped to a solvable planar tree model, whose partition function we compute explicitly and use to determine the critical behaviour of the loop model. The dilute loop model can likewise be mapped to a planar tree model; however, a closed-form expression for the corresponding partition function is not obtainable using the standard methods employed in the dense case. Instead, we derive bounds on the critical coupling $g_c$ and apply transfer matrix techniques to examine the critical behaviour for $\alpha$ small.

\newpage

\tableofcontents

\section{Introduction}\label{sec:1}

If a two-dimensional statistical mechanical model with a second-order phase transition is coupled to a random background, its critical 
exponents may change and there may be a back-reaction on the background geometry changing its Hausdorff dimension. A prominent 
example of this phenomenon is the Ising model on a random two-dimensional triangulation (or quadrangulation), as demonstrated 
in \cite{kazakov1986ising}. Other examples are dimer models \cite{staudacher1990yang}, Potts models \cite{daul1995q,zinn2000dilute},
and multicritical models \cite{kostov1992multicritical}, see also \cite{ambjorn1997quantum} for an overview. 
The relation between the critical exponents, or scaling dimensions, of a matter field on a flat background and on a random curved  
background is given quite generally by the KPZ-formula of Liouville quantum gravity \cite{knizhnik1988fractal}. A conjectured formula for 
the Hausdorff dimension of the background geometry as a function of the central charge of the matter fields can be found 
in \cite{watabiki1993fractal}, although recent mathematical results in Liouville quantum gravity \cite{goswami, ding2020fractal} imply restrictions on the possible range of validity of this formula.

It is natural to ask how universal this so-called dressing of critical exponents is with respect to the ensembles of background geometries 
considered. In particular, it is natural to compare the ensemble of unrestricted dynamical triangulations (DT) 
(see e.g.~\cite{ambjorn1997quantum}) considered in the references above, with the ensemble of causal dynamical 
triangulations (CDT) \cite{ambjorn1998non}. 
Without coupling to a matter system, these ensembles, which we shall call \emph{pure DT} and \emph{pure CDT} in the following, 
exhibit different critical behaviours, the former having Hausdorff dimension 4 \cite{ambjorn1995scaling, chassaing2004random}, while the 
latter has Hausdorff dimension 2 \cite{durhuus2010spectral}. Very few analytical results are available concerning matter systems coupled 
to CDT. A number of numerical studies have been carried out, notably for Ising type and Potts type models
\cite{ambjorn1999numI, ambjorn2009numP, ambjorn2011second}, but there is no clear indication of a change in the
critical exponents. In \cite{atkin2012analytical,ambjorn2014restricted}, a class of restricted dimer models are mapped to
certain labelled tree models. Using this, the corresponding Hausdorff dimension is found to be affected by the dimer system, 
although the underlying mechanism remains unclear. 
In the work \cite{di2000integrable}, a class of CDT models 
with curvature-dependent weights are found to exhibit the same scaling behaviour as pure CDT.

In statistical mechanics, one usually works with local degrees of freedom, such as spins or heights, as in the Ising and Potts models above.
However, percolation and polymer systems, for example, require that one keeps track of connectivities or some other
inherently nonlocal degrees of freedom, and this shift in paradigm has a profound effect on the physical properties of the models.
Critical fully-packed loop models on regular square lattices have thus been found to give rise to {\em logarithmic} conformal field theories in 
the continuum scaling limit \cite{PRZ06}. Using underlying Temperley-Lieb algebraic structures \cite{TL,Jones}, these loop models
are found to be Yang-Baxter integrable and amenable to exact solutions. One of these models describes critical
dense polymers and has been solved exactly on the strip \cite{PRpol07}, the cylinder \cite{PRVpol10} and the torus \cite{MDPRpol13},
confirming predictions about scaling dimensions made in \cite{Duplantier86,Saleur87,SaleurDuplantier87}.
Other types of loop models have also been constructed, including dilute loop models associated with the $\mathcal{O}(n)$ 
models \cite{BN89,Nienhuis90} where the configurations may contain spaces of variable sizes in between the loop segments.
Loop models have also been coupled to random surfaces \cite{kostov1992multicritical,EK95}, including random triangulations \cite{DiFGK99,BBG12}.
However, to the best of our knowledge, loop models have yet to be coupled to CDT.

In this paper, we introduce and study two models of loop configurations on two-dimensional causal dynamical triangulations: a {\em dense loop model} and a {\em dilute loop model}, 
reminiscent of the familiar fully-packed and dilute loop models, respectively.
Both models are characterised by a geometric coupling constant $g$ associated with the underlying triangulations, as in pure CDT,
and a loop parameter $\alpha$ that encodes the relative weights of the admissible loop configurations on individual elementary triangles. No weight is associated with the number of loops in the models considered here, effectively setting the corresponding loop fugacities to $1$. 
We show that the known correspondence between pure CDT and planar trees
\cite{malyshev2001two,durhuus2010spectral} 
extends to each of the loop models and a corresponding class of labelled trees. This implies simple
relations between the partition functions of the loop models and those of the associated labelled tree models.
In the case of the dense loop model, we solve the corresponding tree model exactly
and find that its Hausdorff dimension equals that of pure CDT. The critical behaviour of the loop model is readily extracted
from the closed-form expression we obtain for the partition function following our tree analysis.
Although the dilute model can likewise be mapped to a planar tree model, a closed-form expression for the corresponding 
partition function is not obtainable using the techniques employed in the dense case. 
Instead, we apply a transfer matrix formalism and use analyticity arguments to examine the critical behaviour of the loop model for $\alpha$ close to $0$.
We conclude that the critical 
behaviour for $\alpha$ small is different from that of pure CDT, and provide an explanation for this difference. Based on results in \cite{UnelInPrep1}, we argue that the Hausdorff dimension equals 1 in this phase. 
While it is possible that these results hold for more general values of $\alpha$, our analysis has not been able to confirm this.
In fact, it is consistent with our findings that there exists a transition point $\alpha_0\in(0,1)$ at which 
the scaling behaviour changes.

This paper is organised as follows. 
In Section \ref{sec:2}, we introduce the dense and dilute loop models and define the associated partition functions.
In Section \ref{sec:3}, for each of the loop models, we establish a correspondence between the set of allowed loop configurations and 
a set of labelled planar trees. 
The critical properties of the labelled tree models are investigated in Section \ref{sec:4}.
In Section \ref{sec:5}, we use the tree results to obtain an explicit expression for the dense loop model partition function from
which we determine the critical behaviour of the model.
In our analysis of the dilute loop model, we supplement the relevant tree results with insight obtained using
a transfer matrix formalism to establish a relation between the critical coupling constant and the loop parameter $\alpha$. 
For $\alpha$ small, we determine the critical behaviour of the partition function and 
identify two possible scenarios for larger values of $\alpha$. 
Finally, Section \ref{sec:6} contains some concluding remarks, including a brief discussion of possible future research directions.

\section{Loop models}
\label{sec:2}

Before defining the loop models in Sections \ref{subsec:DLM} and \ref{subsec:DiLM}, 
we recall properties of causal dynamical triangulations in Section \ref{subsec:CT}.
There and in subsequent sections, we refer to two-dimensional causal dynamical triangulations simply as {\em causal triangulations}. We also adopt the convention that $\mathbb{N}$ and $\mathbb{N}_0$ denote the set of positive integers and non-negative integers, respectively.

\subsection{Causal triangulations}\label{subsec:CT}

A causal triangulation of a disk is defined by a central vertex $x$ and a sequence of cycles (circular graphs)
$S_0\equiv\{x\},S_1,\dots, S_m$, where $m$ is the {\em height}, such that, for each $k=0,1,\ldots,m-1$, 
the annulus $A_k$, bounded by $S_k$ and $S_{k+1}$, is 
triangulated. Note that $A_0$ is merely the disk with boundary $S_1$ and vertex $x$ at its centre.
As illustrated in Figure \ref{figure: pure}, each \emph{elementary triangle} within $A_k$ is either \emph{forward-directed}, 
i.e.~has one edge in $S_k$ and the opposite vertex in $S_{k+1}$, or \emph{backward-directed}, i.e.~has one edge in $S_{k+1}$ and the 
opposite vertex in $S_k$.
Edges within a cycle are called \emph{space-like} and are coloured red, 
while edges between neighbouring cycles are called \emph{time-like} and are coloured black.
The number of space-like edges in $S_k$ is denoted by $|S_k|$, where we note that $|S_0|=0$ and $|S_k|>0$ for $k\geq1$.

A vertex $v_1$ in ${S}_1$, or equivalently the edge $\{x,v_1\}$ (coloured orange in Figure \ref{figure: pure}), is distinguished. 
From $v_1$, the rightmost outwardly emanating edge connects $v_1$ to a vertex $v_2$ in $S_2$. Likewise,
$v_2$ is linked to a vertex $v_3$ in $S_3$ via its rightmost outwardly emanating edge. Continuing this procedure till we reach
the outermost cycle $S_m$ produces a sequence of vertices $v_0\equiv x,v_1,\ldots,v_m$, where each neighbour pair is connected
by a time-like edge: $\{v_k,v_{k+1}\}$, $k=0,\ldots,m-1$.
Relative to the distinguished vertex $v_k$, a clockwise order is assigned to the space-like edges in $S_k$.
Accordingly, the first (leftmost) triangle in the annulus $A_k$, $k=1,\ldots,m-1$, is forward-directed. Note that one can apply the above prescription to construct a causal triangulation of a cylinder by omitting $S_0$ and letting $S_1$ correspond to the first instant of time.

\begin{figure}
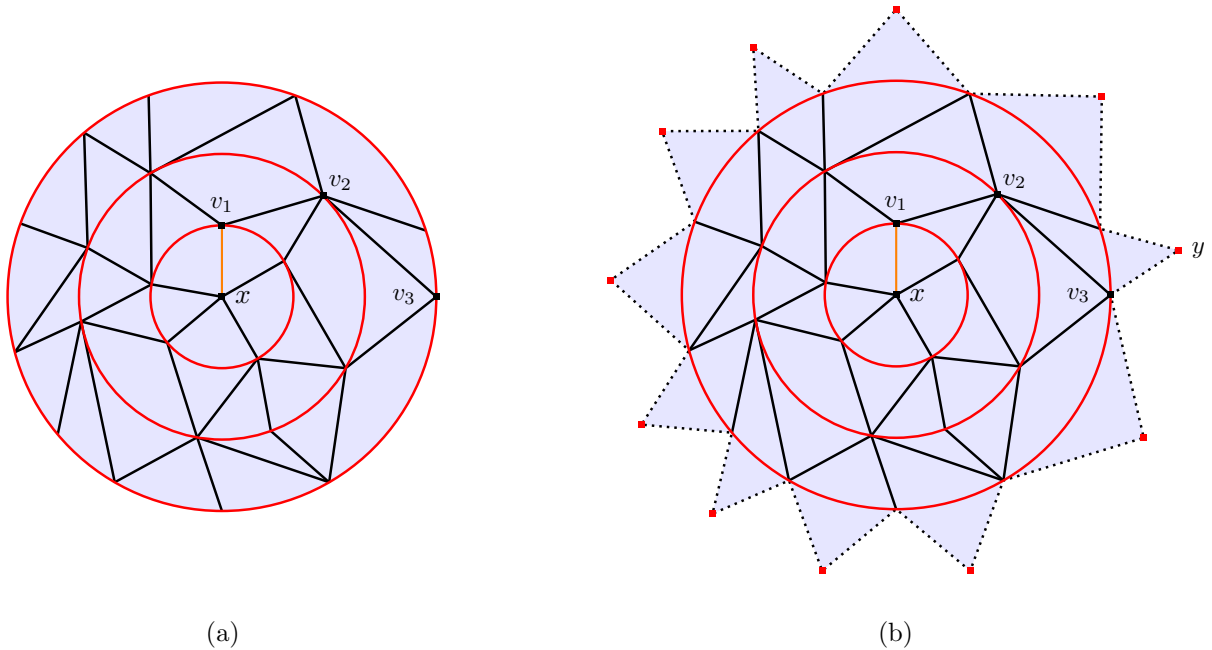

     \centering
     \begin{subfigure}[b]{0.5\textwidth}
         \centering
         \begin{align*}
            \scalebox{0.95}{\raisebox{0.6cm}{\Gexample}}\\[-13pt]
        \end{align*}
         \caption{}
         \label{figure: pure}
     \end{subfigure}
     \hfill
     \begin{subfigure}[b]{0.45\textwidth}
         \centering
         \begin{align*}
            \scalebox{0.95}{\raisebox{-5cm}{\Gexamplecapped}}
        \end{align*}
         \caption{}
         \label{fig: pureCap}
     \end{subfigure}
        \caption{(a) Causal triangulation of the disk, with distinguished edge $\{x,v_1\}$. (b) Extension of the disk triangulation in the left figure to a triangulation of the sphere, the latter being obtained by identifying neighbouring dotted time-like edges, including the outer red nodes.}
        \label{figure: disk}
\end{figure}

Let ${\mathcal C}_m(N)$ denote the set of causal disk triangulations of height $m$ with $N$ vertices,
and similarly ${\mathcal C}_m$ the set without constraints on the number of vertices, and define
$$
 |C|:=\sum_{k=0}^m|S_k|,\qquad C\in\mathcal{C}_m.
$$
That is, $|C|$ counts the number of space-like edges in $C$
and equals the number of vertices in $C$, excluding the central vertex.
We find it convenient to include the {\em degenerate case} $m=0$, where $C\equiv S_0$ and $|C|=0$ for the unique `triangulation' $C\in\mathcal{C}_0$.
The pure CDT partition functions are then defined as
\begin{equation}\label{Part1}
 Z(g):=\sum_{m=0}^{\infty} Z_m(g),\qquad 
 Z_m(g):=\sum_{C\in {\mathcal C}_m} g^{|C|}.
\end{equation}
It is well known \cite{ambjorn1998non}, and will also be shown below, that there exists a critical coupling $g_c>0$ such that $Z(g)$ is finite 
and analytic for $|g|< g_c$, while $Z(g)$ is divergent for $g>g_c$ and hence has a singularity at $g=g_c$.
We note that $Z_0(g)=1$.

To provide a geometric interpretation in terms of triangulation area, it is convenient to extend the triangulated disk $C$ to a triangulated sphere $\bar{C}$
by adjoining forward-directed triangles to each space-like edge in $S_m$, identifying the new time-like edges that are
incident on the same vertex in $S_m$. Such an extension is illustrated in Figure $\ref{fig: pureCap}$.
Note that the `outer' vertices of the added triangles are all identified, suggesting the introduction of $S_{m+1}\equiv\{y\}$,
where $y$ is the pole opposite to $x$.
In the degenerate case $C\in\mathcal{C}_0$, we set $\bar C\equiv C$ and accordingly have $y=x$.
We define the {\em area} $a(\bar{C})$ as the number of elementary triangles in $\bar{C}$.
Since $a(\bar{C})=2|C|$, it follows that
$$
Z_m(g) = \sum_{C\in {\mathcal C}_m} g^{a(\bar{C})/2}.
$$

\subsection{Dense loop model}\label{subsec:DLM}

In the \emph{dense loop model}, the set ${\mathcal L}_m^{de}(N)$ of allowed {\em loop configurations} of height $m$ 
with $N$ vertices is obtained from the 
set ${\mathcal C}_m(N)$ by replacing each elementary triangle in every configuration $C\in{\mathcal C}_m(N)$ with one of the
two similarly directed triangles in Figure \ref{figure: dense1}. These triangles are constructed by decorating the original empty triangles
with blue \textit{arc(s)}.
\begin{figure}[ht]
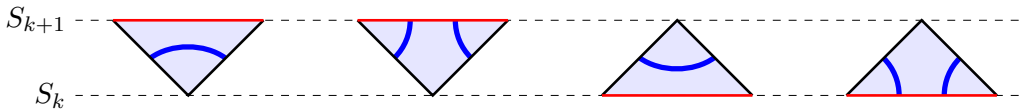

	\centering
	\bothTriDe
    \caption{The possible decorations of elementary triangles in the dense loop model.}
    \label{figure: dense1}
\end{figure}
Arcs within elementary triangles are non-intersecting and defined up to regular isotopy, that is, they can be bent and stretched but 
cannot be cut. Loop segments may form (closed) loops but are only allowed to terminate on the {\em boundary} of the 
triangulation (the cycle $S_m$).
This enforces a compatibility condition on shared space-like edges. 
The set of dense loop configurations of height $m$, without constraints on
the number of vertices, is denoted by ${\mathcal L}_m^{de}$.
An example of a dense loop configuration is depicted in Figure \ref{figure: densex}.

Each crossing of a space-like edge by a loop segment is called an \textit{intersection}. We denote by $s(L)$, respectively $|L|$, the number of intersections and space-like edges in $L\in{\mathcal L}_m^{de}$,
and define the partition functions for the dense loop model by 
\begin{equation*}
 Z^{de}(g,\alpha):= \sum_{m=0}^{\infty}Z_m^{de}(g,\alpha),\qquad
 Z_m^{de}(g,\alpha):= \sum_{L\in {\mathcal L}_m^{de}} g^{|L|}\alpha^{s(L)}.
\end{equation*}
Here $\alpha\in[0,1]$, where we take $\alpha^n$ to mean $\delta_{n,0}$ for $\alpha=0$ and $n\in\mathbb{N}_0$. Note that these partition functions reflect that the loop fugacity has been set equal to $1$.
It will be shown below that, for fixed $\alpha$, there exists $g_c^{de}(\alpha)>0$ 
such that $Z^{de}(g,\alpha)$ is analytic as a function of $g$ for $|g|<g_c^{de}(\alpha)$, 
while it is divergent for $g>g_c^{de}(\alpha)$. 
We note that $Z^{de}_0(g,\alpha)=1$.

\begin{figure}
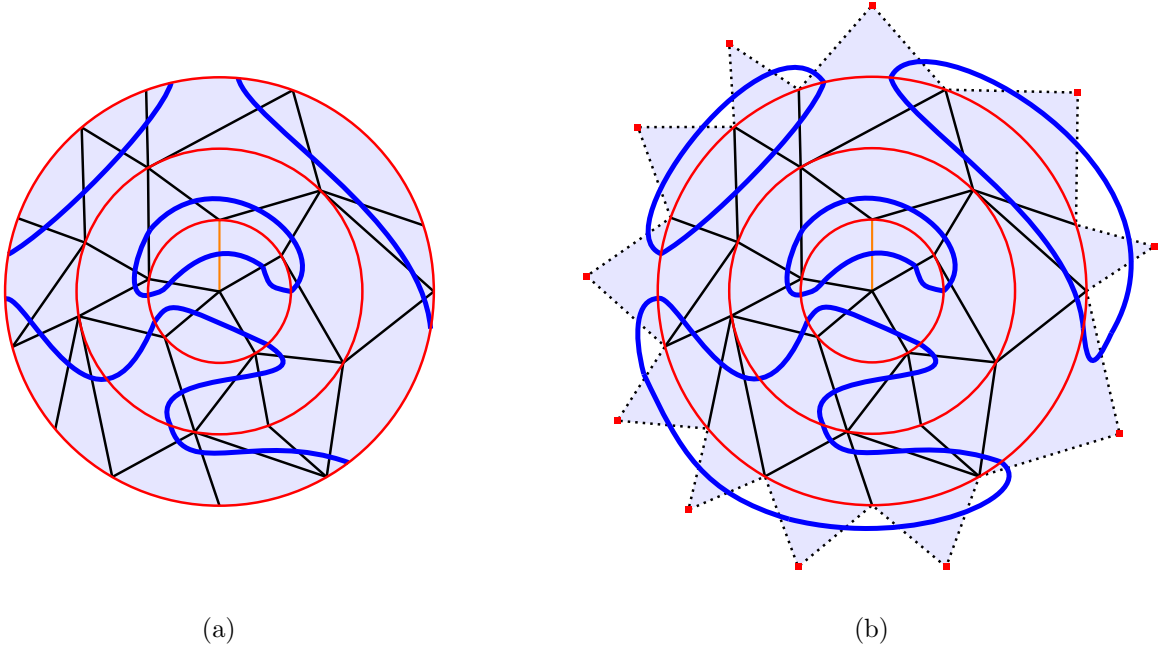

     \centering
     \begin{subfigure}[b]{0.5\textwidth}
         \centering
         \begin{align*}
            \scalebox{0.95}{\raisebox{0.6cm}{\densexample}}\\[-13pt]
        \end{align*}
         \caption{}
         \label{figure: densex}
     \end{subfigure}
     \hfill
     \begin{subfigure}[b]{0.45\textwidth}
         \centering
         \begin{align*}
            \scalebox{0.95}{\raisebox{-5cm}{\densexamplecapped}}
        \end{align*}
         \caption{}
         \label{fig: densexCap}
     \end{subfigure}
        \caption{(a) Dense loop configuration on a causal triangulation of the disk, with a distinguished edge incident to the central vertex. 
         (b) Extension of the loop configuration in the left figure to a loop configuration on the triangulated sphere, the latter being obtained by identifying neighbouring dotted time-like edges, including the outer red nodes. With the given prescription for constructing the sphere, the extension of the loop configuration is unique.}
        \label{fig:three graphs}
\end{figure}

As in the pure CDT model, we may extend the disk triangulations underlying the loop configurations to sphere triangulations
(defined as in Section \ref{subsec:CT}).
Correspondingly, we may extend the loop configuration $L$ defined on the triangulated disk $C$ to a 
loop configuration $\bar{L}$ on the triangulated sphere $\bar{C}$.
In fact, there exists a unique assignment of arcs to the additional forward-directed triangles in $\bar C$; it is obtained by mirroring their backward-directed
counterpart with which they share a space-like edge in $S_m$.
Such an extension is illustrated in Figure \ref{fig: densexCap}.
As before, the area $a(\bar L)$ is defined as the number of elementary triangles in $\bar L$, while we now also
define the {\em length} $l(\bar L)$ of the loop configuration $\bar{L}$ as the number of elementary arcs.
Noting the relations $a(\bar L)=2|L|$ and $l(\bar L)=a(\bar L)+s(L)$, we may re-express the fixed-height partition function as 
\begin{equation*}
 Z_m^{de}(g,\alpha) = \sum_{L\in {\mathcal L}_m^{de}} g^{a(\bar L)/2}\alpha^{l(\bar L)-a(\bar L)}.
\end{equation*}  

\subsection{Dilute loop model}\label{subsec:DiLM}

In the \emph{dilute loop model}, the set ${\mathcal L}_m^{di}(N)$ of allowed loop configurations of height $m$ 
with $N$ vertices is obtained from the 
set ${\mathcal C}_m(N)$ by replacing each elementary triangle in every configuration $C\in{\mathcal C}_m(N)$ with one of the
four similarly directed triangles in Figure \ref{figure:dilute1}. 
\begin{figure}[ht]
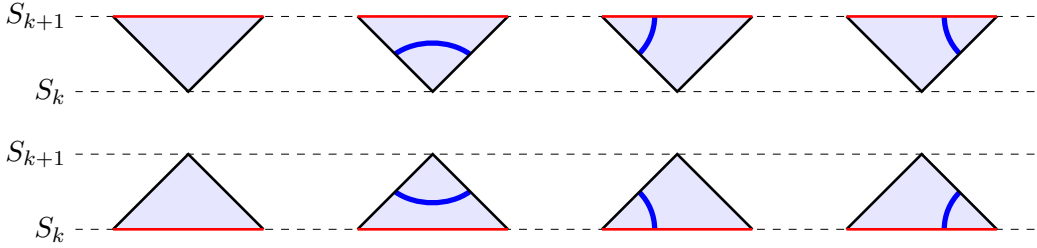

    \centering
	\scalebox{1}{\hangingTriDi}
\\[.2cm]
	\scalebox{1}{\standingTriDi}
    \caption{The possible decorations of elementary triangles in the dilute loop model.}
    \label{figure:dilute1}
\end{figure}
\noindent
These triangles are constructed by decorating the original empty triangles
with at most one blue \textit{arc}.
As in the dense loop model, an arc within an elementary triangle is defined up to regular isotopy, the crossing of a space-like edge by a loop segment is called an {\em intersection}, and loop segments may form (closed) loops but are only allowed to terminate on the boundary of the triangulation.
Contrary to the dense loop model, a compatibility condition is thereby enforced on {\em both} types of shared edges: space-like as well as time-like.
An example of a dilute loop configuration is depicted in Figure \ref{figure: dilutex}. 
We let ${\mathcal L}_m^{di}$ denote the set of dilute loop configurations of height $m$, without constraints on
the number of vertices.
The partition functions for the dilute loop model (with loop fugacity $1$) are then defined by
\begin{equation}\label{equ:diPf}
 Z^{di}(g,\alpha):= \sum_{m=0}^{\infty}Z_m^{di}(g,\alpha),\qquad
 Z_m^{di}(g,\alpha):= \sum_{L\in {\mathcal L}_m^{di}} g^{|L|}\alpha^{s(L)},
\end{equation}
where
$s(L)$, respectively $|L|$, denotes the number of intersections and space-like edges in the loop configuration $L$.
As in the dense case, $\alpha\in[0,1]$, and it will be shown that, for fixed $\alpha$, there exists $g_c^{di}(\alpha)>0$ 
such that $Z^{di}(g,\alpha)$ is analytic as a function of $g$ for $|g|<g_c^{di}(\alpha)$, while it is divergent for 
$g>g_c^{di}(\alpha)$.
We note that $Z^{di}_0(g,\alpha)=1$.

\begin{figure}
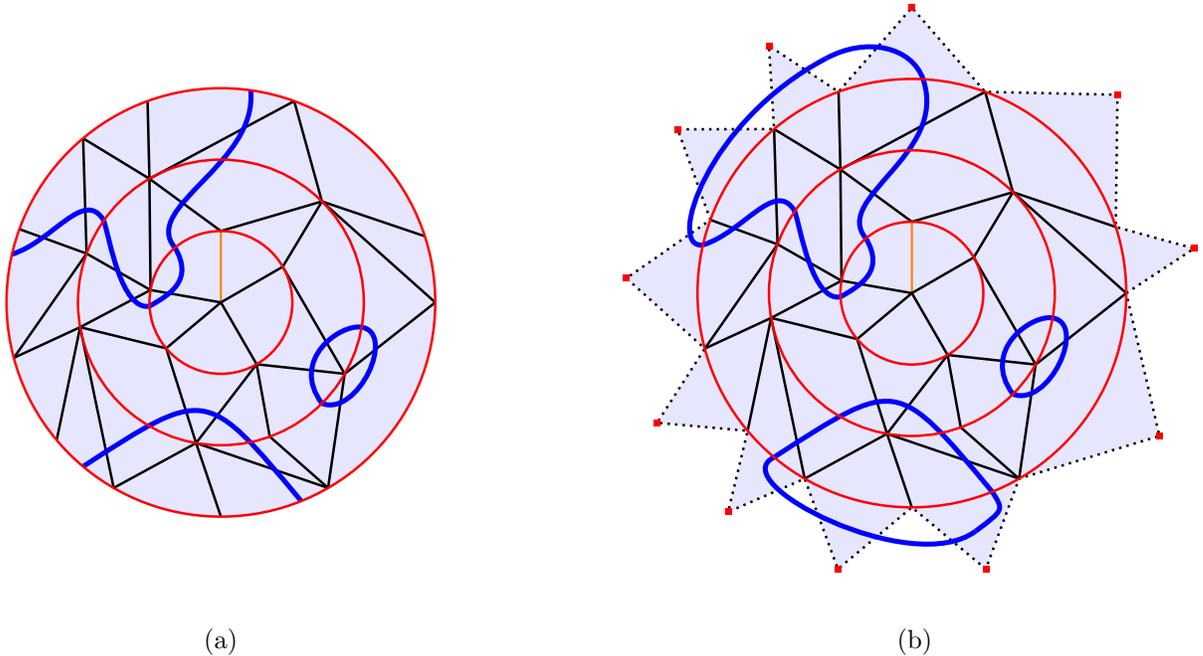

     \centering
     \begin{subfigure}[b]{0.5\textwidth}
         \centering
         \begin{align*}
            \scalebox{0.95}{\raisebox{0.6cm}{\dilutexample}}\\[-13pt]
        \end{align*}
         \caption{}
         \label{figure: dilutex}
     \end{subfigure}
     \hfill
     \begin{subfigure}[b]{0.45\textwidth}
         \centering
         \begin{align*}
            \scalebox{0.95}{\raisebox{-5cm}{\dilutexamplecapped}}
        \end{align*}
         \caption{}
         \label{fig: dilutexCap}
     \end{subfigure}
        \caption{(a) Dilute loop configuration on a causal triangulation of the disk with a distinguished edge incident to the central vertex. 
        (b) Extension of the loop configuration in the left figure to a loop configuration of the triangulated sphere, the latter being obtained by identifying neighbouring dotted time-like edges, including the outer red nodes. With the given prescription for constructing the sphere, there are two possible extensions of the loop configuration.}
        \label{fig:five graphs}
\end{figure}

In both loop models, let $s_k(L)$ denote the number of intersections of $S_k$, where we observe that $s_0(L)=0$. In the dilute loop model, the compatibility conditions along shared time-like edges,
following from the requirement that loop segments can only terminate on the boundary, imply that the parity (even or odd) of $s_k(L)$ is the same for all $k$:
\begin{equation}\label{ell}
 s_k(L)\equiv s_{k'}(L)\quad (\mathrm{mod}\ 2),\qquad \forall\,k,k'.
\end{equation}
Since the dilute loop configurations here are defined on disks (for which $s_0(L)=0$), this parity is {\em even}. 
As an immediate consequence, we have
$$
 s(L)=\sum_{k=0}^ms_k(L)\in2\mathbb{N}_0.
$$

As in the pure CDT model, we may extend the disk triangulations underlying the dilute loop configurations to sphere triangulations, see Figure \ref{fig: dilutexCap}.
As in the dense loop model, the dilute loop configuration $L$ defined on the disk triangulation $C$ is extended correspondingly to a dilute
loop configuration on the sphere triangulation $\bar{C}$, by
assigning arcs to the additional forward-directed triangles in $\bar C$. However, {\em unlike} the situation in the dense loop model,
there are exactly {\em two} possible extended dilute loop configurations, here denoted by $\bar{L}_1$ and $\bar{L}_2$.
Since $|\bar{L}_1| = |\bar{L}_2|$ and $s(\bar{L}_1) = s(\bar{L}_2)$, we can use either extension when re-expressing the partition
function (\ref{equ:diPf}) in terms of data encoded in the corresponding sphere configurations:
$$
 Z_m^{di}(g,\alpha) = \sum_{L\in {\mathcal L}_m^{di}} g^{a(\bar{L})/2}\alpha^{s(\bar{L})}.
$$  
However, contrary to the dense case, this rewriting does not admit an interpretation in terms of area and length of the loop configuration. This is re-addressed in Section \ref{sec:6}.

\section{Tree correspondences}
\label{sec:3}

In this section, we develop correspondences between models defined on causal triangulations and models on planar trees.
The partition functions of the matched constructions are closely related, facilitating the use of tree methods in the analysis of 
the CDT model partition functions.

\subsection{Pure CDT model}

Let ${\mathcal T}_m(N)$ denote the set of planar trees of height $m+1$ with $N$ edges and a root of degree $1$,
and ${\mathcal T}_m$ the set of planar trees of height $m+1$ and a root of degree $1$.
As noted in \cite{malyshev2001two,durhuus2010spectral}, there exists a bijective correspondence
\begin{equation}\label{psiCT}
    \psi: \mathcal{C}_m(N)\to\mathcal{T}_m(N).
\end{equation}
To construct it, let $C\in\mathcal{C}_m(N)$. For each vertex in $S_k$, $1\leq k<m$, remove the rightmost 
outward-pointing time-like edge, as well as all space-like edges, and add a new vertex $x_0$ and a corresponding edge $\{x,x_0\}$ 
immediately to the left (viewed outwardly) of the distinguished edge $\{x,v_1\}$. 
The resulting graph, $T=\psi(C)$, is a planar tree with root $x_0$ of degree $1$, and the map $\psi$ is readily seen to be invertible.
Note that the vertices of $C$ and $T$ are the same, except for the vertex $x_0$, and that the {\em graph distance} from $x$ to any vertex $v$ in $C$ and the one between the corresponding vertices in $T$ are the same.
An example of the construction of $\psi(C)$ is given in Figure \ref{figure:trex}.

\begin{figure}
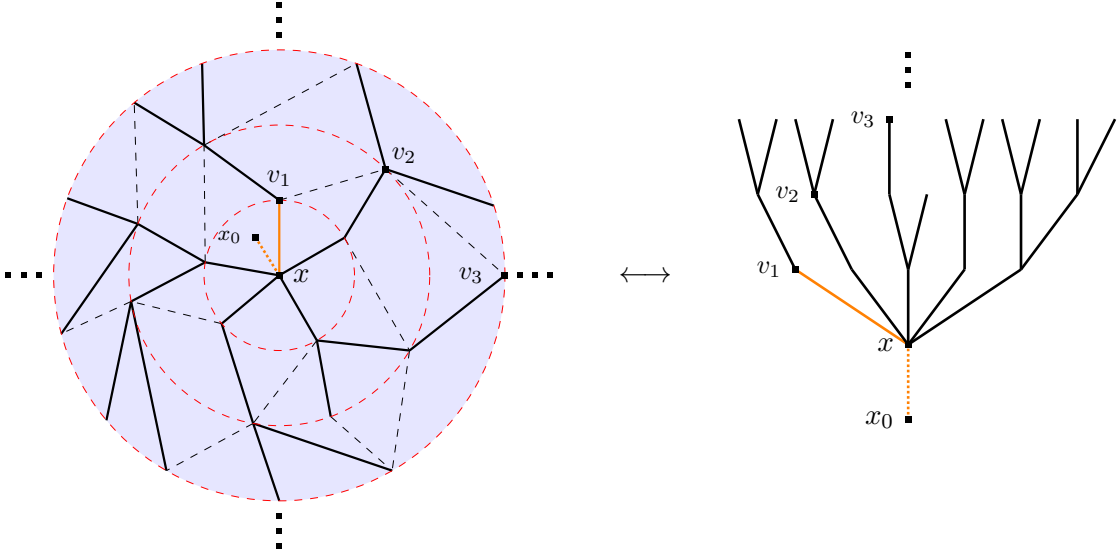

    \centering
    \begin{align*}
        \raisebox{-3.55cm}{\treexampleDashed} \qquad\longleftrightarrow\qquad \raisebox{-2.075cm}{\planartreeexampleDashed}
    \end{align*}
    \captionof{figure}{Causal triangulation of the disk $C$ and the corresponding tree $\psi(C)$.}
\label{figure:trex}
\end{figure}

We introduce
$$
    V_k(T) := \big\{v\in T\,|\,d_T(x_0,v)=k+1\big\}
$$
as the set of vertices of $T\in{\mathcal T}_m$ with graph distance $k+1$ 
from $x_0$, writing
$$
    V_k(T)= \big\{v_{k,i}\,|\,i=1,\ldots,|V_k(T)|\big\},
$$
where $v_{k,1}\equiv v_k$ is the distinguished vertex, defined in Section \ref{subsec:CT}, and the remaining $v_{k,i}$ are labelled clockwise from $v_{k,1}$, with $i=2,\ldots,|V_k(T)|$.
We also define
\begin{align}
    V(T):=\bigcup_{k=1}^{m}V_k(T),
\end{align}
which is the vertex set of $T$, excluding the vertices $x_0$ and $x$.

\subsection{Loop models}

We now proceed to extend the correspondence (\ref{psiCT}) to a pair of correspondences from the set of triangulations ${\mathcal C}_m$ to the 
respective sets ${\mathcal L}_m^{de}$ and ${\mathcal L}_m^{di}$ of 
loop configurations. To this end, let $\widetilde{\mathcal T}_m$ denote the set of planar trees of height $m+1$ whose vertices, 
except for the root $x_0$ and vertex $x$, are labelled $0$ or $1$, that is,
$$
 \widetilde{\mathcal T}_m:= \big\{(T,\delta)\,|\,T\in {\mathcal T}_m,\;\delta: V(T)\to \{0,1\}\big\}.
$$
For $(T,\delta)\in\widetilde{\mathcal T}_m$, we define the {\em labelling characteristics} at height $k+1$
\begin{align*}
    \bm{\delta}_k:= \big(\delta(v_{k,1}),\ldots,\delta(v_{k,|V_k(T)|})\big)
\end{align*}
and set
$$
 \delta_k:=\sum_{v\in V_k(T)}\delta(v) 
$$
for $k=1,\ldots,m\,$. It follows that
$$
|\delta|:=\sum_{k=1}^m\delta_k =\sum_{v\in V(T)} \delta(v)
$$
counts the number of $1$-labels in $(T,\delta)$.
For later convenience, we also introduce
\begin{align*}
 \widetilde{\mathcal T}_m(N)&:=\big\{(T,\delta)\in\widetilde{\mathcal T}_m\,|\,T\in\mathcal{T}_m(N)\big\},
 \\[.2cm]
 \widetilde{\mathcal T}_m^{ev}(N)&:=\big\{(T,\delta)\in\widetilde{\mathcal T}_m\,|\,T\in\mathcal{T}_m(N),\,
   \delta_k\in2\mathbb{N}_0,\,k=1,\ldots,m\big\},
\end{align*}
and $\widetilde{\mathcal T}_m^{ev}$ for the similar set without constraints 
on the number of vertices.
Proposition \ref{denseloop-trees} below establishes a bijective correspondence between the elements 
of ${\mathcal L}_m^{de}(N)$ and those of $\widetilde{\mathcal T}_m(N)$,
while Proposition \ref{diluteloop-trees} establishes a $2^m$ to $1$ correspondence 
between the elements of ${\mathcal L}_m^{di}(N)$ and those of $\widetilde{\mathcal T}_m^{ev}(N)$. 

For a triangulated disk $C$, we write $l_k\equiv |S_k|$, $0\leq k\leq m$, 
and refer to $l_k$ and $l_{k+1}$ as the {\em boundary lengths} of the annulus $A_k$. Any triangulation of $A_k$ can be transformed into one of the form depicted in Figure \ref{fig:standard}, called a \textit{standard triangulation}, with the same boundary lengths, $l_k$ and $l_{k+1}$,
where $v_k$ denotes the distinguished vertex in $S_k$. This can be achieved by a sequence of local flips of the form
\begin{equation}\label{fig:flip}
	\scalebox{1}{\raisebox{-0.45cm}{\begin{tikzpicture}[scale=0.5]
	\fill[opacity=0.25, lightblue] (-1,-1) -- (-1,1) -- (1,1) -- (1,-1) -- (-1,-1);
	\draw[line width=0.035cm] (-1,-1) -- (-1,1);
	\draw[line width=0.035cm] (1,-1) -- (1,1);
	\draw[line width=0.035cm] (-1,-1) -- (1,1);
	\draw[line width=0.035cm, red] (-1,1) -- (1,1);
	\draw[line width=0.035cm, red] (-1,-1) -- (1,-1);
	\end{tikzpicture}}}
	\quad
\longleftrightarrow
	\quad
	\scalebox{1}{\raisebox{-0.45cm}{\begin{tikzpicture}[scale=0.5]
	\fill[opacity=0.25, lightblue] (-1,-1) -- (-1,1) -- (1,1) -- (1,-1) -- (-1,-1);
	\draw[line width=0.035cm] (-1,-1) -- (-1,1);
	\draw[line width=0.035cm] (1,-1) -- (1,1);
	\draw[line width=0.035cm] (-1,1) -- (1,-1);
	\draw[line width=0.035cm, red] (-1,1) -- (1,1);
	\draw[line width=0.035cm, red] (-1,-1) -- (1,-1);
	\end{tikzpicture}}}
\end{equation}
as follows. An arbitrary triangulation $T_{A_k}$ of $A_k$ can be described as a sequence of forward- and backward-directed triangles, here denoted by $f$'s and $b$'s, where, by construction, the first entry is a forward-directed triangle. Suppose $T_{A_k}$ is not standard; then there exists at least one $b$ followed by a $f$, i.e. a $bf$ in the sequence. Apply the flip operation \eqref{fig:flip} to the first instance of $bf$, thereby transforming it into an $fb$.  Iteratively applying this procedure, one arrives at the standard triangulation of $A_k$.

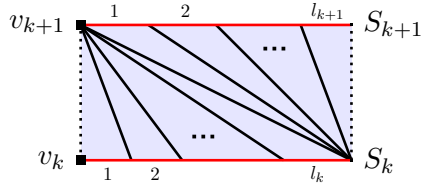
\begin{figure}[ht]
	\centering
	\begin{tikzpicture}[scale=0.9]
	\fill[opacity=0.25, lightblue] (0,-1) -- (0,1) -- (4,1) -- (4,-1) -- (0,-1);
	
	\draw[line width=0.035cm][dotted] (0,-1) -- (0,1);
	\draw[line width=0.035cm][dotted] (4,-1) -- (4,1);
	
	\draw[line width=0.035cm] (0,1) -- (0.75,-1);
	\draw[line width=0.035cm] (0,1) -- (1.5,-1);
	\draw[line width=0.035cm] (0,1) -- (3,-1);
	\draw[line width=0.035cm] (0,1) -- (4,-1);
	\draw[line width=0.035cm] (1,1) -- (4,-1);
	\draw[line width=0.035cm] (2,1) -- (4,-1);
	\draw[line width=0.035cm] (4-0.75,1) -- (4,-1);
	
	\draw[line width=0.035cm, red] (0,1) -- (4,1);
	\draw[line width=0.035cm, red] (0,-1) -- (4,-1);
	
	\node [fill=black,inner sep=0.01pt,label=east:$S_k$]  at (4,-1) {};
	\node [fill=black,inner sep=1.75pt,label=west:$v_k$]  at (0,-1) {};
	
	\node [fill=black,inner sep=0.01pt,label=east:$S_{k+1}$]  at (4,1) {};
	\node [fill=black,inner sep=1.75pt,label=west:$v_{k+1}$]  at (0,1) {};
	
	\draw[line width=0.05cm, xshift=0.9cm, yshift=-0.65cm] [dotted] (0.75,0) -- (1.1,0);
	\draw[line width=0.05cm, xshift=1.95cm, yshift=0.65cm] [dotted] (0.75,0) -- (1.1,0);
	
	\node at (0.4,-1.2) {\scalebox{0.65}{$1$}};
	\node at (1.1,-1.2) {\scalebox{0.65}{$2$}};
	\node at (3.5,-1.2) {\scalebox{0.65}{$l_{k}$}};
	\node at (0.5,1.2) {\scalebox{0.65}{$1$}};
	\node at (1.55,1.2) {\scalebox{0.65}{$2$}};
	\node at (3.65,1.2) {\scalebox{0.65}{$l_{k+1}$}};
	\end{tikzpicture}
	\caption{Standard triangulation of annulus $A_k$. The leftmost and rightmost time-like edges are dashed to indicate that they are identified.} 
	\label{fig:standard}
\end{figure}

The flip operation in (\ref{fig:flip}) is readily extended to a flip operation on the similar local components of a loop configuration
on $A_k$. In the dense loop model, the extension is given by
\begin{equation}\label{flipdense}
	\scalebox{1}{\raisebox{-0.45 cm}{\flipa}} \longleftrightarrow\, \scalebox{1}{\raisebox{-0.45 cm}{\flipb}}\qquad\; 
	\scalebox{1}{\raisebox{-0.45 cm}{\flipc}} \longleftrightarrow\, \scalebox{1}{\raisebox{-0.45 cm}{\flipd}}\qquad\; 
	\scalebox{1}{\raisebox{-0.45 cm}{\flipe}} \longleftrightarrow\, \scalebox{1}{\raisebox{-0.45 cm}{\flipf}}\qquad\;
	\scalebox{1}{\raisebox{-0.45 cm}{\flipq}} \longleftrightarrow\, \scalebox{1}{\raisebox{-0.45 cm}{\flipr}} 
\end{equation}
In the dilute loop model, the extension is given by
\begin{equation}\label{flipdilute}
 \begin{array}{cccc}
     \scalebox{1}{\raisebox{-0.45 cm}{\flipg}} \longleftrightarrow\, \scalebox{1}{\raisebox{-0.45 cm}{\fliph}}\quad\ & 
     \scalebox{1}{\raisebox{-0.45 cm}{\flipe}} \longleftrightarrow\, \scalebox{1}{\raisebox{-0.45 cm}{\flipf}}\quad\ &
     \scalebox{1}{\raisebox{-0.45 cm}{\flipi}} \longleftrightarrow\, \scalebox{1}{\raisebox{-0.45 cm}{\flipj}}\quad\ &
     \scalebox{1}{\raisebox{-0.45 cm}{\flipk}} \longleftrightarrow\, \scalebox{1}{\raisebox{-0.45 cm}{\flipl}}
      \\[0.7cm]
     \scalebox{1}{\raisebox{-0.45 cm}{\flipii}} \longleftrightarrow\, \scalebox{1}{\raisebox{-0.45 cm}{\flipjj}}\quad\ &
     \scalebox{1}{\raisebox{-0.45 cm}{\flipm}} \longleftrightarrow\, \scalebox{1}{\raisebox{-0.45 cm}{\flipn}}\quad\ &
     \scalebox{1}{\raisebox{-0.45 cm}{\flipbosa}} \longleftrightarrow\, \scalebox{1}{\raisebox{-0.45 cm}{\flipbosb}}\quad\ &
     \scalebox{1}{\raisebox{-0.45 cm}{\flipo}} \longleftrightarrow\, \scalebox{1}{\raisebox{-0.45 cm}{\flipp}}
\end{array}   
\end{equation}
We immediately have the following result.
\begin{lem}\label{lem:3.1}
\mbox{}
\begin{itemize}
\item[(i)]
The number of possible dense, respectively dilute, loop configurations on a triangulated annulus $A_k$ only depends on the 
boundary lengths $l_k$ and $l_{k+1}$, not the details of the triangulation.
\item[(ii)]
The flip operations (\ref{flipdense}) and (\ref{flipdilute}) applied to a dense, respectively dilute, loop configuration, $L$, on a 
triangulated disk leave $|L|$ and the positions of intersections, hence also $s(L)$, invariant.
\end{itemize}
\end{lem}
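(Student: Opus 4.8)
The plan is to establish (ii) first by direct inspection of the flip relations and then to derive (i) from it, using the flip-connectivity of triangulations recorded in the text preceding the lemma.

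For (ii), note that the triangulation flip in (\ref{fig:flip}) replaces only the \emph{diagonal} of the square, which is a time-like edge; the four bounding edges, and in particular the two space-like (red) edges, are untouched. Hence the set of space-like edges of the whole triangulated disk is unchanged, and $|L|$ is trivially invariant. For the intersections, I would run through the pairs in (\ref{flipdense}) for the dense model and in (\ref{flipdilute}) for the dilute model and check, pair by pair, that the two loop decorations related by a flip carry arcs with identical endpoints on the two red edges of the square. Since an intersection is by definition a crossing of a space-like edge, and the red edges together with the arc-endpoints sitting on them are common to both members of each pair, the positions of the intersections coincide, whence $s(L)$ is preserved. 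The only content is the enumeration (four pairs in the dense case, eight in the dilute case); within a pair the arc may be re-routed to accommodate the new diagonal, but its endpoints on the boundary of the square do not move.

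For (i), I would first promote the triangulation flip to a bijection on loop configurations. Locally, each relation in (\ref{flipdense}), respectively (\ref{flipdilute}), is a bijection between the admissible loop decorations of the $bf$-square and those of the $fb$-square that fixes the arc-endpoints on \emph{all four} edges of the square. The two red edges lie on $S_k$ and $S_{k+1}$, while the remaining two time-like edges are interior to $A_k$; since the endpoints are preserved on all of them, the replacement is compatible with the decoration on the rest of the annulus and therefore extends to a bijection between the loop configurations of $A_k$ built on the $bf$-triangulation and those built on the $fb$-triangulation. Invoking the reduction already proved in the text — every triangulation of $A_k$ is carried to the standard triangulation of Figure \ref{fig:standard} by a finite sequence of $bf\to fb$ flips — and composing the associated bijections, I obtain a bijection between the loop configurations on an arbitrary triangulation of $A_k$ and those on the standard triangulation. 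As the latter is determined solely by $l_k$ and $l_{k+1}$, so is the number of loop configurations, proving (i).

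The main obstacle is the bijectivity claim underpinning (i): one must verify that (\ref{flipdense}) and (\ref{flipdilute}) \emph{exhaust} the admissible local decorations on each side of the flip and pair them consistently with the global compatibility constraints — along space-like edges in the dense case, and along both space-like and time-like edges in the dilute case. In the dense model no constraint is imposed across the time-like diagonal, so all four decorations of the square occur and match the four listed pairs; in the dilute model the diagonal is constrained, and I would need to confirm that the relations respect the requirement that loop segments terminate only on the boundary, so that no flip leaves a dangling arc-end on an interior time-like edge. This consistency is exactly what allows the purely local move to glue into the global bijection used in (i), and once the endpoint-preservation is checked the invariance in (ii) is immediate.
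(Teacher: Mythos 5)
Your proposal is correct and is precisely the argument the paper intends: the paper states Lemma \ref{lem:3.1} as immediate from the flip operations (\ref{flipdense}) and (\ref{flipdilute}) together with the preceding reduction of any annulus triangulation to the standard one of Figure \ref{fig:standard}, offering no further proof. Your write-up simply makes explicit the two points the paper leaves tacit --- that each flip pair fixes the arc-endpoints on all four boundary edges of the square (giving (ii) and the local bijection), and that the listed pairs exhaust the admissible decorations on each side (four in the dense case, eight even-parity boundary data in the dilute case), so the local bijections compose along the flip sequence to yield (i).
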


\noindent
A loop configuration on a triangulated annulus $A_k$ is characterised not only by the boundary lengths $l_k$ and $l_{k+1}$, 
but also by the number and location of the space-like edges which are intersected by loop segments. 
Correspondingly, for the cycle $S_k$, $k=1,\ldots,m$, the {\em intersection characteristics} are given by the $l_k$-tuple
$$
 \mathbf{n}_k=(n_{k,1},\ldots,n_{k,l_k})\in\{0,1\}^{l_k},\qquad k=1,\ldots, m,
$$
where $n_{k,i}$ takes the value $1$ if the $i$'th space-like edge (labelled clockwise from $v_k$) is intersected, and $0$ otherwise. 
Associated to the intersection characteristics $\mathbf{n}_k$, we set
$$
 n_k:=\sum_{i=1}^{l_k}n_{k,i},\qquad k=1,\ldots, m.
$$
\noindent
\textbf{Remark}. 
With notation as in Section \ref{sec:2}, $s_k(L)=2n_{k}$ for $L\in{\mathcal L}_m^{de}$, while $s_k(L)=n_{k}$ for $L\in{\mathcal L}_m^{di}$.

\begin{lem}\label{lem:IntCharDeDi}
Let $C_k$ be a triangulation of the annulus $A_k$ with boundary lengths $l_k$ and $l_{k+1}$, and let $\mathbf{n}_k\in\{0,1\}^{l_k}$\! and\, $\mathbf{n}_{k+1}\in\{0,1\}^{l_{k+1}}$.
Then, 
\begin{itemize}
\item[(i)] $C_k$ admits exactly one dense loop configuration with intersection characteristics $\mathbf{n}_k$\! and\, $\mathbf{n}_{k+1}$;
\item[(ii)] $C_k$ admits exactly two, respectively zero, dilute loop configurations with intersection characteristics $\mathbf{n}_k$\! and\, $\mathbf{n}_{k+1}$ 
if the sum $n_k+n_{k+1}$ is even, respectively odd.
\end{itemize}
\end{lem}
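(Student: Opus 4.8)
The plan is to reduce the count to a purely local analysis of the finitely many admissible decorations of a single elementary triangle, and then to propagate the resulting edge constraints around the cyclic sequence of triangles making up $C_k$. First I would record, for each decoration appearing in Figures \ref{figure: dense1} and \ref{figure:dilute1}, exactly which of the three edges of the triangle --- its space-like base and its two time-like legs --- are crossed by a loop segment. In the dense case this yields the decisive observation that \emph{each} time-like leg is crossed exactly once by either admissible decoration, while the base is crossed either twice or not at all; consequently the decoration of a triangle is determined solely by whether its base is intersected (decoration with base uncrossed if $n=0$, base crossed twice if $n=1$, consistent with $s_k=2n_k$). In the dilute case each decoration carries at most one arc and hence crosses an even number ($0$ or $2$) of edges; the four decorations are in bijection with the four even-cardinality subsets of the three edges (the empty set and the three $2$-element subsets), the arc, when present, joining precisely the two crossed edges, uniquely up to regular isotopy.

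For part (i), I would argue that the bases of the triangles of $C_k$ are exactly the space-like edges of $S_k$ (for forward-directed triangles) and of $S_{k+1}$ (for backward-directed triangles), so the data $\mathbf{n}_k$ and $\mathbf{n}_{k+1}$ force the decoration of every triangle uniquely. It then remains to verify compatibility along the shared time-like edges, and here the dense observation does all the work: each time-like leg is crossed exactly once from each of the two triangles sharing it, so the two arc-endpoints on that leg match automatically and the loop segments join with no further constraint. This gives existence and uniqueness for arbitrary $\mathbf{n}_k,\mathbf{n}_{k+1}$, proving (i).

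For part (ii), I would introduce for each time-like edge $e_j$ of $C_k$ a variable $x_j\in\{0,1\}$ recording whether $e_j$ is crossed, and index the $L:=l_k+l_{k+1}$ triangles cyclically so that $t_j$ has legs $e_{j-1},e_j$ (with $e_L\equiv e_0$) and base crossing-status $\beta_j$, the latter being the relevant entry of $\mathbf{n}_k$ or $\mathbf{n}_{k+1}$. Since every dilute decoration crosses an even number of edges, admissibility of $t_j$ is equivalent to $x_{j-1}+x_j\equiv\beta_j\pmod 2$, and given such a solution the decoration of $t_j$ is uniquely determined by the bijection above. Counting dilute configurations thus reduces to counting solutions $(x_0,\dots,x_{L-1})\in\{0,1\}^L$ of this cyclic linear system over $\mathbb{F}_2$. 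Propagating gives $x_j=x_0+\sum_{i=1}^{j}\beta_i$, and the cyclic closure $x_L=x_0$ forces $\sum_{j=1}^{L}\beta_j\equiv 0\pmod 2$; since $\sum_{j}\beta_j=n_k+n_{k+1}$, the system has exactly two solutions (parametrised by $x_0\in\{0,1\}$, which are distinct as $L\geq1$) when $n_k+n_{k+1}$ is even, and none when it is odd. Hence there are exactly two, respectively zero, dilute configurations, proving (ii).

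The main obstacle I anticipate is the local bookkeeping: one must check cleanly that the crossing pattern on the three edges determines the decoration uniquely in each model, and in particular that in the dilute case no admissible decoration crosses an odd number of edges nor all three at once --- this is exactly what makes the parity condition both necessary and sufficient. The cyclic identification of the leftmost and rightmost time-like edges of $A_k$ (Figure \ref{fig:standard}) also needs care, since it is precisely this closure that produces the global parity constraint $n_k+n_{k+1}\in2\mathbb{N}_0$; away from the closure the propagation is unobstructed. Finally I would note that the resulting count depends on $C_k$ only through $l_k,l_{k+1}$ and the parity of $n_k+n_{k+1}$, in agreement with Lemma \ref{lem:3.1}(i).
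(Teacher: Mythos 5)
Your proposal is correct, and it reaches the same counts by the same underlying mechanism as the paper --- propagating the ``intersectedness'' of time-like edges around the cyclic sequence of triangles, with one free $\mathbb{Z}_2$ choice and a parity closure condition --- but it is packaged differently in two genuinely useful ways. First, the paper begins by invoking Lemma \ref{lem:3.1} to reduce to the standard triangulation of $A_k$ (using that the flip moves preserve the positions of intersections) and then runs the propagation separately through the block of forward-directed and the block of backward-directed triangles; your argument dispenses with this reduction entirely, since your cyclic indexing $t_j \leftrightarrow (e_{j-1},e_j)$ and the relation $x_{j-1}+x_j\equiv\beta_j \pmod 2$ are insensitive to how forward- and backward-directed triangles interleave, and $\sum_j\beta_j=n_k+n_{k+1}$ regardless of the order. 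This makes the proof self-contained and directly valid for an arbitrary triangulation of the annulus, whereas the paper's route leans on the flip lemma. Second, your classification of the dilute decorations as exactly the four even-cardinality subsets of the triangle's three edges, and the resulting recasting of the count as the solution count of a cyclic linear system over $\mathbb{F}_2$, makes the ``two versus zero'' dichotomy and the necessity and sufficiency of the parity condition completely transparent; the paper states the same facts verbally (``same as, respectively opposite to, \dots if $n_k$ is even, respectively odd''). Your dense-case observation --- each time-like leg is crossed exactly once by either decoration, so time-like compatibility is automatic and the base-crossing datum determines the decoration --- is precisely the paper's remark that no compatibility constraints arise along time-like edges in the dense model. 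The only thing you lose relative to the paper is that the standard-triangulation picture is reused there for intuition and figures; what you gain is independence from Lemma \ref{lem:3.1} and a cleaner algebraic bookkeeping.
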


\begin{proof}
By Lemma \ref{lem:3.1}, it suffices to consider standard triangulations, which we will do in the following.
In the dense loop model, the entries of the intersection characteristics are in one-to-one correspondence with the possible decorations of the 
associated elementary triangles, as follows:
$$ 
        \scalebox{0.5}{\raisebox{-0.5cm}{\idhangexpan}}   \leftrightarrow\;  n_{k+1,i} = 0,\qquad 
        \scalebox{0.5}{\raisebox{-0.5cm}{\ehangexpan}}    \leftrightarrow\;  n_{k+1,i} = 1,\qquad
        \scalebox{0.5}{\raisebox{-0.5cm}{\idstandexpan}}   \leftrightarrow\;  n_{k,i} = 0 ,\qquad 
        \scalebox{0.5}{\raisebox{-0.5cm}{\estandexpan}}    \leftrightarrow\;  n_{k,i} = 1.
$$ 
Since there are no compatibility constraints along time-like edges in the dense loop model, this establishes the claim in this case.
In the dilute loop model, the entries $n_{k,i}$ and $n_{k+1,i}$ corresponding to the elementary triangles in Figure \ref{figure:dilute1} are given as follows:
    \begin{align*}
        \left\{\scalebox{0.5}{\raisebox{-0.55cm}{\idhangMD}},\;\scalebox{0.5}{\raisebox{-0.55cm}{\idhangexpan}}\right\} &\,\rightarrow\; n_{k+1,i} = 0,\qquad 
        \left\{\scalebox{0.5}{\raisebox{-0.55cm}{\eLhangexpan}},\;\scalebox{0.5}{\raisebox{-0.55cm}{\eRhangexpan}} \right\}\,\rightarrow\; n_{k+1,i} = 1,\\[.15cm]
        \left\{\scalebox{0.5}{\raisebox{-0.55cm}{\idstandMD}},\;\scalebox{0.5}{\raisebox{-0.55cm}{\idstandexpan}}\right\}&\,\rightarrow\; n_{k,i} = 0,\qquad\quad 
        \left\{\scalebox{0.5}{\raisebox{-0.55cm}{\eLstandexpan}},\;\scalebox{0.5}{\raisebox{-0.55cm}{\eRstandexpan}} \right\}\,\rightarrow\; n_{k,i} = 1.
    \end{align*}
Despite the $2$ to $1$ nature of these correspondences,
there is a unique decorated triangle associated to a given value of $n_{k,i}$ or $n_{k+1,i}$ if the `intersectedness' of the left time-like edge is known.
By successive applications of this argument, given the intersectedness of the first time-like edge in the standard triangulation of $A_k$, 
the decorations of all the forward-directed triangles are determined by the intersection characteristics $\mathbf{n}_k$. 
The intersectedness of the right edge of the last of these triangles
is the same as (respectively opposite to) that of the left edge of the first triangle if $n_k$ is even (respectively odd).
Repeating the arguments for the backward-directed triangles in $A_k$, we see that, given the intersectedness of the left time-like edge of the first
of these triangles (whose space-like edge carries the label $n_{k+1,1}$), all the 
backward-directed triangles are determined by the intersection characteristics $\mathbf{n}_{k+1}$, and the intersectedness of the right edge of the last backward-directed triangle
is the same as (respectively opposite to) that of the left edge of the first backward-directed triangle if $n_{k+1}$ is even (respectively odd).
Due to the nontrivial compatibility constraints along the time-like edges in the dilute loop model, $n_k+n_{k+1}$ must be even, as observed in (\ref{ell}), 
so the intersectedness of the right edge of the last backward-directed triangle
is the same as that of the left edge of the first forward-directed triangle in all cases, in accordance with the periodicity of the annulus.
Since the intersectedness of the left time-like edge of the first forward-directed triangle is undetermined by $\mathbf{n}_{k}$ and $\mathbf{n}_{k+1}$, 
this establishes the claim in the dilute loop model case.
\end{proof}

\begin{figure}[ht]
	\centering
	\begin{tikzpicture}[scale=0.9]
	\fill[opacity=0.25, lightblue] (0,-1) -- (0,1) -- (4,1) -- (4,-1) -- (0,-1);
	\draw[line width=0.035cm][dotted] (0,-1) -- (0,1);
	\draw[line width=0.035cm][dotted] (4,-1) -- (4,1);	
	\draw[line width= 0.07 cm, blue] (1.2,-1) arc (360:180:0.35 and -0.3);
	\draw[line width= 0.07 cm, blue] (0.3,-1) arc (360:270:0.3 and -1);
	\draw[line width=0.07 cm, blue] (1.4,-1).. controls (1.4,-0.1) and (1.4+0.8,0.1)  .. (1.4+0.8, 1);
	\draw[line width=0.07 cm, blue] (1.4+0.8+0.4, 1) arc (180:90:1.4 and -1);	
	\draw[line width=0.035cm] (0,1) -- (0.85,-1);
	\draw[line width=0.035cm] (0,1) -- (1.95,-1);
	\draw[line width=0.035cm] (0,1) -- (3,-1);
	\draw[line width=0.035cm] (0,1) -- (4,-1);
	\draw[line width=0.035cm] (1.4,1) -- (4,-1);
	\draw[line width=0.035cm] (3,1) -- (4,-1);		
	\draw[line width=0.035cm, red] (0,1) -- (4,1);
	\draw[line width=0.035cm, red] (0,-1) -- (4,-1);	
	\node [fill=black,inner sep=0.01pt,label=east:$S_k$]  at (4,-1) {};
	\node [fill=black,inner sep=1.75pt,label=west:$v_k$]  at (0,-1) {};
	
	\node [fill=black,inner sep=0.01pt,label=east:$S_{k+1}$]  at (4,1) {};
	\node [fill=black,inner sep=1.75pt,label=west:$v_{k+1}$]  at (0,1) {};
	
	\node at (0.4,-1.2) {\scalebox{0.65}{$1$}};
	\node at (1.3,-1.2) {\scalebox{0.65}{$2$}};
	\node at (2.4,-1.2) {\scalebox{0.65}{$3$}};
	\node at (3.5,-1.2) {\scalebox{0.65}{$4$}};
	\node at (0.85,1.2) {\scalebox{0.65}{$1$}};
	\node at (2.4,1.2) {\scalebox{0.65}{$2$}};
	\node at (3.55,1.2) {\scalebox{0.65}{$3$}};	
	\end{tikzpicture}
	\caption{The unique dense loop configuration on the standard triangulation of $A_k$ with intersection characteristics $\mathbf{n}_k=(1,1,0,0)$ and $\mathbf{n}_{k+1}=(0,1,0)$.}
	\label{fig:conflabel1}
\end{figure}
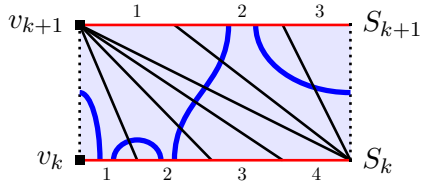

\noindent
To illustrate Lemma \ref{lem:IntCharDeDi}, Figure \ref{fig:conflabel1} depicts the unique dense loop configuration on the standard triangulation of $A_k$ 
with intersection characteristics $\mathbf{n}_k=(1,1,0,0)$ and $\mathbf{n}_{k+1}=(0,1,0)$,
while Figure \ref{fig:conflabel2} depicts the two possible dilute loop configurations on the standard triangulation of $A_k$ 
with intersection characteristics $\mathbf{n}_k=(1,1,1,1)$ and $\mathbf{n}_{k+1}=(0,1,1)$.

Intersection characteristics of loop configurations on triangulated annuli are readily extended to loop configurations on disk triangulations of 
height $m$ by specifying $\mathbf{n}_{k}$ for each $k=1,\ldots,m$.
Conversely, we say that the $m$ tuples $\mathbf{n}_k\in\{0,1\}^{l_k}$, $k=1,\ldots,m$, form {\em admissible intersection characteristics} for a disk triangulation
if there exists a corresponding loop configuration. Following Lemma \ref{lem:IntCharDeDi} and the Remark immediately preceding it, as well as (\ref{ell}),
such a set of tuples is always admissible in the dense case, and it is admissible in the dilute case if and only if $n_k\in2\mathbb{N}$, $k=1,\ldots,m$.
Moreover, given a triangulated disk with admissible intersection characteristics, the choice of loop configuration on any one of the triangulated annuli is 
independent of the choices made for the other triangulated annuli.
Thus, by Lemma \ref{lem:IntCharDeDi} and the proof thereof, we have the following two results.
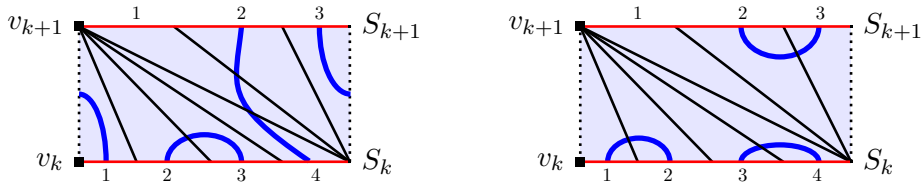
\begin{figure}[ht]
	\centering
    \begin{tikzpicture}[scale=0.9]
	\fill[opacity=0.25, lightblue] (0,-1) -- (0,1) -- (4,1) -- (4,-1) -- (0,-1);
	\draw[line width=0.035cm][dotted] (0,-1) -- (0,1);
	\draw[line width=0.035cm][dotted] (4,-1) -- (4,1);
	\draw[line width= 0.07 cm, blue] (0.4,-1) arc (360:270:0.4 and -1);
	\draw[line width= 0.07 cm, blue] (2.4,-1) arc (360:180:0.55 and -0.4);
	\draw[line width=0.07 cm, blue] (2.4+1,-0.99).. controls (2.1,-0.1) and (2.3,0.1)  .. (2.4, 1);
	\draw[line width= 0.07 cm, blue] (3.55,1) arc (180:90:0.46 and -1);
	\draw[line width=0.035cm] (0,1) -- (0.85,-1);
	\draw[line width=0.035cm] (0,1) -- (1.95,-1);
	\draw[line width=0.035cm] (0,1) -- (3,-1);
	\draw[line width=0.035cm] (0,1) -- (4,-1);
	\draw[line width=0.035cm] (1.4,1) -- (4,-1);
	\draw[line width=0.035cm] (3,1) -- (4,-1);
	\draw[line width=0.035cm, red] (0,1) -- (4,1);
	\draw[line width=0.035cm, red] (0,-1) -- (4,-1);
	\node [fill=black,inner sep=0.01pt,label=east:$S_k$]  at (4,-1) {};
	\node [fill=black,inner sep=1.75pt,label=west:$v_k$]  at (0,-1) {};
	
	\node [fill=black,inner sep=0.01pt,label=east:$S_{k+1}$]  at (4,1) {};
	\node [fill=black,inner sep=1.75pt,label=west:$v_{k+1}$]  at (0,1) {};
	\node at (0.4,-1.2) {\scalebox{0.65}{$1$}};
	\node at (1.3,-1.2) {\scalebox{0.65}{$2$}};
	\node at (2.4,-1.2) {\scalebox{0.65}{$3$}};
	\node at (3.5,-1.2) {\scalebox{0.65}{$4$}};
	\node at (0.85,1.2) {\scalebox{0.65}{$1$}};
	\node at (2.4,1.2) {\scalebox{0.65}{$2$}};
	\node at (3.55,1.2) {\scalebox{0.65}{$3$}};
	\end{tikzpicture}
	\qquad
	\begin{tikzpicture}[scale=0.9]
	\fill[opacity=0.25, lightblue] (0,-1) -- (0,1) -- (4,1) -- (4,-1) -- (0,-1);
	\draw[line width=0.035cm][dotted] (0,-1) -- (0,1);
	\draw[line width=0.035cm][dotted] (4,-1) -- (4,1);
	\draw[line width= 0.07 cm, blue] (1.325,-1) arc (360:180:0.46 and -0.35);
	\draw[line width= 0.07 cm, blue] (3.525,-1) arc (360:180:0.575 and -0.25);
	\draw[line width= 0.07 cm, blue] (3.525,1) arc (0:180:0.575 and -0.45);
	\draw[line width=0.035cm] (0,1) -- (0.85,-1);
	\draw[line width=0.035cm] (0,1) -- (1.95,-1);
	\draw[line width=0.035cm] (0,1) -- (3,-1);
	\draw[line width=0.035cm] (0,1) -- (4,-1);
	\draw[line width=0.035cm] (1.4,1) -- (4,-1);
	\draw[line width=0.035cm] (3,1) -- (4,-1);
	\draw[line width=0.035cm, red] (0,1) -- (4,1);
	\draw[line width=0.035cm, red] (0,-1) -- (4,-1);
	\node [fill=black,inner sep=0.01pt,label=east:$S_k$]  at (4,-1) {};
	\node [fill=black,inner sep=1.75pt,label=west:$v_k$]  at (0,-1) {};
	
	\node [fill=black,inner sep=0.01pt,label=east:$S_{k+1}$]  at (4,1) {};
	\node [fill=black,inner sep=1.75pt,label=west:$v_{k+1}$]  at (0,1) {};
	\node at (0.4,-1.2) {\scalebox{0.65}{$1$}};
	\node at (1.3,-1.2) {\scalebox{0.65}{$2$}};
	\node at (2.4,-1.2) {\scalebox{0.65}{$3$}};
	\node at (3.5,-1.2) {\scalebox{0.65}{$4$}};
	\node at (0.85,1.2) {\scalebox{0.65}{$1$}};
	\node at (2.4,1.2) {\scalebox{0.65}{$2$}};
	\node at (3.55,1.2) {\scalebox{0.65}{$3$}};
	\end{tikzpicture}
	\caption{The two possible dilute loop configurations on the standard triangulation of $A_k$ with intersection characteristics 
	$\mathbf{n}_k=(1,1,1,1)$ and $\mathbf{n}_{k+1}=(0,1,1)$.}
	\label{fig:conflabel2}
\end{figure}
\begin{lem}\label{lem:2m1}
A triangulated disk of height $m$ with admissible intersection characteristics $\mathbf{n}_k$, $k=1,\ldots,m$,
admits exactly one dense loop configuration or exactly \,$2^m$\! dilute loop configurations.
\end{lem}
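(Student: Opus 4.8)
The plan is to reduce the global count on the disk to a product of local counts, one per annulus, each of which is supplied by Lemma \ref{lem:IntCharDeDi}. A triangulated disk of height $m$ is the union of the $m$ annuli $A_0,\ldots,A_{m-1}$ glued along the cycles $S_1,\ldots,S_{m-1}$, where $A_0$ is the degenerate case with $l_0=0$, triangulated entirely by backward-directed triangles meeting at $x$. Prescribing a loop configuration on the disk amounts to choosing a decoration of every elementary triangle subject to the compatibility conditions along shared edges, and these data restrict to a loop configuration on each $A_k$ with boundary intersection characteristics $\mathbf{n}_k$ and $\mathbf{n}_{k+1}$, the tuple $\mathbf{n}_0$ being empty. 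So the first step is to record this decomposition and to note, via Lemma \ref{lem:3.1}, that it suffices to work with standard triangulations of the annuli.

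The crux is then to show that, once $\mathbf{n}_1,\ldots,\mathbf{n}_m$ are fixed, the configurations on the distinct annuli may be chosen independently. The key observation is that every time-like edge lies in the interior of a single annulus, since it joins a vertex of $S_k$ to a vertex of $S_{k+1}$ within $A_k$; consequently the only compatibility conditions relating adjacent annuli $A_{k-1}$ and $A_k$ are those along the space-like edges of the shared cycle $S_k$. Such a condition is satisfied exactly when the two triangles abutting a given edge of $S_k$ agree on whether that edge is intersected, that is, precisely when both configurations realise the same $\mathbf{n}_k$. As $\mathbf{n}_k$ is prescribed and common to $A_{k-1}$ and $A_k$, any admissible configuration on $A_{k-1}$ glues compatibly with any admissible configuration on $A_k$ (an intersected edge carries a loop segment through $S_k$, an unintersected edge carries none, in either case honouring the requirement that loops terminate only on the boundary). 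Hence the number of disk configurations is the product over $k=0,\ldots,m-1$ of the number of configurations on $A_k$ compatible with $\mathbf{n}_k$ and $\mathbf{n}_{k+1}$.

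It then remains to evaluate each factor. In the dense model, part (i) of Lemma \ref{lem:IntCharDeDi} gives exactly one configuration on each $A_k$, so the product equals $1$. In the dilute model, admissibility of the characteristics means $n_k\in 2\mathbb{N}_0$ for all $k$; together with $n_0=0$ this is equivalent to $n_k+n_{k+1}$ being even for every $k=0,\ldots,m-1$, in accordance with \eqref{ell}. Part (ii) of Lemma \ref{lem:IntCharDeDi} then yields exactly two configurations on each of the $m$ annuli, whence the product equals $2^m$.

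The step demanding the most care is the independence claim of the second paragraph, since a priori the binary local choice on $A_k$ in the dilute case (encoded, as in the proof of Lemma \ref{lem:IntCharDeDi}, by the intersectedness of the leftmost time-like edge of $A_k$) might be thought to propagate across $S_k$ into $A_{k-1}$. I would rule this out by emphasising that the propagation occurring in that proof runs only along time-like edges, which are never shared between two annuli, so the cross-cycle interface transmits only the already-fixed space-like datum $\mathbf{n}_k$. The degenerate innermost annulus $A_0$ also warrants a separate check: with $l_0=0$ it has no forward-directed triangles, but the same argument applies and it contributes the factor $1$ in the dense case and, using that $n_1$ is even, the factor $2$ in the dilute case.
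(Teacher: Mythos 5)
Your proposal is correct and follows essentially the same route as the paper, which derives Lemma \ref{lem:2m1} in the paragraph preceding its statement: decompose the disk into the annuli $A_0,\ldots,A_{m-1}$, observe that the only inter-annulus compatibility is the space-like matching already fixed by the prescribed $\mathbf{n}_k$ (so the per-annulus choices are independent), and multiply the counts from Lemma \ref{lem:IntCharDeDi}. Your additional care --- noting that the binary dilute choice propagates only along time-like edges, which are never shared between annuli, and checking the degenerate annulus $A_0$ separately --- merely makes explicit what the paper leaves implicit, and is sound.
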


\begin{cor}\label{dilute-dense}
\mbox{}
\begin{itemize}
\item[(i)] A triangulated cylinder with $N$ vertices admits \,$2^N$\! dense loop configurations or \,$2^N$\! dilute loop configurations.
\item[(ii)] A triangulated disk with $N$ vertices admits \,$2^{N-1}$\! dense loop configurations or \,$2^{N-1}$\! dilute loop configurations. Hence,
$$
 \big|{\mathcal L}_m^{de}(N)\big|=\big|{\mathcal L}_m^{di}(N)\big|,\qquad m,N\in\mathbb{N}.
$$
\end{itemize}
\end{cor}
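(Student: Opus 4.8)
The plan is to count, for a single fixed triangulation, the loop configurations it carries as a product of two factors: the number of \emph{admissible} intersection characteristics $(\mathbf{n}_1,\ldots,\mathbf{n}_m)$, and the number of configurations realising a given admissible choice. The second factor is supplied by Lemma~\ref{lem:2m1} in the disk case and, more generally, by the per-annulus count of Lemma~\ref{lem:IntCharDeDi} together with the independence of the annular choices noted just before Lemma~\ref{lem:2m1}; the first factor is a short combinatorial count using the admissibility criteria already established (all tuples admissible in the dense case, those with every $n_k$ even in the dilute case). Throughout I write $l_k=|S_k|$ and recall that $l_k\geq 1$ for $k\geq 1$.

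First I would dispose of the disk, part~(ii). A disk triangulation with $N$ vertices has $\sum_{k=1}^m l_k=N-1$ space-like edges, the central vertex $x$ contributing none. In the dense model every tuple in $\prod_{k=1}^m\{0,1\}^{l_k}$ is admissible, giving $\prod_{k=1}^m 2^{l_k}=2^{N-1}$ admissible characteristics, each carrying a unique dense configuration by Lemma~\ref{lem:2m1}; hence $2^{N-1}$ dense configurations. In the dilute model admissibility forces $n_k$ even for all $k$, and since exactly $2^{l_k-1}$ of the $2^{l_k}$ tuples in $\{0,1\}^{l_k}$ have even weight, there are $\prod_{k=1}^m 2^{l_k-1}=2^{N-1-m}$ admissible characteristics; by Lemma~\ref{lem:2m1} each carries $2^m$ dilute configurations, for a total of $2^{N-1-m}\cdot 2^m=2^{N-1}$. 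As this common value $2^{N-1}$ depends only on $N$ and not on the particular triangulation, summing over ${\mathcal C}_m(N)$ gives $|{\mathcal L}_m^{de}(N)|=|{\mathcal C}_m(N)|\,2^{N-1}=|{\mathcal L}_m^{di}(N)|$.

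Next I would treat the cylinder, part~(i), by rerunning the argument while tracking the two effects of deleting the central cap. A cylinder consists of the cycles $S_1,\ldots,S_m$ and the $m-1$ annuli $A_1,\ldots,A_{m-1}$, and since no vertex is central we now have $\sum_{k=1}^m l_k=N$. The dense tally is formally identical: all $2^N$ characteristics are admissible, each with one configuration, so $2^N$ in total. For the dilute tally, the parity relation \eqref{ell}, which rests only on compatibility along time-like edges and so holds for cylinders as well, still forces $n_1\equiv\cdots\equiv n_m\pmod 2$; but with $S_0$ absent the common parity is no longer pinned to even, so both parities are realised and the number of admissible characteristics doubles to $2\cdot\prod_{k=1}^m 2^{l_k-1}=2^{N-m+1}$. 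Because there is now one fewer annulus, each admissible characteristic carries only $2^{m-1}$ dilute configurations (the factor-$2$ choices on distinct annuli being independent, as they concern only internal time-like edges), giving $2^{N-m+1}\cdot 2^{m-1}=2^N$, in agreement with the dense count.

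The arithmetic is elementary, so the substance of the proof lies in two points, both of which I would state carefully. The first is the cancellation of the height $m$: relative to the dense model, the dilute model has $2^m$ (disk) or $2^{m-1}$ (cylinder) times as many configurations per admissible characteristic, but exactly that many times fewer admissible characteristics, and these effects balance to leave the dense value; in particular the cylinder answer $2^N$ is independent of $m$, as the statement requires. The point I expect to need the most care is the cylinder's relaxed parity: it is tempting to import the disk conclusion that the common parity is even, which would undercount by a factor of $2$ and spoil the cancellation against the one fewer annulus. The structural reason the disk and cylinder differ is the degenerate central annulus $A_0$, which has no forward-directed triangles and whose inner boundary collapses to the single vertex $x$; it nonetheless contributes the same factor $2$ to the dilute count as an ordinary annulus (the evenness of $n_1$ ensures both values of the first time-like edge's intersectedness close up into loops encircling $x$ rather than terminating there), and it is precisely this that promotes the disk exponent from $2^{m-1}$ to $2^m$ and thereby accounts for the contrast between $2^{N-1}$ and $2^N$.
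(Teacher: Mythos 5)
Your proposal is correct and takes essentially the same approach the paper intends: Corollary~\ref{dilute-dense} follows there directly from Lemma~\ref{lem:IntCharDeDi} and Lemma~\ref{lem:2m1} by exactly your counting of admissible intersection characteristics ($2^{l_k}$ per cycle in the dense case, $2^{l_k-1}$ per cycle within a fixed parity class in the dilute case) against the per-characteristic multiplicities ($1$ versus $2$ per annulus), and the paper's remark that the parity constraints ``compensate for the additional factor of $2$ arising from each layer'' is precisely your cancellation. Your treatment of the cylinder --- both parity classes allowed by \eqref{ell} once $S_0$ is absent, giving $2^{N-m+1}$ admissible characteristics, balanced against the $m-1$ annuli contributing $2^{m-1}$ configurations each --- correctly fills in the one step the paper leaves implicit, including the role of the degenerate central region in distinguishing $2^{N-1}$ from $2^{N}$.
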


\noindent The parity constraints of the dilute loop configurations compensate for the additional factor of $2$ arising from each layer.
The next proposition establishes the advertised correspondence between dense loop configurations and labelled trees. 
The correspondence is illustrated in Figure \ref{figure:trexDe}, where tree vertices labelled $1$ are indicated with blue circles.
\begin{figure}
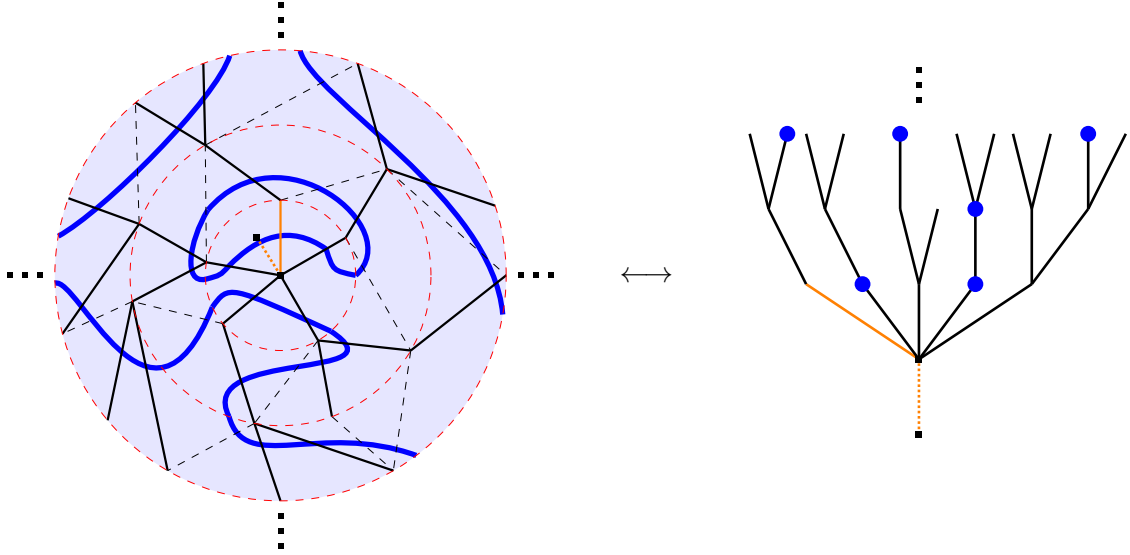

    \centering
    \begin{align*}
        \raisebox{-3.55cm}{\treexampleDashedDense}
        \qquad\longleftrightarrow\qquad
        \raisebox{-2.075cm}{\planartreeexampleDashedDense}
    \end{align*}
    \captionof{figure}{Dense loop configuration $L$ and the corresponding labelled planar tree $\tilde\psi(L)$. Vertices labelled $1$ are indicated with blue circles.}
\label{figure:trexDe}
\end{figure}
\begin{prop}
\label{denseloop-trees} 
For every $m\in\mathbb{N}_0$ and $N\in\mathbb{N}$, there is a bijective correspondence
$$
 \tilde\psi: {\mathcal L}_m^{de}(N)\to\widetilde{\mathcal T}_m(N)
$$
such that if $(T,\delta)=\tilde\psi(L)$ then $T=\psi(C)$, where $C$ is the triangulation underlying $L$. 
Moreover, $|\delta|=s(L)/2$.
\end{prop}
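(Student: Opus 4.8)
The plan is to realise $\tilde\psi$ as the composition of the pure-CDT bijection $\psi$ with the rule that records, for each space-like edge of the underlying triangulation, whether or not it is crossed by a loop segment. First I would fix, for each $k=1,\ldots,m$, the natural identification between the $l_k$ space-like edges of $S_k$ and the $l_k$ vertices of $V_k(T)$. Since $\psi$ preserves graph distance from $x$ and $x_0$ is a leaf attached to $x$, one has $d_T(x_0,v)=1+d_T(x,v)$, so $V_k(T)=\{v: d_T(x,v)=k\}$ is exactly the set of vertices of $S_k$; as $S_k$ is a cycle, its $l_k$ vertices are in bijection with its $l_k$ space-like edges, both labelled clockwise from $v_k=v_{k,1}$. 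Under this dictionary the intersection characteristics $\mathbf n_k=(n_{k,1},\ldots,n_{k,l_k})\in\{0,1\}^{l_k}$ of a loop configuration $L$ become precisely a labelling $\delta: V(T)\to\{0,1\}$ via $\delta(v_{k,i})=n_{k,i}$.

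With this in place, I would define $\tilde\psi(L):=(T,\delta)$, where $C$ is the triangulation underlying $L$, $T=\psi(C)$, and $\delta$ is as above. Because $\delta$ is then an arbitrary $\{0,1\}$-valued function on $V(T)$, the pair $(T,\delta)$ lies in $\widetilde{\mathcal T}_m(N)$ whenever $C\in{\mathcal C}_m(N)$, so $\tilde\psi$ indeed maps ${\mathcal L}_m^{de}(N)$ into $\widetilde{\mathcal T}_m(N)$, and the required property $T=\psi(C)$ holds by construction.

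Bijectivity is where Lemma \ref{lem:2m1} does the real work. Consider the map $L\mapsto\big(C,(\mathbf n_k)_{k=1}^m\big)$ sending a dense loop configuration to its underlying triangulation together with its intersection characteristics. Every $L$ determines such a pair, and conversely the dense part of Lemma \ref{lem:2m1} (all intersection characteristics being admissible in the dense case) guarantees that each pair $\big(C,(\mathbf n_k)_k\big)$ with $C\in{\mathcal C}_m(N)$ and $\mathbf n_k\in\{0,1\}^{l_k}$ is realised by exactly one $L$; hence this map is a bijection. Composing it with the bijection $\psi$ on the triangulation factor and with the bijective identification $(\mathbf n_k)_k\leftrightarrow\delta$ on the characteristics factor yields the asserted bijection $\tilde\psi$ onto $\widetilde{\mathcal T}_m(N)$.

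Finally, I would read off the intersection count. By definition $|\delta|=\sum_{k=1}^m\sum_{i=1}^{l_k}\delta(v_{k,i})=\sum_{k=1}^m n_k$, and by the Remark preceding Lemma \ref{lem:IntCharDeDi} one has $s_k(L)=2n_k$ in the dense model (each intersected space-like edge carrying two arc endpoints). Summing over $k$ and using $s_0(L)=0$ gives $s(L)=\sum_{k=0}^m s_k(L)=2\sum_{k=1}^m n_k=2|\delta|$, that is $|\delta|=s(L)/2$. The only genuinely delicate point is the edge–vertex identification of the first paragraph: one must verify that the clockwise labelling of space-like edges is matched consistently, across all cycles, with the clockwise labelling of $V_k(T)$, so that $\delta$ is well defined and the identification is a clean bijection. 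Once this bookkeeping is settled, the substantive content is entirely furnished by Lemma \ref{lem:2m1} and the Remark, and the remainder is formal.
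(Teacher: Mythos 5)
Your proposal is correct and takes essentially the same route as the paper's proof: both transport the intersection characteristics $\mathbf n_k$ to the vertex labels $\bm{\delta}_k$ via the identification of each space-like edge with the vertex to its left, invoke Lemma \ref{lem:2m1} (with all characteristics admissible in the dense case) for the bijection $L\leftrightarrow(C,(\mathbf n_k)_k)$, and deduce $|\delta|=s(L)/2$ from $s_k(L)=2n_k$. The edge--vertex bookkeeping you flag as the delicate point is precisely what the paper's convention of labelling the vertex to the left (viewed outwardly) of an intersected space-like edge settles.
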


\begin{proof}
The map $\delta$ in the image $\tilde\psi(L)=(T,\delta)$ labels with a $1$ any vertex that appears to the left
(viewed outwardly) of a space-like edge intersected by two loop arcs, and any other vertex in $V(T)$ with a $0$.
The vertices $x_0$ and $x$ are not labelled. As $(T,\delta)\in\widetilde{\mathcal T}_m(N)$,
this defines the correspondence $\tilde\psi$.
The bijectivity of $\tilde\psi$ is a consequence of the following facts: $\psi$ is bijective; 
disregarding $x_0$, the vertices of $L\in{\mathcal L}_m^{de}(N)$ and $T=\psi(C)$ coincide;
by Lemma \ref{lem:2m1}, $L$ is uniquely described by the intersection characteristics $\mathbf{n}_k$, $k=1,\ldots,m$;
$(T,\delta)$ is uniquely described by the labelling characteristics $\bm{\delta}_k$, $k=1,\ldots,m$; and
$\bm{\delta}_k=\mathbf{n}_k$ for $k=1,\ldots,m$.
The relation $|\delta|=s(L)/2$ readily follows.
\end{proof}

\noindent
The following proposition is the dilute counterpart to Proposition \ref{denseloop-trees}, see Figure \ref{figure:trexDi}.
\begin{prop}\label{diluteloop-trees} 
For every $m\in\mathbb{N}_0$ and $N\in\mathbb{N}$, there is a $2^m$ to $1$ correspondence
$$
 \hat\psi: {\mathcal L}_m^{di}(N)\to\widetilde{\mathcal T}_m^{ev}(N)
$$
such that if $(T,\delta)=\hat\psi(L)$ then $T=\psi(C)$, where $C$ is the triangulation underlying $L$. 
Moreover, $|\delta|=s(L)$.
\end{prop}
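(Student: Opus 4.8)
The plan is to mirror the construction of $\tilde\psi$ from Proposition \ref{denseloop-trees}, the essential difference being that in the dilute case a loop configuration is no longer determined by its intersection characteristics, so the resulting map will be $2^m$ to $1$ rather than bijective. Given $L\in{\mathcal L}_m^{di}(N)$ with underlying triangulation $C$, I first set $T:=\psi(C)\in{\mathcal T}_m(N)$, which is forced by the required compatibility $T=\psi(C)$. I then define $\delta$ by transcribing the intersection characteristics of $L$ onto the tree vertices: put $\delta(v_{k,i}):=n_{k,i}$, equivalently, label with a $1$ each vertex lying (viewed outwardly) immediately to the left of an intersected space-like edge, and label every other vertex of $V(T)$ with $0$. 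With this definition $\bm{\delta}_k=\mathbf{n}_k$, and hence $\delta_k=n_k$ for every $k$.

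The first thing to check is that $(T,\delta)$ actually lands in $\widetilde{\mathcal T}_m^{ev}(N)$. Since $L$ is a dilute configuration on a disk, the parity relation \eqref{ell} together with the admissibility criterion established in Lemma \ref{lem:IntCharDeDi}(ii) forces $n_k\in2\mathbb{N}_0$ for each $k$; as $\delta_k=n_k$, the even constraint defining $\widetilde{\mathcal T}_m^{ev}(N)$ is satisfied, so $\hat\psi$ is well defined. The length relation then follows immediately from the Remark preceding Lemma \ref{lem:IntCharDeDi}: using $s_k(L)=n_k$ in the dilute model, $|\delta|=\sum_{k=1}^m\delta_k=\sum_{k=1}^m n_k=\sum_{k=1}^m s_k(L)=s(L)$.

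For the fibre count, I would argue in both directions. The map is onto: given $(T,\delta)\in\widetilde{\mathcal T}_m^{ev}(N)$, set $C:=\psi^{-1}(T)$ and read off intersection characteristics $\mathbf{n}_k:=\bm{\delta}_k$; the evenness $\delta_k=n_k\in2\mathbb{N}_0$ is exactly the admissibility condition for the dilute model, so by Lemma \ref{lem:2m1} there exist configurations with these characteristics, each mapping to $(T,\delta)$. Conversely, the fibre over $(T,\delta)$ consists precisely of those $L$ whose underlying triangulation is $C=\psi^{-1}(T)$ and whose intersection characteristics are $\mathbf{n}_k=\bm{\delta}_k$, since $\hat\psi$ retains no information about $L$ beyond $C$ and the $\mathbf{n}_k$. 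By Lemma \ref{lem:2m1} there are exactly $2^m$ such $L$, so every fibre has cardinality $2^m$ and $\hat\psi$ is $2^m$ to $1$.

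The one point requiring genuine care, and the place where the dilute case departs from the dense one, is the claim that $\hat\psi$ depends on $L$ only through $C$ and the intersection characteristics $\mathbf{n}_k$; I expect this to be the main obstacle. I would justify it by invoking the analysis in the proof of Lemma \ref{lem:IntCharDeDi}, where the $2^m$ distinct dilute configurations sharing a fixed admissible set of intersection characteristics were seen to differ only through an independent binary choice (the intersectedness of the left time-like edge of the first forward-directed triangle) in each of the $m$ annuli. Since $\delta$ is built solely from the $\mathbf{n}_k$, these $2^m$ configurations are indistinguishable under $\hat\psi$; this simultaneously confirms that the fibres are nonempty of the correct size and that configurations with differing characteristics are never conflated.
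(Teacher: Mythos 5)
Your proposal is correct and follows essentially the same route as the paper: you define $\delta$ by transcribing the intersection characteristics onto the tree vertices (so that $\bm{\delta}_k=\mathbf{n}_k$), verify membership in $\widetilde{\mathcal T}_m^{ev}(N)$ via the parity constraint, obtain $|\delta|=s(L)$ from $s_k(L)=n_k$, and deduce the $2^m$ to $1$ property from Lemma \ref{lem:2m1} exactly as the paper does. Your explicit checks of well-definedness and surjectivity are slightly more detailed than the paper's proof but introduce no new ideas.
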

\begin{proof}
The map $\delta$ in the image $\hat\psi(L)=(T,\delta)$ labels with a $1$ any vertex that appears to the left
(viewed outwardly) of a space-like edge intersected by an arc, and any other vertex in $V(T)$ with a $0$.
The vertices $x_0$ and $x$ are not labelled. As $(T,\delta)\in\widetilde{\mathcal T}_m^{ev}(N)$,
this defines the correspondence $\hat\psi$.
The $2^m$ to $1$ property of $\hat\psi$ is a consequence of the following facts: $\psi$ is bijective; 
disregarding $x_0$, the vertices of $L\in{\mathcal L}_m^{di}(N)$ and $T=\psi(C)$ coincide;
by Lemma \ref{lem:2m1}, $L$ is one of $2^m$ possible loop configurations admitted by the admissible intersection characteristics $\mathbf{n}_k$, $k=1,\ldots,m$;
$(T,\delta)$ is uniquely described by the labelling characteristics $\bm{\delta}_k$, $k=1,\ldots,m$; and
$\bm{\delta}_k=\mathbf{n}_k$ for $k=1,\ldots,m$.
The relation $|\delta|=s(L)$ readily follows.
\end{proof}

\begin{figure}
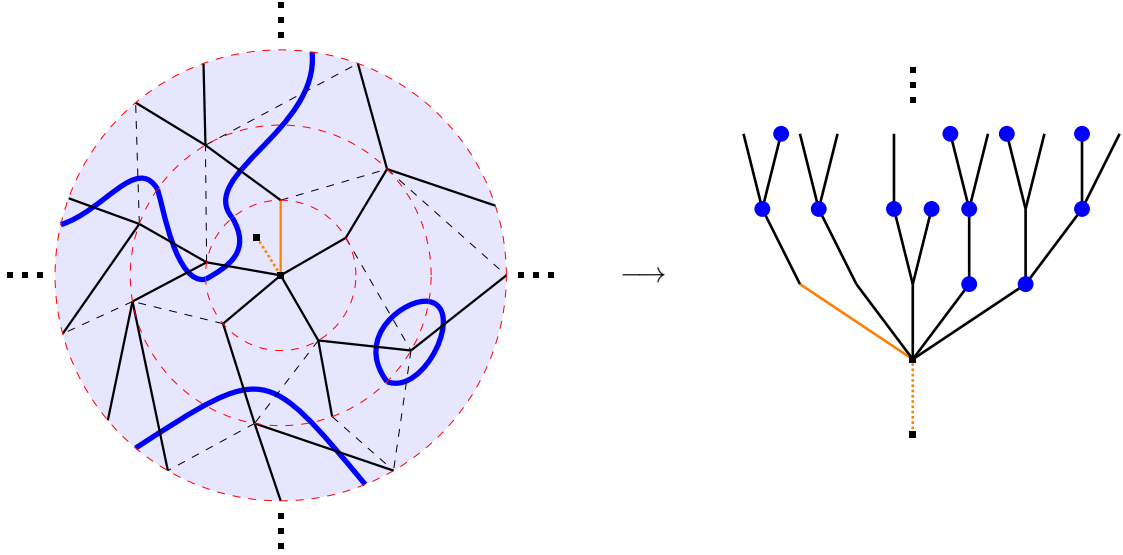

    \centering
    \begin{align*}
        \raisebox{-3.55cm}{\treexampleDashedDilute} 
        \qquad\longrightarrow\qquad \raisebox{-2.075cm}{\planartreeexampleDashedDilute}
    \end{align*}
    \captionof{figure}{Dilute loop configuration $L$ and the corresponding labelled planar tree $\hat\psi(L)$. Vertices labelled $1$ are indicated with blue circles.}
\label{figure:trexDi}
\end{figure}

\subsection{Partition functions}\label{sub:PF}

Following Propositions \ref{denseloop-trees} and \ref{diluteloop-trees}, we are interested in the tree ensembles
$\widetilde{\mathcal T}_m$ and $\widetilde{\mathcal T}_m^{ev}$, $m\in\mathbb{N}_0$. 
The corresponding partition functions are defined as
\begin{equation}
\label{LabTreePart1}
 W(g,\alpha):= \sum_{m=0}^{\infty} W_m(g,\alpha),\qquad
 W_m(g,\alpha):= \sum_{(T,\delta)\in\widetilde{\mathcal T}_m} g^{\TT-1}\alpha^{|\delta|},
\end{equation}

\begin{equation}
\label{LabTreePart2}
 W^{ev}(g,\alpha):= \sum_{m=0}^{\infty} W_m^{ev}(g,\alpha),\qquad
 W_m^{ev}(g,\alpha):= \sum_{(T,\delta)\in\widetilde{\mathcal T}_m^{ev}} g^{\TT-1}\alpha^{|\delta|},
\end{equation}

\noindent where $\TT$ denotes the number of edges in $T$.
The subtraction of $1$ from $\TT$ in the exponent of $g$ disregards the contribution from the edge $\{x_0,x\}$,
and we note that $W_0(g,\alpha) =W_0^{ev}(g,\alpha)=1$.
It follows from Proposition \ref{denseloop-trees} that 
\begin{equation*}
 Z_m^{de}(g,\alpha) = W_m(g,\alpha^2),
\end{equation*}
and consequently
\begin{equation}\label{LoopTree1}
 Z^{de}(g,\alpha) = W(g,\alpha^2).
\end{equation}
It likewise follows from Proposition \ref{diluteloop-trees} that
\begin{equation}\label{LoopTree2m}
   Z_m^{di}(g,\alpha) = 2^m W_m^{ev}(g,\alpha),
\end{equation}
and consequently
\begin{equation*}
   Z^{di}(g,\alpha) =\sum_{m=0}^\infty Z_m^{di}(g,\alpha)=\sum_{m=0}^\infty 2^m W_m^{ev}(g,\alpha).
\end{equation*}

\section{Tree partition function analysis}
\label{sec:4}

We now turn to an analysis of the tree partition functions $W(g,\alpha)$ and $W^{ev}(g,\alpha)$, 
with a view to determine the critical behaviour
of the dense and dilute loop models.
Regarding the dilute case, we consider the {\em height-coupled partition function} 
\begin{align}\label{equ:HCpfk}
    W^{ev} (g,\alpha, k) := \sum_{m=0}^{\infty}k^m W_m^{ev}(g,\alpha),
\end{align}
where $k > 0$ is an arbitrary height coupling, such that $Z^{di}(g,\alpha)$ is recovered for $k=2$:
$$
 Z^{di}(g,\alpha)=W^{ev}(g,\alpha, 2).
$$
In Section \ref{subsec:PT}, we analyse the partition 
functions $W(g,\alpha),~ W^{ev}(g,\alpha)$ and $W^{ev}(g,\alpha, k)$ for $\alpha=0$ corresponding to unlabelled planar trees, while in Section \ref{Sec:Lab} we consider the case where $\alpha \in (0,1]$.

\subsection{Planar trees}\label{subsec:PT}

Let $\mathcal{T}$ denote the set of planar trees with root of degree $1$ (hence of height at least $1$), that is,
$$
 {\mathcal T}:= \bigcup_{m=0}^\infty{\mathcal T}_m.
$$
The partition function, $W(g)$, for trees in $\mathcal{T}$, again disregarding the
contribution from the single edge emanating from the root, is defined by
\begin{align}\label{equ:zeroCoupWz}
    W(g):=\sum_{T\in{\mathcal T}} g^{\TT-1}=\sum_{m=0}^\infty W_m(g),\qquad
    W_m(g):=\sum_{T\in{\mathcal T}_m} g^{\TT-1},
\end{align}
where $W_m(g)$ is the similar partition function restricted to trees of height $m+1$.
It readily follows that
\begin{align*} 
    W(g,0)=W^{ev}(g,0)=W^{ev}(g,0,1)=W(g).
\end{align*}
As $W(g)$ is the generating function of rooted planar trees up to a factor of $\frac{1}{g}$,
it follows (see e.g.~\cite{drmota2009random})
that
\begin{equation} \label{gW2}
 W(g)=\frac{1}{1-gW(g)},
\end{equation}
from which one obtains
\begin{equation}\label{W}
 W(g) = \frac{1-\sqrt{1-4g}}{2g},
\end{equation}
which is analytic on the disk 
$$
 {\mathbb D} = \big\{g\in\mathbb C\,|\, |g|<\tfrac{1}{4}\big\},
$$ 
with a square-root singularity at $g_c=\frac14$. We note that $W(0)=1$ and $W(\frac{1}{4})=2$.

Specialising the labelling parameter in \eqref{equ:HCpfk} to $\alpha=0$, the expression reduces to a height-coupled partition function
generalising the tree partition function $W(g)$:
$$
    W^{ev}(g,0, k) = \sum_{m=0}^{\infty}k^m W_m(g),\qquad k>0.
$$

For $m\in\mathbb{N}_0$, let $X_m(g)$ be the partition function of trees in $\mathcal{T}$ with height at most $m+1$.
It follows that
$$
 W_m(g) = X_m(g)-X_{m-1}(g),
$$
where $X_{-1}(g)\equiv0$, and that
\begin{equation}\label{XX}
 X_m(g)=\frac{1}{1-gX_{m-1}(g)},\qquad m\geq0.
\end{equation}
This recursion relation is solved by
$$
 X_m(g)=\frac{U_m\big(\tfrac{1}{2\sqrt{g}}\big)}{\sqrt{g}\,U_{m+1}\big(\tfrac{1}{2\sqrt{g}}\big)},
$$
where $U_n(x)$ is the $n$'th Chebyshev polynomial of the second kind (with $U_{-1}(x)\equiv0$). It follows that
$$
 W_m(g)=\frac{1}{\sqrt{g}\,U_m\big(\tfrac{1}{2\sqrt{g}}\big)U_{m+1}\big(\tfrac{1}{2\sqrt{g}}\big)},
$$
and since
$$
 U_n(x)=2^n\prod_{j=1}^n\Big(x-\cos\frac{j\pi}{n+1}\Big),\qquad n\in\mathbb{N}_0,
$$
$X_m(g)$ is analytic on the disk
\begin{equation*} 
 \mathbb{D}_m = \{g\in\mathbb C\,|\,|g|< g_m \},\qquad
 g_m = \frac14\Big(1 + \tan^2\!\frac{\pi}{m+2}\Big),
\end{equation*}
with a simple pole at $g_m$ (noting that $g_0=\infty$).

Because the coefficients defining $X_m(g)$ as a power series in $g$ are nonnegative, it follows that $X_m(g)$ is divergent 
for $g>g_m$. Since $g_m\to \frac 14$ as $m\to\infty$, this implies that $W(g,0, k)$ is divergent for $g>\frac 14$, so the radius of 
convergence $g_c(k)$ of $W(g,0, k)$ for fixed $k$ is at most $\frac14$ for any $k>0$. 
The following lemma is a step towards the determination of 
$g_c(k)$ for all $k>0$, completed in Proposition \ref{critk} below.

\begin{lem}\label{TreeCrit}
For $g\in(0,\frac14)$ and $m\in\mathbb{N}$, we have 
$$
    \frac{\phi^2(W(g)-1)}{W(g)}\big(W(g)-1\big)^{m} < W_m(g) <\big(W(g)-1\big)^{m},
$$
where $\phi$ is Euler's function.
\end{lem}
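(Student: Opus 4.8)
The plan is to derive a closed-form expression for $W_m(g)$ in terms of the single quantity $q:=W(g)-1$ and then read off both bounds from elementary estimates on $q\in(0,1)$. Starting from the formula $W_m(g)=\big(\sqrt g\,U_m(\tfrac{1}{2\sqrt g})U_{m+1}(\tfrac{1}{2\sqrt g})\big)^{-1}$ recorded above, I would substitute $\tfrac{1}{2\sqrt g}=\cosh\theta$, which is legitimate because $g\in(0,\tfrac14)$ forces $\tfrac{1}{2\sqrt g}>1$ and hence determines a unique $\theta>0$. Using $U_n(\cosh\theta)=\sinh\big((n+1)\theta\big)/\sinh\theta$, one gets $W_m(g)=2\cosh\theta\,\sinh^2\theta\,/\,\big(\sinh((m+1)\theta)\sinh((m+2)\theta)\big)$. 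The key bookkeeping step is to express $W(g)$ through the same $\theta$: from $\sqrt{1-4g}=\tanh\theta$ one finds $W(g)=2\cosh\theta\,e^{-\theta}=1+e^{-2\theta}$, so that $q=e^{-2\theta}\in(0,1)$.

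Writing $\sinh(n\theta)=\tfrac12 e^{n\theta}(1-q^{n})$ and $\cosh\theta=\tfrac12 e^{\theta}(1+q)$ and cancelling the exponential prefactors, I would obtain the clean identity
\begin{equation*}
 W_m(g)=q^m\,\frac{(1+q)(1-q)^2}{(1-q^{m+1})(1-q^{m+2})}
       =q^m\,\frac{(1-q)(1-q^2)}{(1-q^{m+1})(1-q^{m+2})},
\end{equation*}
which I would sanity-check at $m=1$ against the direct evaluation $W_1(g)=g/(1-g)$. (Alternatively, the same identity follows without Chebyshev polynomials by proving $X_m(g)=(1+q)(1-q^{m+1})/(1-q^{m+2})$ by induction from the recursion \eqref{XX}, and then forming $W_m=X_m-X_{m-1}$.)

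Both inequalities then reduce to the fact that, for $q\in(0,1)$, the map $n\mapsto 1-q^n$ is increasing. For the upper bound, $m\ge1$ gives $1-q^{m+1}>1-q$ and $1-q^{m+2}>1-q^2$, so the displayed ratio is strictly less than $1$, whence $W_m(g)<q^m=(W(g)-1)^m$. For the lower bound, I would instead discard the denominator using $(1-q^{m+1})(1-q^{m+2})<1$ to get $W_m(g)>q^m(1-q)(1-q^2)=q^m(1-q^2)^2/(1+q)$, and then close the (deliberately loose) gap to the stated estimate via the elementary fact that $\phi(q)=\prod_{k\ge1}(1-q^k)<1-q^2$, obtained by isolating the $k=2$ factor and bounding the remaining factors by $1$. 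Combining gives $W_m(g)>\phi(q)^2(1+q)^{-1}q^m=\phi(W(g)-1)^2\,W(g)^{-1}(W(g)-1)^m$.

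The only genuinely delicate point is the change of variables and the identification $W(g)-1=e^{-2\theta}=q$; once the closed form is in hand, the two bounds are purely a matter of the monotone behaviour of $1-q^n$ in $n$ together with the trivial product estimate for Euler's function, and all inequalities are strict for $m\ge1$. I expect no real obstacle beyond carefully tracking the cancellation of the $e^{\theta}$-powers between numerator and denominator when passing to the $q$-form.
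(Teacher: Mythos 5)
Your proof is correct, but it takes a genuinely different route from the paper's. Although the paper records the Chebyshev solution for $X_m(g)$ just before the lemma, its proof of the lemma never uses a closed form for $W_m$: it iterates the recursion \eqref{XX} to obtain the product representation $W_m=g^mW^{2m-2}X_m\prod_{i=1}^{m-1}(X_i/W)^2$, uses $gW^2=W-1$ together with $1<X_i<W$ for the upper bound, and for the lower bound controls the deficit via $W-X_i<(W-1)^i$, so that $\prod_{i\geq1}\big(1-(W-1)^i\big)=\phi(W-1)$ appears as the natural uniform bound on the product --- that is where Euler's function genuinely lives in the paper's argument. You instead derive the exact identity $W_m(g)=q^m\,(1-q)(1-q^2)/\big((1-q^{m+1})(1-q^{m+2})\big)$ with $q=W(g)-1=e^{-2\theta}$, equivalently $g=q/(1+q)^2$, and I have checked all the steps: the substitution $\tfrac{1}{2\sqrt g}=\cosh\theta$ with $U_n(\cosh\theta)=\sinh((n+1)\theta)/\sinh\theta$, the identification $W(g)=2\cosh\theta\,e^{-\theta}=1+e^{-2\theta}$, the inductive alternative $X_m=(1+q)(1-q^{m+1})/(1-q^{m+2})$, and the sanity check $W_1=g/(1-g)=q/(1+q+q^2)$ all hold. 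From the closed form, both bounds reduce to monotonicity of $n\mapsto1-q^n$, and your argument even shows the lemma's lower-bound constant is not sharp: you obtain $W_m>q^m(1-q)(1-q^2)$, with $\phi$ entering only through the deliberate weakening $\phi(q)<1-q^2$ (and $W=1+q$ matching the stated prefactor exactly). As for what each approach buys: the paper's route works directly from the recursion and so generalises to settings where no closed form is available, while yours yields the exact large-$m$ asymptotics $W_m\sim q^m(1-q)(1-q^2)$ as a bonus and makes the strictness of both inequalities completely transparent.
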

\begin{proof} 
From the recursion relation \eqref{XX}, we get 
$$
	W_m = X_m - X_{m-1}  
	= \frac{g(X_{m-1}-X_{m-2})}{(1-gX_{m-1})(1-gX_{m-2})}
	=g(X_{m-1}-X_{m-2})X_{m}X_{m-1},
$$
and iteratively applying this, we obtain (for $m\geq1$)
\begin{equation}\label{eq:FinalBoundExpr} 
	W_m=g^mW^{2m-2}X_m\prod_{i=1}^{m-1}\Big(\frac{X_i}{W}\Big)^2.
\end{equation}
Since $X_i$ is increasing and converges to $W$ as $i\to\infty$, it follows that 
$1<X_i<W$ for $i\geq1$, so $\frac{X_i}{W} <1$, allowing us to use (\ref{gW2}) to conclude that
$$
   W_m 
   <\big(W-1\big)^{m}.
$$
This establishes the upper bound.

For the lower bound, we have
$$
 W-X_m=g(W-X_{m-1})WX_m
 =(W-1)g^mW^{2m}\prod_{i=1}^{m}\frac{X_i}{W}
 <(W-1)^m,
$$
where we have used  (\ref{gW2}) and that $1<W<2$.
It follows that
$$
 \frac{X_i}{W}= 1- \frac{W-X_i}{W}>1- \frac{(W-1)^{i}}{W} >1-\big(W-1\big)^i.
$$
Since $0<W-1<1$, we have
$$
 \prod_{i=1}^m\big(1- (W-1)^i\big)>\prod_{i=1}^\infty \big(1-(W-1)^i\big) = \phi(W-1).
$$
Returning to (\ref{eq:FinalBoundExpr}), we thus find that
$$
 W_m=\frac{g^mW^{2m}}{X_m}\prod_{i=1}^{m}\Big(\frac{X_i}{W}\Big)^2>\frac{\phi^2(W-1)}{W}(W-1)^{m},
$$
thereby establishing the lower bound and completing the proof.
\end{proof}

\noindent The following corollary immediately follows.
\begin{cor}\label{kTreeCrit} 
For $k>0$ and $g\in(0,\frac14)$, we have
$$
    \frac{\phi^2(W(g)-1)}{W(g)}\sum_{m=0}^{\infty}\big(k(W(g)-1)\big)^m 
    \leq W(g,0,k)
    \leq \sum_{m=0}^{\infty}\big(k(W(g)-1)\big)^m.
$$
\end{cor}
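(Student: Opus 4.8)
The plan is to derive the corollary directly from Lemma \ref{TreeCrit} by multiplying its term-wise bounds by $k^m$ and summing over $m$. Recall that at $\alpha=0$ the only surviving labelling is $\delta\equiv0$, so the parity constraint is vacuous and $W(g,0,k)=W^{ev}(g,0,k)=\sum_{m=0}^\infty k^m W_m(g)$, with $W_0(g)=1$. For every $m\in\mathbb{N}$ and $g\in(0,\tfrac14)$, Lemma \ref{TreeCrit} gives $\tfrac{\phi^2(W(g)-1)}{W(g)}(W(g)-1)^m<W_m(g)<(W(g)-1)^m$; since $k^m>0$, multiplying through preserves the inequalities and yields $\tfrac{\phi^2(W(g)-1)}{W(g)}\big(k(W(g)-1)\big)^m<k^m W_m(g)<\big(k(W(g)-1)\big)^m$.

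First I would dispose of the $m=0$ term, which lies outside the range $m\in\mathbb{N}$ covered by Lemma \ref{TreeCrit}. Here $k^0W_0(g)=1$ coincides exactly with the $m=0$ summand $\big(k(W(g)-1)\big)^0=1$ of the upper geometric series, so the upper bound holds with equality at $m=0$. For the lower bound one must check that the prefactor does not overshoot: writing $w:=W(g)$, the identity \eqref{W} together with $W(0)=1$, $W(\tfrac14)=2$ and the nonnegativity of the power-series coefficients of $W$ (hence its monotonicity) gives $1<w<2$, so $0<w-1<1$; then $\phi(w-1)=\prod_{i\geq1}(1-(w-1)^i)\in(0,1)$, whence $\tfrac{\phi^2(w-1)}{w}<1$. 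Consequently $\tfrac{\phi^2(w-1)}{w}\leq1=k^0W_0(g)$, so the lower bound is also valid at $m=0$ (now only as a non-strict inequality).

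Having established the term-wise inequalities for all $m\geq0$, I would sum over $m$. Factoring the $m$-independent constant $\tfrac{\phi^2(W(g)-1)}{W(g)}$ out of the lower estimates gives exactly the claimed lower bound $\tfrac{\phi^2(W(g)-1)}{W(g)}\sum_{m=0}^\infty\big(k(W(g)-1)\big)^m$, while summing the upper estimates gives $\sum_{m=0}^\infty\big(k(W(g)-1)\big)^m$. Summing the strict term-wise inequalities produces non-strict inequalities for the series, which is why the corollary is stated with $\leq$; the equality contributed by the $m=0$ upper bound is the decisive reason non-strictness cannot be avoided. The statement is understood in the extended reals: when $k(W(g)-1)\geq1$ both geometric series diverge, and the inequalities persist as limits of the valid partial-sum inequalities, forcing $W(g,0,k)=+\infty$ as well.

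The only genuinely delicate point — the \emph{hard part}, such as it is — is the treatment of the $m=0$ term, since Lemma \ref{TreeCrit} is stated only for $m\in\mathbb{N}$; everything else is a monotone multiplication by $k^m>0$ followed by summation. The verification that $\tfrac{\phi^2(W(g)-1)}{W(g)}\leq1$ on $(0,\tfrac14)$ is precisely what guarantees the lower bound does not fail at $m=0$.
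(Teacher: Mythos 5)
Your proof is correct and follows essentially the same route as the paper, which simply states that the corollary follows immediately from Lemma \ref{TreeCrit} by multiplying by $k^m$ and summing over $m$. Your extra care with the $m=0$ term (checking $W_0(g)=1$ matches the upper series and that $\tfrac{\phi^2(W(g)-1)}{W(g)}\leq 1$ for the lower bound) is a sound elaboration of a detail the paper leaves implicit.
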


\begin{prop}
\label{critk} 
The critical coupling for $W(g,0,k)$ is given by
$$
g_c(k) = \begin{cases} \frac14,\quad &k\in(0,1],\\[.2cm] 
 \frac{k}{(k+1)^2},\quad &k\in(1,\infty).\end{cases}
$$
\end{prop}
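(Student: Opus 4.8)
The plan is to use Corollary \ref{kTreeCrit} to reduce the determination of $g_c(k)$ to a question about a single geometric series. For fixed $g\in(0,\tfrac14)$, the lower-bound prefactor $\tfrac{\phi^2(W(g)-1)}{W(g)}$ is a strictly positive, finite constant, since $W(g)>1$ and $W(g)-1\in(0,1)$ makes Euler's function positive there. Hence the sandwich bounds show that $W(g,0,k)$ is finite if and only if the geometric series $\sum_{m\ge0}\big(k(W(g)-1)\big)^m$ converges, that is, if and only if $k(W(g)-1)<1$. So the entire problem is governed by the threshold $W(g)=1+\tfrac1k$.

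First I would record the monotonicity input. From \eqref{W}, $W(g)$ is a power series in $g$ with nonnegative coefficients, hence continuous and strictly increasing on $(0,\tfrac14)$, running from $W(0)=1$ to $W(\tfrac14)=2$. Thus $g\mapsto k(W(g)-1)$ is continuous and strictly increasing from $0$ to $k$ as $g$ ranges over $(0,\tfrac14)$. Moreover, $W(g,0,k)=\sum_m k^m W_m(g)$ is itself a power series in $g$ with nonnegative coefficients, so by Pringsheim's theorem its radius of convergence equals $\sup\{g>0 : W(g,0,k)<\infty\}$; it therefore suffices to analyse real positive $g$.

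For $k\in(0,1]$ I observe that $k(W(g)-1)<k\cdot 1\le 1$ for all $g\in(0,\tfrac14)$, so the geometric series converges throughout $(0,\tfrac14)$ and the upper bound of Corollary \ref{kTreeCrit} gives $W(g,0,k)<\infty$ there. Combined with the divergence for $g>\tfrac14$ established earlier (via $g_m\to\tfrac14$), this yields $g_c(k)=\tfrac14$. For $k>1$, monotonicity gives a unique $g^*\in(0,\tfrac14)$ with $W(g^*)=1+\tfrac1k$. To locate it, I would use the functional equation \eqref{gW2}, which rearranges to $g=\tfrac{W-1}{W^2}$; substituting $W=\tfrac{k+1}{k}$ gives $g^*=\tfrac{k}{(k+1)^2}$, and $(k-1)^2>0$ confirms $g^*<\tfrac14$. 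The upper bound then gives finiteness for $g<g^*$ (so $g_c(k)\ge g^*$), while for $g\in(g^*,\tfrac14)$ one has $k(W(g)-1)>1$, so the lower bound diverges (so $g_c(k)\le g^*$), giving $g_c(k)=\tfrac{k}{(k+1)^2}$.

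The main, though mild, obstacle is making the two-sided comparison rigorous precisely at and beyond the threshold: one must confirm that the lower-bound prefactor stays bounded away from $0$ as $g$ crosses $g^*$, so that divergence of the geometric series genuinely forces $W(g,0,k)=\infty$, and that the transition from ``finite on a real interval'' to ``radius of convergence'' is legitimate. Both points rest on the nonnegativity of the coefficients and the positivity of $\phi(W(g)-1)$ for $W(g)-1\in(0,1)$, which I would state explicitly rather than leave implicit.
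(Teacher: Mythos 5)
Your proposal is correct and follows essentially the same route as the paper: both rest on the sandwich bounds of Corollary \ref{kTreeCrit}, reducing finiteness of $W(g,0,k)$ to the geometric-series threshold $k\big(W(g)-1\big)<1$, and then solving $W(g)=1+\tfrac{1}{k}$ via the explicit form of $W$ to get $g_c(k)=\tfrac{k}{(k+1)^2}$ for $k>1$. You merely make explicit some details the paper leaves implicit (monotonicity of $W$, positivity of the Euler-function prefactor, the nonnegativity/radius-of-convergence argument), and for $k\le 1$ you invoke the corollary where the paper uses the simpler direct bound $W(g,0,k)\le W(g)$.
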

\begin{proof}
As mentioned previously, we have $g_c(k) \leq \frac 14$. For $k\leq 1$, we clearly have $W(g,k) \leq W(g)$, so $g_c(k) = \frac 14$.
For $k>1$, Corollary \ref{kTreeCrit} implies that $W(g,0,k)$ is finite if and only if 
$$
 k\big(W(g)-1\big)<1.
$$ 
It follows that $g_c(k)=\frac{k}{(k+1)^2}$.
\end{proof}

\noindent
For $k<1$, it can be shown that all $g$-derivatives of $W(g,0,k)$ from 
the left at $g=\frac 14$ are finite, but this fact is not essential for the discussion in Section \ref{sec:5}. 
For $k=1$, we have $W(g,0,1)=W(g)$, whose singular nature was recalled above. 
For $k>1$, it follows from Corollary \ref{kTreeCrit} and the analyticity of $W(g)$ at $g_c(k)<\frac14$ that 
$W(g,0,k)$ has a simple pole at $g_c(k)$. 
We refrain from giving the detailed arguments here (see \cite{UnelInPrep1}) 
but will assume their validity in the following.

\subsection{Labelled planar trees}
\label{Sec:Lab}

We now consider the presence of a labelling with the associated coupling $\alpha\in(0,1]$. 
Summing over the labels $\delta$ in the expression \eqref{LabTreePart1} for $W_m(g,\alpha)$ yields
\begin{equation}\label{equ:W1Expans}
    W_m(g,\alpha) = \sum_{T\in{\mathcal T}_m}\big(g(1+\alpha)\big)^{\TT-1}=W_m(g(1+\alpha)),
\end{equation}
where we have used \eqref{equ:zeroCoupWz}. It follows that
\begin{equation}\label{Wde}
 W(g,\alpha)=W(g(1+\alpha)),
\end{equation}
whose critical $g$-coupling is given by $g=\frac{1}{4(1+\alpha)}$.

Likewise, summing over the admissible labels in the expression \eqref{LabTreePart2} for $W_m^{ev}(g,\alpha)$ yields
\begin{align}
\label{Wmeven}
 W_m^{ev}(g,\alpha) &= \sum_{T\in{\mathcal T}_m} g^{\TT-1} \prod_{i=1}^{m}\tfrac12 \big[(1+\alpha)^{n_i} + (1-\alpha)^{n_i}\big],
\end{align}   
where $n_i=|V_i(T)|$ is the number of vertices in $T$ at height $i+1$. Note that $W^{ev}(g,\alpha)$ 
naturally extends to an even function of $\alpha\in[-1,1]$.
Contrary to the case $W_m(g,\alpha)$ in (\ref{equ:W1Expans}),
we do not have a closed-form expression for $W_m^{ev}(g,\alpha)$. Instead, we may gain insight into the critical 
behaviour of $W_m^{ev}(g,\alpha)$ by bounding the function by (partition) functions whose critical behaviour is understood. 
For $\alpha\in(0,1]$, we thus use
$$
 (1+\alpha)^n\leq (1+\alpha)^n+ (1-\alpha)^n \leq 2(1+\alpha)^{n-1},\qquad n\in\mathbb{N},
$$
and
$$
 \sum_{i=1}^mn_i=\TT-1,\qquad T\in\mathcal{T}_m,
$$
to obtain
\begin{equation}\label{primbound}
 W^{ev}\big(g(1+\alpha),0,\tfrac{k}{2}\big)  
 \leq W^{ev} (g,\alpha,k)
 \leq W^{ev} \big(g(1+\alpha),0,\tfrac{k}{1+\alpha}\big).
\end{equation}
We will use this in our analysis of $Z^{di}(g,\alpha)$ in Section \ref{Sec:Dilute5}.

\section{Critical behaviour of loop models}
\label{sec:5}

In this section, we use results from Sections \ref{sec:3} and \ref{sec:4} together with a transfer matrix formalism to examine the critical behaviour of the dense and dilute loop models. 
This approach is first applied to the pure CDT model and readily extended to the dense loop model by shifting the coupling constant. 
It is subsequently used to gain insight into the dilute loop model. 
Focus here is on the critical behaviour of the models; algebraic aspects of the transfer matrix formalism 
will be discussed elsewhere.

\subsection{Dense loop model}\label{subsub:CritDe}

It follows from \eqref{LoopTree1}, \eqref{W} and \eqref{Wde} that 
\begin{equation}\label{Zde} 
 Z^{de}(g,\alpha)=W(g(1+\alpha^2))=\frac{1- \sqrt{1-4g(1+\alpha^2)}}{2g(1+\alpha^2)} .
\end{equation}
This expression implies that the critical coupling for the dense loop model, $g_c^{de}(\alpha)$, 
and the corresponding value of the partition function, $Z_c^{de}(\alpha):=Z^{de}(g_c^{de}(\alpha),\alpha)$, are given by
$$
    g_c^{de}(\alpha) = \frac{1}{4(1+\alpha^2)}, \quad\quad\quad Z_c^{de}(\alpha)=2.
$$

\noindent
\textbf{Remark}. 
The correspondence between causal triangulations on the sphere and rooted trees can be viewed as a particular instance of Schaeffer's bijection \cite{schaefferbijection}, where each pair of triangles sharing a space-like edge form a quadrangle \cite{durhuus2010spectral}. In the dense loop model, such a quadrangle can then be decorated in two ways: (i) with two vertical arcs (both intersecting the shared space-like edge) or (ii) with two horizontal arcs (not intersecting the space-like edge). We emphasise that there are no compatibility constraints between the decorated quadrangles. The expression \eqref{Zde} now follows by assigning the weight $g\alpha^2$ to the quadrangle in case (i) and $g$ in case (ii).\\

It also follows that the behaviour of $Z^{de}(g,\alpha)$ near the critical point is the same as for pure CDT:
\begin{align}\label{deCrit2}
 Z^{de}(g,\alpha) \sim Z_c^{de}(\alpha)- c_\alpha\sqrt{g_c^{de}(\alpha) - g},
\end{align}
where $c_\alpha=4\sqrt{1+\alpha^2}$ is $g$-independent.

By differentiating the expression (\ref{equ:zeroCoupWz}) for $W(g)$, one can express the average value of $|T|$ for $T\in\mathcal{T}$ 
as
$$
 \frac{1}{W(g)}\sum_{T\in\mathcal{T}}|T|\,g^{|T|-1}=1+\frac{gW'(g)}{W(g)}=\frac{2g}{\sqrt{1-4g}\big(1-\sqrt{1-4g}\big)},
$$
which is seen to have a square-root divergence at $g_c=\frac{1}{4}$.
This reflects the fact that, close to the critical point, 
the large triangulations yield the dominant contribution to $W(g)$. A precise definition of the limiting distribution of large 
triangulations in the form of a measure on causal triangulations of infinite size (or radius) is given in \cite{durhuus2010spectral}. For such 
infinite triangulations $C$ with central vertex $x$, the ball of radius $R$ around $x$ is defined as 
$$
 B(x,R):= \{v\in V(C)\,|\, d_C(x,v)\leq R\},
$$
where $d_C$ denotes the graph distance on $C$. The Hausdorff dimension $d_H$ of $C$ is then defined as the polynomial growth 
rate of the number of vertices, $|B(x,R)|$, as a function of $R$, that is,
\begin{equation}\label{equ:hausdorff}
 d_H(C):= \lim_{R\to\infty} \frac{\ln |B(x,R)|}{\ln R},
\end{equation}
provided the limit exists. For pure CDT, it is shown in \cite{durhuus2010spectral} that $d_H=2$ almost surely. 

As a consequence of (\ref{Zde}) and \eqref{deCrit2}, the same arguments apply to the present case, 
characterised by $Z^{de}(g,\alpha)$ with critical coupling $g_c^{de}(\alpha)$,
and yield the same value for the Hausdorff dimension:
$$
 d_H^{de}=2\quad(\mathrm{almost\ surely}).
$$
Hence, the coupling of the dense loop model to CDT does not influence the statistical behaviour of the underlying triangulations. 
This is analogous to what is seen for the Ising model coupled to a random planar tree, where a relation similar to \eqref{deCrit2} can be 
derived \cite{durhuus2012generic}.

\subsection{Pure CDT transfer matrix}
\label{sec:5.2}

To analyse the critical behaviour of the dilute loop model in Section \ref{Sec:Dilute5},
we will combine results on the corresponding labelled tree model obtained in 
Sections \ref{sec:3} and \ref{sec:4} with insight gained by applying a transfer matrix formalism.
It is convenient to develop first the similar approach in the pure CDT model, where transfer matrices have been previously discussed 
in \cite{malyshev2001two,hernandez2013ising,di2000integrable} and implicitly already in \cite{ambjorn1998non}. 
Our focus and methods for investigating the transfer matrix eigenvalues are, however, different and do not depend on the explicit 
evaluation of the eigenvalues. 

The transfer matrix formalism views the triangulated disk as a concatenation of triangulated annuli. 
A compatibility condition
is then enforced along the shared boundaries where two annuli are concatenated. In the pure CDT case, the only constraint
is that the boundary lengths agree, that is, the matching boundaries must contain the same number of space-like edges.
As the boundary length can take on any positive integer value, it is natural to let the Hilbert space of
square summable complex sequences,
$$
 l_2(\mathbb{N}):=\big\{(x_n)_{n\in\mathbb{N}}\,|\,\sum_{n=1}^\infty|x_n|^2< \infty;\,x_i\in\mathbb{C},\,i\in\mathbb{N}\big\},
$$
encode the degrees of freedom along the space-like boundaries. Here, $n$ labels the length of a space-like boundary component of a triangulated annulus. 
We shall view the transfer matrix $\tm$, whose matrix elements $\tm_{r,s}$ with respect to the standard orthonormal basis of $l_2(\mathbb{N})$ are given in \eqref{matr-el},  as an operator on $l_2(\mathbb{N})$. In the following, we use Dirac notation where $|w\rangle$ is a sequence in $l_2(\mathbb{N})$ with coordinates 
$w_n$, $n\in\mathbb{N}$.

To take into account the distinguished vertex on each cycle, we only consider annulus triangulations where
the first triangle (relative to the distinguished vertex) is forward-directed.
With $s$ and $r$ denoting the number of lower, respectively upper, space-like edges of the annulus, we set
\begin{equation}\label{matr-el}
  \tm_{r,s}(g):=\binom{r+s-1}{r}g^{\frac{r+s}{2}},
\end{equation}
where the combinatorial factor counts the number of ways the $r$ backward-directed and $(s-1)$ forward-directed 
triangles can be concatenated to the right (viewed outwardly) of the distinguished forward-directed triangle.
As in \cite{ambjorn1998non,malyshev2001two}, the weight $g^{\frac{r+s}{2}}$ encodes that a weight $g^{\frac{1}{2}}$ has been attributed to each space-like edge 
of the annulus, ensuring that a shared space-like edge between a pair of concatenated annuli is assigned the familiar weight $g$. 

Although $\tm_{r,s}(g)$ is not symmetric in $r$ and $s$, the operator $\tm(g)$ is {\em symmetrisable}, admitting a factorisation in terms of a 
diagonal operator $\dm$ and a symmetric operator $\km(g)$,
$$
    \tm(g) =\dm\km(g),
$$
whose matrix elements are given by
$$
    \dm_{r,s} := \frac{\delta_{r,s}}{r}, \qquad \km_{r,s}(g) 
    :=\frac{(r+s-1)!}{(r-1)!(s-1)!}\,g^{\frac{r+s}{2}}.
$$

\begin{prop} The operator $\km(g)$ is trace-class for $g\in \mathbb D$, it is positive definite for $g\in(0,\frac 14)$, 
and the function $h\mapsto\km(h^2)$ is analytic on $\{h\in\mathbb{C}\,|\,|h|<\frac{1}{2}\}$.
\end{prop}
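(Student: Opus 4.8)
The plan is to treat $\km$ as a function of the single variable $h=\sqrt g$, writing $\km_{r,s}(h^2)=\frac{(r+s-1)!}{(r-1)!(s-1)!}\,h^{r+s}$, and to note that $g\in\mathbb D$ corresponds exactly to $|h|<\frac12$. The whole analysis rests on the Gamma-integral identity $\frac{(r+s-1)!}{(r-1)!(s-1)!}=\frac{1}{(r-1)!(s-1)!}\int_0^\infty u^{r+s-1}e^{-u}\,du$, which yields the representation $\km_{r,s}(h^2)=\int_0^\infty p_r(u)p_s(u)\,du$ with $p_r(u)=\frac{(hu)^r}{(r-1)!}\,u^{-1/2}e^{-u/2}\in L^2(0,\infty)$. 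This single formula feeds both the positive-definiteness and the trace-class claims.

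For the trace-class property together with analyticity, I would decompose $\km(h^2)=\sum_{n\ge2}h^n\km^{(n)}$ by collecting terms of equal total degree, where $\km^{(n)}$ is supported on the antidiagonal $\{r+s=n\}$ with entries $(\km^{(n)})_{r,n-r}=(n-1)\binom{n-2}{r-1}$, $1\le r\le n-1$. Each $\km^{(n)}$ is finite rank, and as an antidiagonal operator (the composition of a coordinate reversal with a diagonal scaling) its nonzero singular values are precisely the moduli of its entries; hence $\|\km^{(n)}\|_1=(n-1)\sum_{r=1}^{n-1}\binom{n-2}{r-1}=(n-1)2^{n-2}$. Since $\sum_{n\ge2}|h|^n(n-1)2^{n-2}=\frac14\sum_{n\ge2}(n-1)(2|h|)^n<\infty$ precisely for $|h|<\frac12$, the series converges absolutely in trace norm on $\{|h|<\frac12\}$. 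Its sum is therefore trace-class for every $g\in\mathbb D$, and, being a trace-norm-convergent power series in $h$ with trace-class coefficients, it is an analytic map into the Banach space of trace-class operators, establishing the first and third assertions at once.

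For positive definiteness on $(0,\frac14)$, I would fix $g\in(0,\tfrac14)$, take $h=\sqrt g>0$, and use the representation above with the real functions $a_r:=p_r$. For any $w\in l_2(\mathbb N)$ this gives $\langle\km(g)w,w\rangle=\big\|\sum_r w_r a_r\big\|_{L^2}^2\ge0$, where the operator $w\mapsto\sum_r w_ra_r$ is bounded (indeed Hilbert--Schmidt, since $\sum_r\|a_r\|_{L^2}^2=\sum_r\km_{r,r}(g)<\infty$ for $g<\tfrac14$). Strictness follows because $\sum_r w_ra_r=u^{1/2}e^{-u/2}F(u)$ with $F(u)=\sum_r\frac{w_rg^{r/2}}{(r-1)!}u^{r-1}$ an entire function; vanishing of the left side a.e.\ on $(0,\infty)$ forces $F\equiv0$ and hence $w=0$. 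Thus $\langle\km(g)w,w\rangle>0$ for $w\ne0$.

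The main obstacle is the trace-class claim, since the easy estimates yield only Hilbert--Schmidt: summing $\sum_{r,s}|\km_{r,s}|^2$ converges on the same disk but controls merely the weaker norm, and the diagonal sum $\sum_r\km_{r,r}$ is not the trace norm once $g$ is complex and $\km$ is non-self-adjoint. The antidiagonal decomposition resolves this cleanly, because the exact identity $\|\km^{(n)}\|_1=(n-1)2^{n-2}$ replaces a delicate singular-value estimate by an elementary binomial sum and simultaneously delivers analyticity. An alternative route to trace-class alone is to factor $\km(h^2)=(\overline Q)^{*}Q$ through the Hilbert--Schmidt operator $Qw=\sum_s w_sp_s$ into $L^2(0,\infty)$, with $\|Q\|_2^2=\sum_s\km_{s,s}(|h|^2)$, but this does not by itself give the analyticity statement.
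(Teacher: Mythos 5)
Your proof is correct, but it takes a genuinely different route from the paper's. For positive definiteness the paper uses the discrete factorisation $\frac{(r+s-1)!}{(r-1)!(s-1)!}=\sum_{k\ge1}k\binom{r}{k}\binom{s}{k}$, which writes the quadratic form of each finite-rank truncation as a sum of squares; your Gamma-integral representation $\km_{r,s}(h^2)=\int_0^\infty p_r(u)p_s(u)\,du$ is the continuous analogue, and both arguments end the same way, with an analytic function forced to vanish identically. The real divergence is in the trace-class and analyticity claims: the paper truncates, bounds $\mathrm{tr}\,K_n(g)$ uniformly by $g(1-4g)^{-3/2}$, invokes a theorem from Gel'fand--Vilenkin to pass from a weak limit of positive finite-rank operators with bounded traces to a trace-class limit, handles complex $g$ by the unitary conjugation $\km(g)=U(\theta)\km(|g|)U(\theta)$, and obtains analyticity only in the weak (Kato) sense from analyticity of matrix elements on the dense subspace. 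Your antidiagonal grading $\km(h^2)=\sum_{n\ge2}h^n\km^{(n)}$, with the exact singular values giving $\Vert\km^{(n)}\Vert_1=(n-1)2^{n-2}$ (correct, since an antidiagonal matrix is a coordinate reversal composed with a diagonal scaling), treats complex $g$ directly, needs no positivity, no external theorem and no unitary trick, and yields the stronger conclusion that $h\mapsto\km(h^2)$ is a trace-norm convergent power series, hence norm-analytic into the trace ideal, which implies the Kato analyticity asserted in the statement. What you give up is the exact trace norm: your bound $|g|\,(1-2\sqrt{|g|})^{-2}$ is weaker than the paper's identity $\Vert\km(g)\Vert_1=g(1-4g)^{-3/2}$, though for real $g\in(0,\frac14)$ your positivity together with $\mathrm{tr}\,\km(g)=\sum_r\km_{r,r}(g)$ recovers it, and nothing later in the paper requires more than finiteness. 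One line worth adding for completeness: trace-norm convergence implies convergence of matrix elements, so the sum of your series agrees on the standard basis with the operator defined by the matrix $(\km_{r,s}(g))$, and is therefore indeed $\km(g)$.
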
 

\begin{proof} Note first that, for fixed $s$, the sequence $(\km_{r,s}(g))_{r\in\mathbb N}$ is square summable provided $|g|<1$. 
Hence, the operator $\km(g)$ is well defined for $|g|<1$ on the dense subspace $V$ of sequences with only finitely many non-vanishing entries.  
For $n\in\mathbb N$, let $P_n$ be the orthogonal projection in $l_2(\mathbb N)$ onto the subspace spanned by vectors $|w\rangle$ whose entries $w_r$ vanish for $r>n$. 
Then, the operator $K_n(g):=P_n\km(g)P_n$ is of finite rank and hence is bounded on $l_2(\mathbb N)$, with matrix elements 
$$
 K_n(g)_{r,s}=\begin{cases}  \km_{r,s}(g),&\quad r,s\leq n,\\[.15cm]
  0,&\quad\text{otherwise}.\end{cases}
$$
Moreover, $K_n(g)$ is positive semidefinite for $g\in[0,1)$  since the identity
$$ 
\frac{(r+s-1)!}{(r-1)!(s-1)!} = \sum_{k\geq 1} k\binom{s}{k}\binom{r}{k}
$$
implies 
\begin{equation}\label{wKw}
\langle w | K_n(g)| w \rangle = \sum_{k=1}^n k \big|\sum_{s=k}^n \binom{s}{k} w_s g^{\frac{s}{2} }\big|^2 \geq 0.
\end{equation} 
Its trace norm is given by 
\begin{equation}
\label{Kn1}
\Vert K_n(g)\Vert_1 = \mathrm{tr} K_n(g)
 =\sum_{s=1}^n \sum_{k=1}^s s \binom{s}{k}\binom{s-1}{k-1} g^s
 \leq\sum_{s=1}^\infty s\binom{2s-1}{s} g^s
 =\frac{g}{(1-4g)^{\frac 32}},
\end{equation}
where the last equality is valid for $g\in(0,\frac 14)$.
It follows, in particular, that the operator norms $\Vert K_n(g)\Vert$ are uniformly bounded in $n$ for any fixed $g\in[0,\frac 14)$. Since $\langle v| K_n(g)| w\rangle \to \langle v| \km(g)| w\rangle$ as $n\to\infty$ for all $v,w\in V$, it follows that $\km(g)$ extends to a bounded operator on $l_2(\mathbb N)$ that equals the weak limit of $(K_n(g))_{n\in\mathbb N}$ for $g$ in this interval. Due to \eqref{Kn1}, it now follows from Theorem 10 in Section 2.4 of \cite{gelfand1964generalized} that $\km(g)$ is trace-class with trace norm given by 
$$
\Vert \km(g)\Vert_1 = \mathrm{tr} \,\km(g) = \frac{g}{(1-4g)^{\frac 32}}\,.
$$
Letting $n\to\infty$ in \eqref{wKw}, it follows that $\langle w| \km(g)| w\rangle$  
 vanishes if and only if 
$$
\sum_{s=k}^\infty \binom{s}{k} w_s g^{\frac{s}{2}} =0
$$
for all $k\geq 1$. Evidently, this sum is convergent for $|g|<1$ and 
$$
 \sum_{s=k}^{\infty} \binom{s}{k} w_s g^{\frac{s}{2}}=\frac{g^{\frac{k}{2}}}{k!}\,f^{(k)}\big(g^{\frac{1}{2}}\big),
$$
where $f^{(k)}(g^{\frac{1}{2}})$ is the $k$'th derivative of the analytic function
$$
 f(z) = \sum_{s=1}^{\infty} w_s z^s,\qquad |z|<1,
$$
evaluated at $z=g^{\frac 12}$. It follows that, for $g\neq 0$, $\langle w| \km(g)| w\rangle$  
 vanishes if and only 
if $f(z)=0$, that is, $w_s=0$ for all $s$. This shows that $\km(g)$ is positive definite.

Let now $g\in\mathbb D$ be arbitrary and write $g= |g|e^{i\theta}$. Then,
$$
\km(g) = U(\theta)\km(|g|)U(\theta),
$$
where $U(\theta)$ is the diagonal unitary operator with matrix elements 
$$
U_{r,s}(\theta) = e^{\frac{i r\theta}{2}}\delta_{r,s}\,.
$$
It follows that $\km(g)$ is well-defined and trace-class for all $g\in\mathbb D$, with 
$$
\Vert \km(g)\Vert_1 = \mathrm{tr} \,|\km(g)| = \Vert \km(|g|)\Vert_1\,.
$$
It is evident that $g \to \langle v | \km(g)| w\rangle$ is analytic on $\mathbb D$ for all $v,w\in V$, so $\km(g)$ is 
analytic on $\mathbb D$ as a function of $\sqrt{g}$, in the sense of Kato (see \cite{reed1978iv} Section XII.2).   
\end{proof}

\noindent Since $\dm$ is a bounded operator, it follows that the transfer operator $\tm(g)$ is trace-class for $g\in(0,\frac{1}{4})$ with
$$
    \mathrm{tr}\big(\tm(g)\big) 
    =\sum_{s=1}^{\infty} \tm_{s,s}(g) 
    =\frac{1-\sqrt{1-4g}}{2\sqrt{1-4g}}.
$$
Since the matrix elements of $\km(g)$ are positive for $g\in (0,\frac{1}{4})$, it is a consequence of the Perron-Frobenius theorem 
(see e.g.~Theorem XIII.43 in \cite{reed1978iv}) that $\km(g)$ has a non-degenerate positive largest eigenvalue, which equals the 
operator norm $\Vert\km(g)\Vert$, and the corresponding normalised eigenvector has positive entries only. 
By the Kato-Rellich theorem, see Theorem XII.8 in \cite{reed1978iv}, this eigenvalue is an analytic function of $g$ on $(0, \frac{1}{4})$,
and one can multiply the corresponding normalised eigenvector by a phase such that the entries of the product are analytic functions of $g$ on $(0, \frac{1}{4})$.
Since $\dm $ is diagonal and positive definite, these statements also hold for $\dm^{\half} \km(g) \dm^{\half}.$

\subsection{Analysis of pure CDT model}
\label{sec:5.2a}

For $m\neq0$, the fixed-height pure CDT partition function is given by $Z_m(g)=W_m(g)$ and is expressible as a matrix element:
\begin{equation}\label{matr-el2}
 Z_m(g) = \big\langle v(g)\big| \tm^{m-1}(g)\big| v(g)\big\rangle,
 \qquad m\in\mathbb{N},
\end{equation}
where $\tm^{m-1}_{r,s}(g)\equiv\big(\tm^{m-1}(g)\big)_{r,s}$ and where $v(g)\in l_2(\mathbb N)$ has entries
$$
 v_n(g):= g^{\frac n2},\qquad n\in\mathbb{N}.
$$
This expression for $Z_m(g)$ corresponds to a sum over all possible causal sphere triangulations of height $m$,
as depicted on the left in Figure \ref{figure: causalcylinder}.

We now define the {\em time-periodic partition functions} by 
\begin{equation}\label{PerPart} 
 Z_m^{per}(g) 
 :=\mathrm{tr}\big(\tm^{m-1}(g)\big), \qquad m\ge 2.
\end{equation}
Note that $Z_m^{per}(g)$ can be expressed as a sum over triangulations of 
a \emph{cylinder} of height $(m-1)$ whose boundary cycles, $S_1$ and $S_m$, have equal (but arbitrary) length, 
and with a weight $g^{\frac{1}{2}}$ associated to each elementary triangle. 
Alternatively, the weights can be assigned to the space-like edges; however, this requires the space-like edges
within the bulk be assigned the weight $g$ and those on the boundary cycles the weight $g^{\frac{1}{2}}$.

\begin{figure}
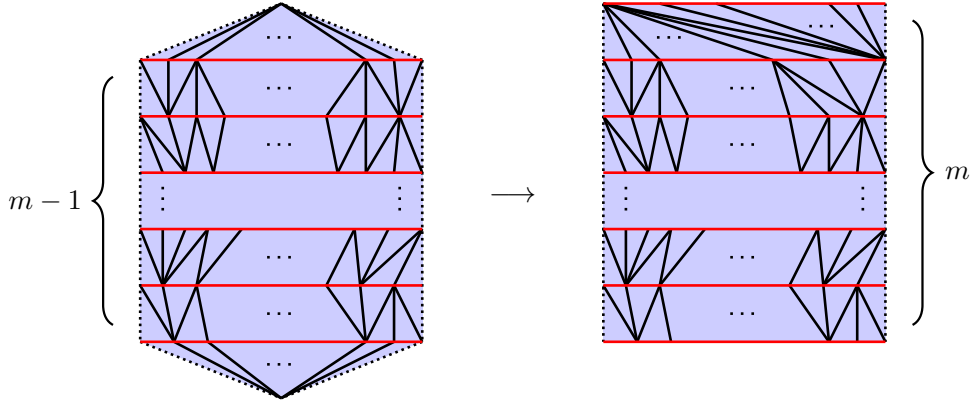

    \centering
    \begin{align*}
        \scalebox{1.5}{\raisebox{-1.75cm}{\triangulationexamplebirAlt}}
        \quad\quad
        \longrightarrow
        \quad\quad
        \scalebox{1.5}{\raisebox{-1.25cm}{\triangulationexampleikiAlt}}
    \end{align*}
    \caption{Triangulation contributing to $Z_m(g)$ and the corresponding triangulation contributing to $Z^{per}_{m+1}(g)$. 
    In both figures, the dashed edges indicate that the leftmost and rightmost time-like edges are identified.}
    \label{figure: causalcylinder}
\end{figure}

Since the trace of a positive trace-class operator $\om$ equals the sum of its eigenvalues \cite{gelfand1964generalized},
it follows that $\mathrm{tr}(\om^n) \leq(\mathrm{tr}\,\om)^n$, $n\in\mathbb{N}$,
and hence that
$$
 Z_m^{per}(g)\leq\big(\mathrm{tr}\,\tm(g)\big)^{m-1}.
$$
\begin{lem} \label{lem:Z-Zper}
For each $m\geq 1$ and $g \in [0,\frac{1}{4})$,
\begin{equation}\label{Z-Zper}
 Z_m(g)\leq Z_{m+1}^{per}(g).
\end{equation}
\end{lem}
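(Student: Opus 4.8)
The plan is to prove the inequality termwise at the level of matrix elements, using the representation \eqref{matr-el2} of $Z_m(g)$ together with the definition \eqref{PerPart} of $Z_{m+1}^{per}(g)$. Expanding both as double sums over the entries of $\tm^{m-1}(g)$ against the same index pair $(r,s)$, and recalling that $v(g)$ is real with $v_n(g)=g^{n/2}$, I would write
$$
 Z_m(g)=\big\langle v(g)\big|\tm^{m-1}(g)\big|v(g)\big\rangle=\sum_{r,s\geq1}v_r(g)\,\tm^{m-1}_{r,s}(g)\,v_s(g),
$$
and, inserting one factor of $\tm(g)$ into the trace in \eqref{PerPart} (with index $m+1$),
$$
 Z_{m+1}^{per}(g)=\mathrm{tr}\big(\tm^{m}(g)\big)=\sum_{r,s\geq1}\tm^{m-1}_{r,s}(g)\,\tm_{s,r}(g).
$$
Because both sums are taken against the identical coefficients $\tm^{m-1}_{r,s}(g)$, it then suffices to dominate the cap weight $v_r(g)v_s(g)$ by the annulus weight $\tm_{s,r}(g)$ termwise.

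The key step is the pointwise bound $v_r(g)\,v_s(g)\leq\tm_{s,r}(g)$ for all $r,s\geq1$ and $g\geq0$. By the definition \eqref{matr-el}, this reads $g^{(r+s)/2}\leq\binom{r+s-1}{s}g^{(r+s)/2}$, which holds simply because $\binom{r+s-1}{s}\geq1$ whenever $r,s\geq1$. To pass from this inequality to the series, I will use that the entries $\tm^{m-1}_{r,s}(g)$ are nonnegative for $g\in[0,\tfrac14)$: the entries of $\tm(g)$ are nonnegative there, and nonnegativity is preserved by the products and sums of nonnegative numbers defining the powers $\tm^{m-1}(g)$ (for $m=1$ one has $\tm^0(g)=I$, whose entries $\delta_{r,s}$ are also nonnegative). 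Multiplying the pointwise bound by $\tm^{m-1}_{r,s}(g)\geq0$ and summing over $r,s$ then gives $Z_m(g)\leq Z_{m+1}^{per}(g)$, as claimed.

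Geometrically, this mirrors the transformation in Figure \ref{figure: causalcylinder}: the two factors $v_{n_1}$ and $v_{n_m}$ are exactly the cones capping the sphere triangulation counted by $Z_m(g)$ at the poles $x$ and $y$, and replacing these two cones by a single annulus that closes the configuration into the cylinder counted by $Z_{m+1}^{per}(g)$ can only increase the weight. The combinatorial content of $\binom{r+s-1}{s}\geq1$ is precisely that the unique cone-pair over boundary lengths $r$ and $s$ embeds among the $\binom{r+s-1}{s}$ annulus triangulations with the same boundary lengths.

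I do not expect a genuine obstacle here; the only point requiring care, rather than difficulty, is ensuring that all the series converge absolutely so that the termwise comparison and the rearrangement of the trace are legitimate. For $g\in(0,\tfrac14)$ this is guaranteed by the preceding Proposition, which shows $\tm(g)$ is trace-class (whence $\tm^m(g)$ has finite trace), together with $v(g)\in l_2(\mathbb{N})$ since $\sum_n g^n<\infty$; the boundary case $g=0$ is trivial, as both sides vanish for $m\geq1$.
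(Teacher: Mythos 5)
Your proof is correct, but it reaches \eqref{Z-Zper} by a different route than the paper. The paper argues directly on triangulations: given a sphere triangulation contributing to $Z_m(g)$, it removes the $|S_1|$ backward-directed triangles of the bottom cap and re-glues them after the $|S_m|$ forward-directed top-cap triangles, producing a cylinder of height $m$ whose new top annulus is in standard form; this map preserves the weight (area is conserved) and is injective but not surjective, which gives the inequality configuration by configuration. Your argument is the transfer-matrix transcription of the same counting fact: you expand \eqref{matr-el2} and \eqref{PerPart} against the common nonnegative kernel $\tm^{m-1}_{r,s}(g)$ and dominate the cap weight termwise via $v_r(g)\,v_s(g)=g^{(r+s)/2}\leq\binom{r+s-1}{s}g^{(r+s)/2}=\tm_{s,r}(g)$, using \eqref{matr-el}. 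Note that $\binom{r+s-1}{s}\geq 1$ is exactly the statement that the standard annulus --- the image of the paper's cut-and-reglue operation --- is one among the $\binom{r+s-1}{s}$ annulus triangulations with boundary lengths $r$ and $s$, so the slack in your binomial bound and the non-surjectivity of the paper's injection are the same phenomenon, as your closing geometric remark correctly observes. What the paper's version buys is an explicit, weight-preserving combinatorial map with no analytic overhead, and a precise identification of which cylinder configurations are attained; what your version buys is brevity given the machinery already established, with the entire content localized in an elementary binomial inequality. Your attention to the points that make the series manipulations legitimate --- nonnegativity of all entries of $\tm^{m-1}(g)$, the trace-class property for $g\in(0,\tfrac14)$ so that $\mathrm{tr}\big(\tm^m(g)\big)$ equals the sum of diagonal entries, $v(g)\in l_2(\mathbb{N})$, and the trivial boundary case $g=0$ --- is exactly what is needed, and I see no gap.
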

\begin{proof}
Let $\bar{C}$ be a causal sphere triangulation of height $m$, with notation as in Section \ref{subsec:CT}. 
Then, (i) remove the $|S_1|$ backward-directed triangles with space-like edges in $S_1$, 
and (ii) place them to the right (viewed upwardly) of the $|S_m|$ forward-directed triangles with space-like edges in $S_m$, thereby creating an annulus $A_m$
between $S_m$ and the new outer boundary $S_{m+1}$. This is illustrated in Figure \ref{figure: causalcylinder} and
yields a triangulation, $\bar{C}'$, of a cylinder of height $m$, of the form contributing to $Z_{m+1}^{per}(g)$.
In fact, as a consequence of the conservation of triangulation area under $\bar{C}\to \bar{C}'$, 
the contribution of $\bar{C}'$ to $Z_{m+1}^{per}(g)$ is the same as the contribution of $\bar{C}$ to $Z_{m}(g)$. 
Finally, the map $\bar{C}\to \bar{C}'$ is evidently injective but not surjective, since the top annulus of $\bar{C}'$ is restricted to be of standard form. 
This establishes the inequality (\ref{Z-Zper}).
\end{proof}

\noindent
It follows from the cyclicity of the trace in (\ref{PerPart}) that 
$$
 Z_m^{per}(g) =\mathrm{tr}\big((\dm^{\frac 12}\km(g)\dm^{\frac 12})^{m-1}\big), \qquad m\ge 2.
$$
The following proposition describes the behaviour of  the largest eigenvalue as $g$ approaches the critical point. Although it does not provide any additional information for understanding the partition function of pure CDT, the method is general and will be used for the analysis of the dilute loop model in Section \ref{Sec:Dilute5}.
\begin{prop}\label{lambdaCrit1}
The largest eigenvalue $\lambda_1(g)$ of \,$\dm^{\frac 12}\km(g)\dm^{\frac 12}$\! satisfies
$$ 
\lambda_1(g) \upto 1 \qquad \mbox{\rm as}\quad g\upto \tfrac 14.
$$
\end{prop}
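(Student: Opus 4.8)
The plan is to identify the exponential growth rate in $m$ of the fixed-height partition function $Z_m(g)=W_m(g)$ with the largest eigenvalue $\lambda_1(g)$, and then to pin this growth rate down using the two-sided bounds of Lemma \ref{TreeCrit}. Throughout I write $A(g):=\dm^{\half}\km(g)\dm^{\half}$, which by the preceding proposition is self-adjoint, positive definite and trace-class on $l_2(\mathbb N)$ for $g\in(0,\frac14)$, with a simple largest eigenvalue $\lambda_1(g)$ and a strictly positive Perron eigenvector. Since $\tm(g)=\dm^{\half}A(g)\dm^{-\half}$, the operators $\tm(g)$ and $A(g)$ are similar and hence share their eigenvalues.

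The first step is to rewrite (\ref{matr-el2}) spectrally. Using the similarity above, $\tm^{m-1}(g)=\dm^{\half}A(g)^{m-1}\dm^{-\half}$, so
$$
 Z_m(g)=\big\langle a(g)\big|A(g)^{m-1}\big|b(g)\big\rangle,\qquad a(g):=\dm^{\half}v(g),\quad b(g):=\dm^{-\half}v(g),
$$
where $a_n(g)=g^{n/2}/\sqrt n$ and $b_n(g)=\sqrt n\,g^{n/2}$ both have strictly positive entries and lie in $l_2(\mathbb N)$ for $g<1$. Expanding in an orthonormal eigenbasis $\{e_i\}$ of $A(g)$ with eigenvalues $\lambda_1(g)>\lambda_2(g)\ge\cdots>0$ gives $Z_m(g)=\sum_i\lambda_i(g)^{m-1}\langle a|e_i\rangle\langle e_i|b\rangle$. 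The leading coefficient $\langle a|e_1\rangle\langle e_1|b\rangle$ is strictly positive because $e_1$, $a$ and $b$ all have positive entries, while the tail is bounded by $\lambda_2(g)^{m-1}\|a\|\,\|b\|$ via Cauchy--Schwarz and $\lambda_i\le\lambda_2$ for $i\ge2$. Hence $Z_m(g)\sim\langle a|e_1\rangle\langle e_1|b\rangle\,\lambda_1(g)^{m-1}$, and therefore
$$
 \lim_{m\to\infty}Z_m(g)^{1/m}=\lambda_1(g),\qquad g\in(0,\tfrac14).
$$

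The second step combines this with Lemma \ref{TreeCrit}. Taking $m$-th roots in $\tfrac{\phi^2(W(g)-1)}{W(g)}(W(g)-1)^m<Z_m(g)=W_m(g)<(W(g)-1)^m$ and letting $m\to\infty$, the constant and the upper prefactor both disappear, so $\lambda_1(g)=W(g)-1$ for every $g\in(0,\frac14)$. Since $W$ in (\ref{W}) is a power series with nonnegative (Catalan) coefficients, it is strictly increasing on $(0,\frac14)$ and continuous up to the endpoint with $W(\frac14)=2$; consequently $\lambda_1(g)=W(g)-1$ is strictly increasing and tends to $W(\frac14)-1=1$ as $g\to\frac14^-$, which is exactly $\lambda_1(g)\upto1$.

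The main obstacle is the spectral identity $\lim_m Z_m(g)^{1/m}=\lambda_1(g)$: one must guarantee that the largest eigenvalue actually contributes to $Z_m(g)$, i.e.\ that the overlaps $\langle a(g)|e_1\rangle$ and $\langle e_1|b(g)\rangle$ are nonzero, and that the trace-class (summable-spectrum) property lets the subleading eigenvalues be absorbed under the $m$-th root. Both points follow from positivity, namely Perron--Frobenius together with the positivity of the entries of $v(g)$. If one prefers to keep the lower bound $\lambda_1(g)\ge W(g)-1$ softer, it can instead be obtained from Lemma \ref{lem:Z-Zper} (which gives $Z_m(g)\le Z^{per}_{m+1}(g)$) and the elementary estimate $(Z^{per}_m(g))^{1/m}\to\lambda_1(g)$ coming from $Z^{per}_m(g)=\mathrm{tr}\,A(g)^{m-1}=\sum_i\lambda_i(g)^{m-1}$; the matching upper bound $\lambda_1(g)\le W(g)-1$, however, still relies on the disk estimate above, so the overlap-positivity argument cannot be entirely avoided.
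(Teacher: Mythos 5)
Your proof is correct, and it reaches the proposition by a genuinely different route than the paper. You identify $\lambda_1(g)$ pointwise as the exponential growth rate $\lim_{m\to\infty} Z_m(g)^{1/m}$: the rewriting $Z_m(g)=\langle a|A^{m-1}(g)|b\rangle$ with $a=\dm^{\half}v,\ b=\dm^{-\half}v\in l_2(\mathbb N)$ is legitimate (it is the same manoeuvre the paper performs, noting that $\dm^{-\half}v\in l_2(\mathbb N)$ even though $\dm^{-\half}$ is unbounded --- your side remark that $\tm(g)$ and $A(g)$ are ``similar'' should be read only in this weak sense, not as a bounded similarity), and the strict positivity of the overlaps $\langle a|e_1\rangle$ and $\langle e_1|b\rangle$, guaranteed by the Perron--Frobenius positivity of $e_1$ already established in the paper, makes the top eigenvalue visible in $Z_m$. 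Feeding in the two-sided geometric bounds of Lemma \ref{TreeCrit} then yields the exact identity $\lambda_1(g)=W(g)-1$ on $(0,\frac14)$ (which one can confirm independently from the Chebyshev formula for $W_m$), and both the strict monotonicity of $\lambda_1$ and the limit $\lambda_1(g)\upto 1$ follow at once from the formula and $W(\frac14)=2$. The paper instead argues by contradiction at the endpoint only: if $\lim_{g\upto\frac14}\lambda_1(g)<1$, a resolvent bound combined with Lemma \ref{lem:Z-Zper} would make $W(g,0,k)$ finite on all of $(0,\frac14)$ for suitable $k>1$, contradicting Proposition \ref{critk} (itself a consequence of Lemma \ref{TreeCrit} via Corollary \ref{kTreeCrit}); if the limit exceeded $1$, the spectral expansion of $Z(g)$ --- using the same positive-overlap constant you rely on --- would force $Z$ to diverge at some $g_0<\frac14$, contradicting analyticity on $\mathbb D$. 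So the two proofs rest on the same pair of mechanisms, trace estimates on powers of the transfer operator and overlap positivity, but you deploy them at every $g$ to compute $\lambda_1(g)$ exactly, a strictly stronger conclusion that also subsumes the paper's separate variational-principle argument for monotonicity; what the paper's softer contradiction method buys is portability, since it transfers essentially verbatim to the dilute model in Section \ref{Sec:Dilute5}, where no closed form or matching-rate analogue of Lemma \ref{TreeCrit} is available for $W_m^{ev}$ and your exact growth-rate computation has no counterpart.
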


\begin{proof}
All matrix elements of $\dm^{\frac 12}\km(g)\dm^{\frac 12}$ are strictly increasing positive functions of $g$. 
By the variational principle (see Theorem XIII.1 in \cite{reed1978iv}), it follows that $\lambda_1(g)$ is strictly increasing for $g\in(0,\frac{1}{4})$.
We also note that $\lim_{g\downto0}\lambda_1(g)=0$.
If unlikely to cause confusion, arguments of functions may be omitted in the following. 

If $\lambda_1\leq c_1$ for some constant $c_1<1$, then
$\big(1-k\,\dm^{\frac 12}\km\dm^{\frac 12}\big)^{-1}$ is a bounded operator for $k\in[1,c_1^{-1})$ and
\begin{align*} 
 \sum_{m=1}^{\infty} k^m Z_{m+1}^{per}
 &=\sum_{m=1}^\infty \mathrm{tr}\Big(\big(k\,\dm^{\frac 12}\km\dm^{\frac 12}\big)^m\Big)
 \nonumber\\
 &=k\,\mathrm{tr} \Big(\dm^{\frac 12}\km\dm^{\frac 12}\big(1-k\,\dm^{\frac 12}\km\dm^{\frac 12}\big)^{-1}\Big)
 \nonumber\\
 &\leq k\,\big\Vert\big(1-k\,\dm^{\frac 12}\km\dm^{\frac 12}\big)^{-1}\big\Vert\,\mathrm{tr}(\tm)
 \nonumber\\[.15cm]
 &<\infty
\end{align*} 
for all $g\in(0,\frac 14)$. By \eqref{Z-Zper}, this implies
$$
 W(g,0,k)=1+\sum_{m=1}^{\infty} k^m Z_m(g) <\infty,
$$
in contradiction to Proposition \ref{critk}, which says that the critical coupling for $W(g,0,k)$ satisfies $g_c(k)<\frac 14$ for $k>1$. 
This shows that $\lim_{g\upto\frac 14}\lambda_1(g) \geq 1$. 

Now, suppose this limit is {\em strictly} greater than $1$. Then, there would exist $g_0<\frac 14$ such that $\lambda_1(g_0)=1$.
Let $|w^{(1)}(g)\rangle$ denote the normalised (analytic) eigenvector of \,$\dm^{\frac 12}\km\dm^{\frac 12}$\, with positive coordinates for $g$ in an interval $I$ around $g_0$. 
For $I$ sufficiently small, there exists a constant $c>0$ such that  
$$
 \big\langle v\big|\dm^{\frac 12}\big|w^{(1)}\big\rangle\big\langle w^{(1)}\big|\dm^{-\frac 12}\big|v\big\rangle \geq c
$$
for $g \in I$. As above, we suppress the $g$ dependence and denote $|v(g)\rangle$ 
and $|w^{(1)}(g)\rangle$ by $| v \rangle$ and $|w^{(1)}\rangle$, respectively. Let $\lambda_2\geq\lambda_3\geq\ldots$ denote the remaining eigenvalues 
(multiplicities included) of $\dm^{\frac 12}\km\dm^{\frac 12}$
and $\{|w^{(n)}\rangle\,|\,n\in\mathbb{N}\}$ a corresponding orthonormal set of eigenvectors.
For $g<g_0$, the partition function satisfies
\begin{align*}
    Z(g)-1&=\sum_{m=1}^{\infty}\big\langle v\big| \dm^{\frac 12}\big(\dm^{\frac 12} \km\dm^{\frac 12}\big)^{m-1}\dm^{-\frac 12}\big| v\big\rangle
    \\[.1cm]
    &=\sum_{m,n= 1}^{\infty}\big\langle v\big| \dm^{\frac 12}\big(\dm^{\frac 12} \km\dm^{\frac 12}\big)^{m-1}\big|w^{(n)}\big\rangle 
      \big\langle w^{(n)}\big|  \dm^{-\frac 12}\big| v\big\rangle
    \\[.1cm]
    &= \big\langle v\big| \dm^{\frac 12}\big|w^{(1)}\big\rangle\big\langle w^{(1)}\big| \dm^{-\frac 12}\big|v\big\rangle\sum_{m=1}^{\infty} \lambda_1^{m-1} 
      +\sum_{m=1}^{\infty}\sum_{n=2}^{\infty} \lambda_n^{m-1}\big\langle v\big| \dm^{\frac 12}\big|w^{(n)}\big\rangle\big\langle w^{(n)}\big|\dm^{-\frac 12}\big|v\big\rangle 
    \\[.1cm]
    &\geq\frac{c}{1-\lambda_1} + \sum_{n=2}^{\infty} \frac{\big\langle v\big|\dm^{\frac 12} \big|w^{(n)}\big\rangle\big\langle w^{(n)}\big| \dm^{-\frac 12}\big|v\big\rangle}{1-\lambda_n},
\end{align*}
and it follows that
\begin{align}\label{divcalc}
    Z(g)&\geq 1+\frac{c}{1-\lambda_1} 
    - \sum_{n=2}^{\infty}\frac{\big|\big\langle v\big| \dm^{\frac 12} \big|w^{(n)}\big\rangle\big\langle w^{(n)}\big| \dm^{-\frac 12}\big|v\big\rangle\big|}{1-\lambda_n} 
   \geq1+\frac{c}{1-\lambda_1} - \frac{\big\Vert \dm^{\frac 12} v\big\Vert\big\Vert\dm^{-\frac 12}v\big\Vert }{1-\lambda_2}. 
\end{align}
Note that, although $\dm^{-\frac12}$ is unbounded, $\dm^{-\frac 12} v$ belongs to $l_2(\mathbb{N})$, as the entries of $v(g)$ 
decay exponentially. It follows from \eqref{divcalc} that $Z(g)$ diverges as $g$ approaches $g_0$ from below, in contradiction to the 
fact that $Z(g)$ is analytic on $\mathbb D$. 
It follows that $\lim_{g\upto\frac 14}\lambda_1(g)=1$, and since $\lambda_1(g)$ is strictly increasing for $g\in(0,\frac{1}{4})$, this concludes the proof.
\end{proof}

\noindent
This completes our transfer matrix analysis of the pure CDT model.

\subsection{Dense loop model revisited}
\label{Sec:DenseRev}

Generalising (\ref{Zde}), the fixed-height dense loop model partition function is expressible in terms of the similar pure CDT partition function as
$$
 Z_m^{de}(g,\alpha)=Z_m\big(g(1+\alpha^2)\big).
$$ 
Analogues of the pure CDT transfer matrix $\tm(g)$ and the related operator $\km(g)$ similarly satisfy 
$\tm^{de}(g,\alpha)=\tm(g(1+\alpha^2))$ and $\dm^{\frac 12}\km^{de}(g,\alpha)\dm^{\frac 12}=\dm^{\frac 12}\km(g(1+\alpha^2))\dm^{\frac 12}$.
Both $\tm^{de}(g,\alpha)$ and $\km^{de}(g,\alpha)$ are positive definite and trace-class for $g\in(0,\frac{1}{4(1+\alpha^2)})$, with
$$
     \mathrm{tr}\big(\tm^{de}(g,\alpha)\big) 
    =\frac{1-\sqrt{1-4g(1+\alpha^2)}}{2\sqrt{1-4g(1+\alpha^2)}},\qquad
     \mathrm{tr}\big(\km^{de}(g,\alpha)\big) 
    =\frac{g(1+\alpha^2)}{\big(1-4g(1+\alpha^2)\big)^{\frac{3}{2}}}.
$$
Moreover, as counterpart to Proposition \ref{lambdaCrit1}, we have that, for $\alpha\in[0,1]$, 
the largest eigenvalue $\lambda_1(g,\alpha)$ of $\dm^{\frac 12}\km^{de}(g,\alpha)\dm^{\frac 12}$
approaches $1$ from below as $g$ approaches $g_c ^{de}{(\alpha)}=\frac{1}{4(1+\alpha^2)}$ from below.

\subsection{Dilute loop model}
\label{Sec:Dilute5}

Recalling \eqref{LoopTree2m} and \eqref{Wmeven}, the transfer matrix for the dilute loop model has entries given by
$$
 \tm^{di}_{r,s}(g,\alpha)= \binom{r+s-1}{r}
  \big[(1+\alpha)^{r}+(1-\alpha)^{r}\big]^{\frac 12} \big[(1+\alpha)^{s}+(1-\alpha)^{s}\big]^{\frac 12} g^{\frac{r+s}{2}},
$$
normalised such that, in analogy with \eqref{matr-el2},
\begin{equation*}
 Z_m^{di}(g,\alpha)
 =\big\langle v(g,\alpha)\big|\big(\tm^{di}(g,\alpha)\big)^{m-1}\big|v(g,\alpha)\big\rangle,\qquad m\in\mathbb{N},
\end{equation*}
where the coordinates of the vector $v(g,\alpha)\in l^2(\mathbb{N})$ are defined by 
$$
 v_n(g,\alpha) := \big[(1+\alpha)^n+(1-\alpha)^n\big]^{\frac{1}{2}}g^{\frac{n}{2}},\qquad n\in\mathbb{N}.
$$
Recalling that $Z^{di}_0(g,\alpha)=1$, the full partition function is thus given by
\begin{equation*} 
 Z^{di}(g,\alpha)=1+\sum_{m=1}^\infty\big\langle v(g,\alpha)\big|\big(\tm^{di}(g,\alpha)\big)^{m-1}\big|v(g,\alpha)\big\rangle.
\end{equation*}

The transfer matrix $\tm^{di}(g,\alpha)$ is not symmetric but can be symmetrised by the same diagonal operator as $\tm(g)$, that is, 
$$
 \tm^{di}(g,\alpha)=2\dm\km^{di}(g,\alpha),
$$
where
$$
 \km_{r,s}^{di}(g,\alpha) =\frac{1}{2}\frac{(r+s-1)!}{(r-1)!(s-1)!}
  \big[(1+\alpha)^{r}+(1-\alpha)^{r}\big]^{\frac 12} \big[(1+\alpha)^{s}+(1-\alpha)^{s}\big]^{\frac 12} g^{\frac{r+s}{2}}.
$$
We also note that
$$
 \km^{di}(g,0)=\km(g).
$$

Essentially repeating the arguments in the analysis of the pure CDT model in Section \ref{sec:5.2a}, 
we see that $\km^{di}(g,\alpha)$ and $\dm^{\frac 12}\km^{di}(g,\alpha)\dm^{\frac 12}$ are 
positive definite trace-class operators on $l^2(\mathbb N)$ for $g\in(0,\frac{1}{4(1+\alpha)})$, $\alpha \in [0,1]$, and the Perron-Frobenius theorem 
applies to show that they have non-degenerate largest eigenvalues (equal to the operator norm) with eigenvectors that have positive coordinates only. 
We note that
$$ 
 \mathrm{tr}\big(\km^{di}(g,\alpha)\big)
 =\frac{1}{2}\big[\mathrm{tr}\big(\km(g(1+\alpha))\big)+\mathrm{tr}\big(\km(g(1-\alpha))\big)\big]
 = \frac{g(1+\alpha)}{2\big(1-4g(1+\alpha)\big)^{\frac{3}{2}}} + \frac{g(1-\alpha)}{2\big(1-4g(1-\alpha)\big)^{\frac{3}{2}}}
$$ 
and
$$ 
 \mathrm{tr}\big(\tm^{di}(g,\alpha)\big)
 =\mathrm{tr}\big(\tm(g(1+\alpha))\big)+\mathrm{tr}\big(\tm(g(1-\alpha))\big)
 = \frac{1-\sqrt{1-4g(1+\alpha)}}{2\sqrt{1-4g(1+\alpha)}} + \frac{1-\sqrt{1-4g(1-\alpha)}}{2\sqrt{1-4g(1-\alpha)}}.
$$ 
For $g>\frac{1}{4(1+\alpha)}$, it is seen that
$\km^{di}(g,\alpha)$ and $\tm^{di}(g,\alpha)$ are not bounded operators. By mimicking the analysis of $\dm^{\frac 12}\km(g)\dm^{\frac 12}$, together with the obvious inequality 
$$
 \big|\km^{di}_{r,s}(g,\alpha)\big| \leq\km_{r,s}\big(|g|(1+|\alpha|)\big),
$$
we see that $\dm^{\frac 12}\km^{di}(g,\alpha)\dm^{\frac 12}$ is analytic in $(\sqrt{g},\alpha)$ for $|\alpha|<1$ and $|g|<\frac{1}{4(1+|\alpha|)}$, 
taking values in the set of trace-class operators on $l_2(\mathbb N)$.

A slight variation of the argument establishing Lemma \ref{lem:Z-Zper} yields
\begin{equation}
\label{di-Z-Zper}
 Z_m^{di}(g,\alpha) \leq  \mathrm{tr}\,\big(\tm^{di}(g,\alpha)^m \big),\qquad m\in\mathbb{N},
\end{equation}
for $g\in(0,\frac{1}{4(1+\alpha)})$ and $\alpha\in [0,1]$.
This allows us to establish the following counterpart to Proposition \ref{lambdaCrit1}.

\begin{prop}
For every $\alpha\in[0,1]$, the largest eigenvalue $\lambda_1^{di}(g,\alpha)$ of \,$\dm^{\frac 12}\km^{di}(g,\alpha)\dm^{\frac 12}$\! is a strictly increasing function of $g$. 
As $g$ approaches $\tfrac{1}{4(1+\alpha)}$ from below, its limit  $\bar\lambda_1^{di}(\alpha)$ satisfies 
$$ 
 \bar\lambda_1^{di}(\alpha)\leq1.
$$
\end{prop}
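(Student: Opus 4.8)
The plan is to prove the two assertions separately, leaning on the pure CDT analysis of Section \ref{sec:5.2a}. For the monotonicity I would argue exactly as in the proof of Proposition \ref{lambdaCrit1}: for fixed $\alpha\in[0,1]$, every matrix element of $\dm^{\frac12}\km^{di}(g,\alpha)\dm^{\frac12}$ is a strictly increasing positive function of $g$ on $(0,\frac{1}{4(1+\alpha)})$, since its only $g$-dependence is the factor $g^{(r+s)/2}$ with exponent $\geq1$. Evaluating the variational principle (Theorem XIII.1 in \cite{reed1978iv}) on the Perron--Frobenius eigenvector, whose entries are strictly positive, then yields the strict monotonicity of $\lambda_1^{di}(g,\alpha)$; in particular the increasing limit $\bar\lambda_1^{di}(\alpha)$ exists.

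The substantive point is the bound $\bar\lambda_1^{di}(\alpha)\leq1$, which I would obtain by comparing $\km^{di}(g,\alpha)$ entrywise with the pure CDT operator $\km$ evaluated at the shifted coupling $g(1+\alpha)$. Setting $\rho:=\tfrac{1-\alpha}{1+\alpha}\in[0,1]$, a direct computation of the ratio of matrix elements gives
$$\frac{\km^{di}_{r,s}(g,\alpha)}{\km_{r,s}(g(1+\alpha))}=\tfrac12\big(1+\rho^{\,r}\big)^{\frac12}\big(1+\rho^{\,s}\big)^{\frac12}\leq\tfrac12(1+\rho)\leq1,\qquad r,s\geq1,$$
where the first inequality uses $\rho^{\,r},\rho^{\,s}\leq\rho\leq1$ and the second uses $\rho\leq1$. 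Hence $0\leq\km^{di}_{r,s}(g,\alpha)\leq\km_{r,s}(g(1+\alpha))$ for all $r,s$, and conjugating by the positive diagonal operator $\dm^{\frac12}$ preserves this entrywise domination.

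To convert the entrywise domination into an eigenvalue inequality I would use the Perron--Frobenius structure already established. For $g\in(0,\frac{1}{4(1+\alpha)})$, both $\dm^{\frac12}\km^{di}(g,\alpha)\dm^{\frac12}$ and $\dm^{\frac12}\km(g(1+\alpha))\dm^{\frac12}$ are positive definite trace-class operators (the latter since $g(1+\alpha)\in(0,\frac14)$), each with a strictly positive Perron--Frobenius eigenvector realizing its largest eigenvalue, which coincides with its operator norm. Taking $|w\rangle$ to be the normalized eigenvector of $\dm^{\frac12}\km^{di}(g,\alpha)\dm^{\frac12}$ and using that its entries are nonnegative, the entrywise bound yields
$$\lambda_1^{di}(g,\alpha)=\langle w|\dm^{\frac12}\km^{di}(g,\alpha)\dm^{\frac12}|w\rangle\leq\langle w|\dm^{\frac12}\km(g(1+\alpha))\dm^{\frac12}|w\rangle\leq\lambda_1(g(1+\alpha)),$$
the last step being the operator-norm bound combined with Perron--Frobenius for the pure CDT operator. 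Letting $g\upto\frac{1}{4(1+\alpha)}$ forces $g(1+\alpha)\upto\frac14$, and Proposition \ref{lambdaCrit1} gives $\lambda_1(g(1+\alpha))\upto1$; since $\lambda_1^{di}(g,\alpha)$ is increasing and bounded above by $\lambda_1(g(1+\alpha))$, it follows that $\bar\lambda_1^{di}(\alpha)\leq1$.

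The one genuinely new ingredient relative to pure CDT is the operator comparison, and this is where I expect the care to be needed: the square-root weights $[(1+\alpha)^r+(1-\alpha)^r]^{\frac12}$ prevent $\km^{di}(g,\alpha)$ from being a literal convex combination of $\km(g(1\pm\alpha))$, so the clean entrywise bound displayed above --- rather than any splitting of the operator --- is the crucial step that makes the estimate go through.
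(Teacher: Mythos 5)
Your proposal is correct, and the bound $\bar\lambda_1^{di}(\alpha)\leq1$ is proved by a genuinely different route than the paper's. The paper argues by contradiction: if the limit exceeded $1$, there would be a $g_0<\frac{1}{4(1+\alpha)}$ with $\lambda_1^{di}(g_0,\alpha)=1$, and a spectral decomposition of the type leading to \eqref{divcalc} would force $W^{ev}(g,\alpha)$ to diverge as $g\upto g_0$, contradicting the finiteness supplied by the upper bound in \eqref{primbound} with $k=1$ together with Proposition \ref{critk}; so the paper leans on the labelled-tree combinatorics and reuses the divergence mechanism that also drives Theorem \ref{di-crit-beh}. You instead establish the direct domination $\lambda_1^{di}(g,\alpha)\leq\lambda_1(g(1+\alpha))$ from the entrywise bound $\km^{di}_{r,s}(g,\alpha)\leq\km_{r,s}(g(1+\alpha))$ --- an inequality the paper records only as a tool for proving analyticity --- converted into an eigenvalue inequality via the nonnegativity of the Perron--Frobenius eigenvector, and then import Proposition \ref{lambdaCrit1} at the shifted coupling; your monotonicity argument is essentially the paper's. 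Your route is shorter and purely operator-theoretic, bypassing $W^{ev}$ entirely (though it inherits the divergence argument indirectly, since the proof of Proposition \ref{lambdaCrit1} uses exactly that mechanism for pure CDT), and it is worth noting that the two proofs rest on the same elementary estimate: your ratio bound $\tfrac12(1+\rho^r)^{\frac12}(1+\rho^s)^{\frac12}\leq\tfrac12(1+\rho)$ with $\rho=\tfrac{1-\alpha}{1+\alpha}$ is the transfer-matrix shadow of the inequality $(1+\alpha)^n+(1-\alpha)^n\leq2(1+\alpha)^{n-1}$ underlying \eqref{primbound}. Your computation in fact buys slightly more than the stated proposition: since $\tfrac12(1+\rho)=\tfrac{1}{1+\alpha}$, you obtain $\lambda_1^{di}(g,\alpha)\leq\tfrac{1}{1+\alpha}\,\lambda_1(g(1+\alpha))$ and hence the sharper bound $\bar\lambda_1^{di}(\alpha)\leq\tfrac{1}{1+\alpha}$, which by \eqref{lambda01} is an equality at both endpoints $\alpha=0$ and $\alpha=1$ (though, as $\tfrac{1}{1+\alpha}>\tfrac12$ for $\alpha<1$, it does not by itself decide between the two scenarios discussed after Theorem \ref{di-crit-beh}).
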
 

\begin{proof} 
By the variational principle for eigenvalues \cite{reed1978iv} and the fact that the matrix elements of $\km ^{di}(g,\alpha)$ are strictly increasing functions of $g >0$, one sees that $\lambda_1^{di}(g,\alpha)$ has a positive derivative with respect to $g$, and is therefore strictly increasing in $g$. Suppose the limit in question is greater than $1$ for some fixed value of $\alpha$. Then, there exists a $g_0 < \frac{1}{4(1+\alpha)}$ such 
that $\lambda_1^{di}(g_0,\alpha) =1$. A calculation similar to the one leading to \eqref{divcalc} then implies that 
$$
 W^{ev}(g,\alpha) =  1+\sum_{m=1}^\infty\big\langle v(g,\alpha)\big|\big(\tfrac{1}{2}\tm^{di}(g,\alpha)\big)^{m-1}\big|v(g,\alpha)\big\rangle \geq\frac{c'}{1-\lambda_1^{di}(g, \alpha)} - B(g, \alpha),\qquad g\in(0,g_0),
$$
where $c'>0$, while $B$ is bounded for $g$ close to $g_0$.
This shows that $W^{ev}(g,\alpha)$ diverges as $g$ approaches $g_0$. This, however, contradicts the upper bound in \eqref{primbound} with 
$k=1$, which implies that $W^{ev}(g,\alpha)$ is bounded for $g\leq g_0$.  
\end{proof} 

\noindent
Since $\tm^{di}(g,0) =2\tm(g)$ and $\tm^{di}(g,1)=\tm(2g)$, 
Proposition \ref{lambdaCrit1} implies that
\begin{equation}\label{lambda01}
 \bar\lambda_1^{di}(0)=1,\qquad 
 \bar\lambda_1^{di}(1)=\tfrac{1}{2}.
\end{equation}
Our main result on the critical behaviour of the dilute loop model is the following.
\begin{thm}\label{di-crit-beh}
For $\alpha$ real and sufficiently small, the critical coupling $g_c^{di}(\alpha)$ for the partition function $Z^{di}(g,\alpha)$ 
is determined by the equation
\begin{equation}\label{gcdi}
\lambda_1^{di}\big(g_c^{di}(\alpha),\alpha\big)=\tfrac{1}{2},
\end{equation}
and there exist $C_1(\alpha),C_2(\alpha)>0$ such that
\begin{equation}\label{di-crit}
\frac{C_1(\alpha)}{g_c^{di}(\alpha)-g} \leq Z^{di}(g,\alpha) \leq \frac{C_2(\alpha)}{g_c^{di}(\alpha)-g}
\end{equation}
for $g$ close to $g_c^{di}(\alpha)$.
 \end{thm}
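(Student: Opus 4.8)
The plan is to exploit the similarity $\tm^{di}(g,\alpha)=\dm^{\frac12}\big(2\dm^{\frac12}\km^{di}(g,\alpha)\dm^{\frac12}\big)\dm^{-\frac12}$, which identifies the spectrum of the non-self-adjoint transfer operator with twice that of the positive trace-class operator $\dm^{\frac12}\km^{di}(g,\alpha)\dm^{\frac12}$, whose top eigenvalue is $\lambda_1^{di}(g,\alpha)$. Setting $|a\rangle:=\dm^{\frac12}|v(g,\alpha)\rangle$ and $|b\rangle:=\dm^{-\frac12}|v(g,\alpha)\rangle$ — both in $l_2(\mathbb{N})$ since the entries of $v(g,\alpha)$ decay exponentially for $g<\tfrac{1}{4(1+\alpha)}$ — I would first rewrite
$$ Z^{di}(g,\alpha)-1 = \sum_{m=1}^\infty \big\langle a\big|\big(2\dm^{\frac12}\km^{di}(g,\alpha)\dm^{\frac12}\big)^{m-1}\big|b\big\rangle. $$
The Neumann series converges precisely when the spectral radius $2\lambda_1^{di}(g,\alpha)$ is below $1$, which singles out $\lambda_1^{di}(g,\alpha)=\tfrac12$ as the candidate criticality equation \eqref{gcdi}.

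Next I would show this equation has a solution strictly inside the analyticity region $g<\tfrac{1}{4(1+\alpha)}$ for small $\alpha$. Since $\lambda_1^{di}(\cdot,\alpha)$ is continuous, strictly increasing, and tends to $0$ as $g\downto0$, it suffices to produce an interior point where it exceeds $\tfrac12$. At $\alpha=0$ one has $\km^{di}(g,0)=\km(g)$, so $\lambda_1^{di}(g,0)=\lambda_1(g)\upto1$ as $g\upto\tfrac14$ by Proposition \ref{lambdaCrit1}; hence there is $g_0<\tfrac14$ with $\lambda_1(g_0)>\tfrac12$. The joint analyticity of $\dm^{\frac12}\km^{di}(g,\alpha)\dm^{\frac12}$ in $(\sqrt g,\alpha)$ with the Kato--Rellich theorem makes $\alpha\mapsto\lambda_1^{di}(g_0,\alpha)$ continuous at $0$, so $\lambda_1^{di}(g_0,\alpha)>\tfrac12$ persists for $\alpha$ small, while $g_0<\tfrac{1}{4(1+\alpha)}$. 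Monotonicity and the intermediate value theorem then yield a unique $g_c^{di}(\alpha)\in(0,g_0)$ solving \eqref{gcdi}. That $g_c^{di}(\alpha)$ is the critical coupling follows because the coefficients $Z_m^{di}(g,\alpha)\ge0$ are nondecreasing in $g$: for $g<g_c^{di}(\alpha)$ the Neumann series converges, whereas for $g_c^{di}(\alpha)<g<\tfrac{1}{4(1+\alpha)}$ the spectral gap forces $Z_m^{di}(g,\alpha)\sim\mathrm{const}\cdot(2\lambda_1^{di})^{m-1}\to\infty$, so the series diverges; Pringsheim's theorem then fixes the radius of convergence at $g_c^{di}(\alpha)$.

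For the two-sided bound \eqref{di-crit}, I would insert the spectral decomposition, with orthonormal eigenvectors $|w^{(n)}\rangle$ and eigenvalues $\lambda_n^{di}$, to obtain
$$ Z^{di}(g,\alpha)-1 = \sum_{n=1}^\infty \frac{\langle a|w^{(n)}\rangle\langle w^{(n)}|b\rangle}{1-2\lambda_n^{di}}. $$
The $n=1$ term has numerator $c(g,\alpha):=\langle a|w^{(1)}\rangle\langle w^{(1)}|b\rangle>0$, positive because $v$, $\dm^{\pm\frac12}$, and the Perron--Frobenius eigenvector $|w^{(1)}\rangle$ all have positive entries; being continuous, it is bounded above and below by positive constants near $g_c^{di}(\alpha)$. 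The remaining sum is bounded in modulus by $\|a\|\,\|b\|/(1-2\lambda_2^{di})$ by Cauchy--Schwarz, and stays finite near $g_c^{di}(\alpha)$ since non-degeneracy of the top eigenvalue gives $\lambda_2^{di}(g_c^{di}(\alpha),\alpha)<\lambda_1^{di}(g_c^{di}(\alpha),\alpha)=\tfrac12$. Finally, as $g_c^{di}(\alpha)$ is interior to the analyticity domain, $\lambda_1^{di}(\cdot,\alpha)$ is analytic there with strictly positive derivative (as in the preceding proposition), so $1-2\lambda_1^{di}(g,\alpha)=2\,\p_g\lambda_1^{di}(g_c^{di}(\alpha),\alpha)\,(g_c^{di}(\alpha)-g)+O\big((g_c^{di}(\alpha)-g)^2\big)$, i.e.\ $1-2\lambda_1^{di}\asymp g_c^{di}(\alpha)-g$. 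Combining the three estimates gives the claimed $C_1(\alpha)/(g_c^{di}(\alpha)-g)\le Z^{di}(g,\alpha)\le C_2(\alpha)/(g_c^{di}(\alpha)-g)$.

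The main obstacle, and the reason for restricting to small $\alpha$, is guaranteeing that $g_c^{di}(\alpha)$ sits strictly below $\tfrac{1}{4(1+\alpha)}$: only there is $\lambda_1^{di}$ analytic with nonvanishing derivative at the critical point, and this is exactly what turns the divergence into a \emph{simple pole} via the linear vanishing $1-2\lambda_1^{di}\asymp g_c^{di}(\alpha)-g$, in contrast to the square-root behaviour of the dense model. Controlling the subleading spectral sum uniformly near $g_c^{di}(\alpha)$, through the Perron--Frobenius gap, is the other point requiring care.
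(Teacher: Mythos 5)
Your proof is correct, and its overall skeleton coincides with the paper's: the criticality condition $2\lambda_1^{di}=1$, the small-$\alpha$ continuity argument (via \eqref{lambda01} and monotonicity in $g$) placing $g_c^{di}(\alpha)$ strictly inside the analyticity region $g<\tfrac{1}{4(1+\alpha)}$, the lower bound from the Perron--Frobenius overlap as in \eqref{divcalc}, and the linear vanishing $1-2\lambda_1^{di}\asymp g_c^{di}(\alpha)-g$ from analyticity and the positive $g$-derivative. The one genuine divergence is your upper bound: the paper invokes the trace inequality \eqref{di-Z-Zper} (a combinatorial disk-to-cylinder injection, in the spirit of Lemma \ref{lem:Z-Zper}) to get $Z^{di}\leq 1+\sum_n \frac{2\lambda_n^{di}}{1-2\lambda_n^{di}}$ and then separates the top eigenvalue, whereas you expand the matrix element $\sum_m\langle a|(2\dm^{\frac12}\km^{di}\dm^{\frac12})^{m-1}|b\rangle$ directly in the orthonormal eigenbasis and control the tail by Cauchy--Schwarz and the gap $1-2\lambda_2^{di}$. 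Your route is more self-contained analytically --- it makes the geometric inequality \eqref{di-Z-Zper} unnecessary (note it does rely on the positive-definiteness of $\dm^{\frac12}\km^{di}\dm^{\frac12}$ so that the eigenvectors are complete, which the paper establishes) --- and it has the further payoff that the asymptotics $Z_m^{di}\sim c\,(2\lambda_1^{di})^{m-1}$ give an explicit divergence of the series for $g_c^{di}(\alpha)<g<\tfrac{1}{4(1+\alpha)}$, making the identification of $g_c^{di}(\alpha)$ as the radius of convergence (via nonnegativity of coefficients and Pringsheim) more explicit than the paper's terse concluding remark. The paper's trace route, by contrast, recycles the already-proved cylinder bound and keeps the spectral bookkeeping to a single clean eigenvalue sum. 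Both arguments use the same essential inputs (non-degeneracy of the top eigenvalue, analyticity via Kato--Rellich, positivity of the Perron--Frobenius overlaps), so the difference is one of economy rather than substance.
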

 
\begin{proof}

Noting that $\lambda_1^{di}(g,\alpha)$ is a continuous function, it follows from the first identity in \eqref{lambda01} 
that $\bar\lambda_1^{di}(\alpha) >\frac{1}{2}$ for $\alpha$ sufficiently small. Since $\lambda_1 ^{di} (g,\alpha)$ is strictly increasing in $g$, the intermediate value theorem implies that the value of $g_c ^{di}(\alpha)$ satisfying \eqref{gcdi} is unique and strictly smaller than $\frac{1}{4(1+\alpha)}$, for $\alpha$ small. 
Thus, $\dm^{\frac 12} \km^{di}(g,\alpha)\dm^{\frac 12}$ is analytic in a neighbourhood of $g_c^{di}(\alpha)$ for 
$\alpha$ small and fixed. 

Mimicking the calculation leading to \eqref{divcalc} implies that, for $g\in(0,g_c^{di}(\alpha))$,
$$
 Z^{di}(g,\alpha) \geq \frac{C'_1(\alpha)}{1-2\lambda_1^{di}(g,\alpha)} - B(g,\alpha),
$$
where $C'_1(\alpha)>0$, while $B(g,\alpha)$ is bounded for $g$ close to $g_c^{di}(\alpha)$.  
As previously noted, $\lambda_1^{di}(g,\alpha)$ is an analytic function of $g$ in $(0, g_c ^{di} (\alpha))$. The lower bound in \eqref{di-crit} follows from the positivity of 
the $g$-derivative of $\lambda_1^{di}(g,\alpha)$. 
To obtain the upper bound, we apply \eqref{di-Z-Zper} to write 
$$
 Z^{di}(g,\alpha) 
 \leq 1+\sum_{m=1}^{\infty} \mathrm{tr}\big(\tm^{di}(g,\alpha)\big)^{m} 
  = 1+\sum_{m=1}^{\infty} 2^{m}\mathrm{tr}\big( \dm^{\frac{1}{2}}\km^{di}(g,\alpha)\dm^{\frac{1}{2}}\big)^{m}
  =1+\sum_{n=1}^\infty\frac{2\lambda_n^{di}(g,\alpha)}{1-2\lambda_n^{di}(g,\alpha)}
$$
for $g\in(0,g_c^{di}(\alpha))$,
where $\lambda_1^{di}(g,\alpha)>\lambda_2^{di}(g,\alpha)\geq \lambda_3^{di}(g,\alpha)\geq\dots$ denote the eigenvalues of 
$\dm^{\frac 12}\km^{di}(g,\alpha)\dm^{\frac 12}$. 
Separating out the first summand, the remaining terms yield a contribution that is bounded for $g$ close to $g_c^{di}(\alpha)$, 
and the upper bound follows as before. This establishes $g_c^{di}(\alpha)$ as the critical coupling and concludes the proof.
\end{proof}

\noindent
Theorem \ref{di-crit-beh} shows that the critical behaviour of $Z^{di}(g,\alpha)$ is different from that of the pure 
CDT model, the exponent characterising the singularity being shifted from $\frac 12$ to $-1$. For $\alpha=0$, a more detailed analysis of the corresponding height coupled trees with $k=2$ carried out in \cite{UnelInPrep1} reveals that $Z^{di}(g,0)$ has a simple pole at $g_c ^{di}(0)$ and that the Hausdorff dimension equals 1. For $\alpha>0$, such a detailed analysis is not yet available, but Theorem \ref{di-crit-beh} provides strong evidence that the singularity is still a simple pole and that the Hausdorff dimension of the disk triangulations underlying the dilute loop configurations equals 1:
$$
 d_H^{di}(\alpha\ll1)=1\quad(\mathrm{almost\ surely}).
$$

We have not determined whether the behaviour described by Theorem \ref{di-crit-beh} persists for larger values of $\alpha$. 
By the second identity in \eqref{lambda01}, we note that $g_c^{di}(1) =\frac 18$; 
in this case, the critical behaviour of $Z^{di}(g,\alpha)$ is the same as that of the pure CDT model. 
Writing the matrix elements of $\km^{di}(g,\alpha)$ in the form
$$
 \km_{r,s}^{di}(g,\alpha)=\frac{1}{2}\frac{(r+s-1)!}{(r-1)!(s-1)!}
  \Big[1+\Big(\frac{1-\alpha}{1+\alpha}\Big)^{\!r}\Big]^{\frac{1}{2}}\Big[1+\Big(\frac{1-\alpha}{1+\alpha}\Big)^{\!s}\Big]^{\frac{1}{2}}\big(g(1+\alpha)\big)^{\frac{r+s}{2}},
$$
we see that, for fixed $g(1+\alpha)\in(0,\frac{1}{4})$, they are decreasing functions of $\alpha\in[0,1]$. It follows that the limiting value $\bar\lambda_1^{di}(\alpha)$ is a decreasing 
function of $\alpha$. Hence, there are two possibilities: 
either 
(i) $\bar\lambda_1^{di}(\alpha)>\frac{1}{2}$ for all $\alpha\in[0,1)$, in which case the critical behaviour from Theorem \ref{di-crit-beh} 
extends to $\alpha\in[0,1)$, 
or 
(ii) there exists $\alpha_0<1$ such that $\bar\lambda_1^{di}(\alpha)>\frac{1}{2}$ for $\alpha\in[0,\alpha_0)$, but
$\bar\lambda_1^{di}(\alpha)=\frac{1}{2}$ for $\alpha\in(\alpha_0,1]$.
If case (ii) eventuates, the critical behaviour of $Z^{di}(g,\alpha)$ for $\alpha\in[\alpha_0,1)$, 
in particular at the transition point $\alpha=\alpha_0$, would be an interesting subject of study.

\section{Discussion} 
\label{sec:6}

We have introduced a dense and a dilute loop model on causal triangulations and studied their critical behaviour 
through examinations of their partition functions and of eigenvalue properties established using a transfer matrix formalism. 
Both models admit a description in terms of labelled planar trees, but only the dilute loop model experiences an effective height coupling and
(for some values of $\alpha$) a change in the Hausdorff dimension. Although the labelled tree correspondences were important in our analysis, it is conceivable that this feature is, in fact, a limitation when it comes to exhibiting
nontrivial couplings of matter to geometry.

A natural generalisation of the dilute loop model that does not readily admit a tree correspondence, is obtained by assigning a separate weight $\gamma$ to arcs 
whose endpoints are on the time-like edges of an elementary triangle, as in the second diagrams in Figure \ref{figure:dilute1}. 
With $t(L)$ denoting the number of such arcs in a loop configuration $L\in{\mathcal L}_m^{di}$, the fixed-height partition functions are given by
\begin{equation}\label{GenPart}
 Z_m^{di}(g,\alpha,\gamma) := \sum_{L\in{\mathcal L}_m^{di}} g^{|L|}\alpha^{s(L)}\gamma^{t(L)}.
\end{equation}
As in the discussion of $Z_m^{di}(g,\alpha)$ defined in (\ref{equ:diPf}),
it follows that, for given $\alpha,\gamma\in[0,1]$, there exists a critical coupling $g_c^{di}(\alpha,\gamma)$ 
such that $Z_m^{di}(g,\alpha,\gamma)$ is finite for $g<g_c^{di}(\alpha,\gamma)$ and divergent for $g>g_c^{di}(\alpha,\gamma)$.
Since $Z_m^{di}(g,\alpha,\gamma)\leq Z_m^{di}(g,\alpha,1)=Z_m^{di}(g,\alpha)$, we see that
$$ 
 g^{di}_{c}(\alpha, \gamma) \ge g^{di}_c(\alpha, 1)=g^{di}_c(\alpha). 
$$
Attributing a unity of length to each arc, one may view $s(L)+t(L)$ as the total length of $L$.
For $\gamma=\alpha$, the weight of a loop configuration can thus be expressed in terms of area and length, 
in the sense noted for the dense loop model at the end of Section \ref{subsec:DLM}. 
On the other hand, the tree correspondence of Proposition \ref{diluteloop-trees} is seen to introduce long-distance couplings on the trees, 
thereby limiting its usefulness in this case. Related to this, it is not possible to perform the summation over loop labellings and triangulations independently
in (\ref{GenPart}), as was done in the dense and dilute models. 
This intimate connection between matter and geometry is an interesting yet challenging aspect of this new model. A fully-packed version of the dilute loop model, where empty triangles are excluded, can be obtained as a limiting case of the generalised dilute loop model described above and will likewise exhibit long-distance tree couplings. 

It would also be interesting to study the dense and dilute loop models with loop fugacities different from $1$ and to
explore other types of loop models on causal triangulations. For example, one may adapt
the so-called fused Temperley-Lieb loop models \cite{ZJ07,PRT14,MDPR14} based on \cite{BR89,FR02}, and loop models whose underlying algebraic structures are given by Birman-Wenzl-Murakami algebras \cite{BW89,Mur87}.
We anticipate that one may describe and analyse such models, including partition functions and correlation functions more generally,
using an extension of the transfer matrix formalism employed in Section \ref{sec:5}.

\subsection*{Acknowledgements}

BD and M{\"U} acknowledge financial support from Villum Fonden via the QMATH Centre of Excellence (Grant no.~10059).
XP was supported by an Australian Postgraduate Award from the Australian Government.
JR was supported by the Australian Research Council under the Discovery Project scheme, project numbers DP160101376 and DP200102316.
XP and JR thank QMATH for their hospitality during a visit to the University of Copenhagen in February and March 2020.

\end{document}